\newtheorem{clm}{Claim}
\definecolor{mygreen}{rgb}{0,0.5,0}
\newenvironment{cenv}{\begin{list}{}{%
      \setlength{\labelwidth}{1.5em}%
      \setlength{\leftmargin}{\labelwidth}%
      \addtolength{\leftmargin}{\labelsep}%
      \setlength{\listparindent}{0em}%
      \setlength{\topsep}{10pt}%
      \setlength{\itemsep}{5pt}%
      \setlength{\parsep}{0pt}%
    }
  }{
  \end{list}
}
\newcommand{\bagsize}{|b|}
\newcounter{claimcounter}
\newenvironment{ClaimProof}[1][]{\noindent{%
\ifthenelse{\equal{#1}{}}{{\sl Proof.\ }}{{\sl #1.\ }}%
}}{\hspace*{1em}\nobreak\hfill$\dashv$\endtrivlist\addvspace{2ex plus
0.5ex minus0.1ex}}
\newcommand{\algcaption}[1]{\item [] \textbf{#1}}
\newcounter{algorithm1counter}
\newenvironment{algorithm}{
	
	~\refstepcounter{algorithm1counter}
	\begin{cenv}
		\item[{\textbf{Algorithm \arabic{algorithm1counter}.}}]
	}{
	\end{cenv}
}
\definecolor{saeedgreen}{rgb}{0.0, 0.42, 0.24}
\newcommand{\stefan}[1]{\textit{\textcolor{red}{[stefan]: #1}}} %
\newcommand{\klaus}[1]{\textit{\textcolor{blue}{[klaus]: #1}}} %
\tikzset{markovstate/.style={shape=circle,draw=black,align=center,inner sep=0,minimum width=7mm}}
\tikzset{markovedge/.style={-latex}}
\tikzset{label/.style={}}
\tikzset{pkt/.style={draw,rectangle,fill=gray!15,minimum height=25pt,align=center,font=\footnotesize}}
\tikzset{pktlen/.style={above=-2pt,font=\scriptsize}}
\definecolor{mycolor}{RGB}{0,200,100}
\let\doendproof\endproof
\renewcommand\endproof{~\hfill\qed\doendproof}
\journalname{arXiv}
\begin{document}

\title{Walking Through Waypoints%
}

\author{Saeed Akhoondian Amiri        \and
        Klaus-Tycho Foerster					\and
				Stefan Schmid
}

\authorrunning{S.~Akhoondian Amiri, K.-T.~Foerster, and S.~Schmid} %

\institute{Saeed Akhoondian Amiri \at
               Max Planck Institute for Informatics (Saarland), Germany\\
              Tel.: +49-681-9325-1012\\
              Fax: +49-681-9325-199\\
							Orcid: \url{https://orcid.org/0000-0002-7402-2662}\\
              \email{samiri@mpi-inf.mpg.de}%
           \and
           Klaus-Tycho Foerster \and Stefan Schmid \at
              University of Vienna, Austria\\
							Tel.: +43-1-4277-786-10/20\\
              Fax: +43-1-4277-878620\\
							Orcid K.-T. Foerster: \url{http://orcid.org/0000-0003-4635-4480}\\
							Orcid Stefan Schmid: \url{http://orcid.org/0000-0002-7798-1711} \\
              \email{\{klaus-tycho.foerster,stefan\_schmid\}@univie.ac.at}\\
}

\date{~}

\maketitle

\begin{abstract}
We initiate the study of a fundamental combinatorial problem:
Given a capacitated graph $G=(V,E)$, find a shortest walk
(``route'')
from a source $s\in V$ to a destination $t\in V$
that includes all vertices specified by a set $\ensuremath{\mathscr{W}}\subseteq V$:
the \emph{waypoints}.
This waypoint routing problem finds immediate applications in the context
of modern networked distributed systems. 
Our main contribution is an exact polynomial-time
algorithm for graphs of bounded treewidth. 
We also show that if the number of waypoints is logarithmically bounded, exact polynomial-time algorithms exist
even for general graphs. 
Our two algorithms provide an almost complete characterization
of what can be solved exactly in polynomial-time:
we show that more general problems (e.g., on grid graphs of maximum degree 3, with slightly more waypoints) are computationally intractable.

\keywords{Routing \and Virtualized Distributed Systems \and Algorithms \and Disjoint Paths \and
Walks \and NP-hardness}

\vspace{0.3cm}
\noindent\textbf{Bibliographical note} A preliminary extended abstract appeared in the Proceedings of the 13th Latin American Theoretical Informatics Symposium\\ (LATIN~2018), Springer, 2018~\cite{DBLP:conf/latin/AmiriFS18}.
\end{abstract}

\clearpage
\section{Introduction}\label{sec:intro}

How fast can we find a shortest route, i.e., \emph{walk}, from a source $s$ to a destination
$t$ which visits a given set of waypoints in a graph,
but also respects edge capacities, limiting the number of traversals?
This fundamental combinatorial problem finds immediate applications,
e.g., in modern networked systems connecting distributed network functions
However, surprisingly 
little is known today about the fundamental algorithmic problems
underlying \emph{walks through waypoints}.

The problem features interesting connections to the disjoint
paths problem, however, in contrast to disjoint paths, we (1) consider
\emph{walks} (of unit resource \emph{demand} each time an edge is traversed) 
on \emph{capacitated} graphs rather than \emph{paths} on 
\emph{uncapaciatated} graphs, and we
(2) require that a set of specified vertices are visited. We refer to Figure~\ref{fig:introductory1} for two examples. 

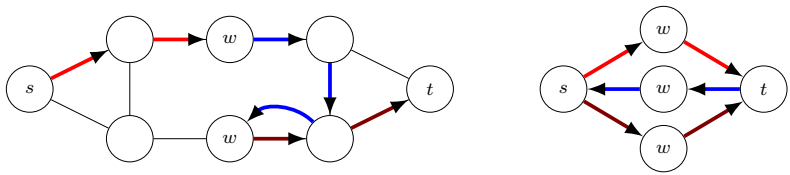
\begin{figure}[h]
	\vspace{-0.1cm}
\centering
 \resizebox{0.88\columnwidth}{!}{ 
	\centering
	\begin{tikzpicture}[auto]
	\node [markovstate] (s) at (0, 0.75) {$s$};
	\node [markovstate] (v11) at (1.5, 0) {};%
	\node [markovstate] (v12) at (3, 0) {$w$};
	\node [markovstate] (v13) at (4.5, 0) {};%
	\node [markovstate] (t) at (6, 0.75) {$t$};
	
	\node [markovstate] (v21) at (1.5, 1.5) {};%
	\node [markovstate] (v22) at (3, 1.5) {$w$};
	\node [markovstate] (v23) at (4.5, 1.5) {};%
	
	\node [markovstate] (s1) at (8, 0.75) {$s$};
	\node [markovstate] (t1) at (11, 0.75) {$t$};
	\node [markovstate] (wm) at (9.5, 0.75) {$w$};
	\node [markovstate] (wt) at (9.5, 1.65) {$w$};
	\node [markovstate] (wb) at (9.5, -0.15) {$w$};
	
	\draw [markovedge, ultra thick, draw=red!200] (s1) to (wt); 
	\draw [markovedge, ultra thick, draw=red!200] (wt) to (t1); 
	\draw [markovedge, ultra thick, draw=blue!200] (t1) to (wm);
	\draw [markovedge, ultra thick, draw=blue!200] (wm) to (s1);
	\draw [markovedge, ultra thick, draw=brown!200] (s1) to (wb);
	\draw [markovedge, ultra thick, draw=brown!200] (wb) to (t1);
	
	\draw (s) to (v11);
	\draw (v11) to (v12);
	\draw (v12) to (v13);
	\draw (v13) to (t);
	
	\draw (s) to (v21);
	\draw (v21) to (v22);
	\draw (v22) to (v23);
	\draw (v23) to (t);
	
	\draw (v11) to (v21);
	\draw (v13) to (v23);
	
	\draw [markovedge, ultra thick, draw=red!200] (s) to (v21);  %
	\draw [markovedge, ultra thick, draw=red!200] (v21) to (v22);
	
	\draw [markovedge, ultra thick, draw=blue!200] (v22) to (v23);
	\draw [markovedge, ultra thick, draw=blue!200] (v23) to (v13);
	\draw [markovedge, ultra thick, draw=blue!200] (v13) [out=135,in=45] to (v12);
	
	\draw [markovedge, ultra thick, draw=brown!200] (v12) to (v13);
	\draw [markovedge, ultra thick, draw=brown!200] (v13) to (t);

	\end{tikzpicture}
	}
	\caption{Two shortest walks and their decompositions into three paths each:
	In both graphs, we walk through all waypoints from $s$ to $t$ by first taking the
	\emph{red}, then the \emph{blue}, and lastly the 
	\emph{brown} path. The existence of a solution 
	in the left graph (e.g., a walk of length 7 in this case) relies on one edge incident to a waypoint having a capacity of at least two. In the right graph, it is sufficient that all edges have unit capacity.  Note that no $s-t$ path through all waypoints exists, 
	for either graph.}
	\label{fig:introductory1}
	\vspace{-0.6cm}
\end{figure}

\subsection{Model}\label{subsec:model}
The inputs to the \emph{Waypoint Routing Problem (WRP\xspace)}
are: 
\begin{enumerate}
	\item A connected, undirected,  
capacitated and weighted graph $G=(V,E,c,\omega)$
consisting of $n=|V|>1$ vertices, 
where $c\colon E\rightarrow \mathbb{N}$ represents edge capacities and $\omega \colon E\rightarrow \mathbb{N}$
represents the polynomial edge costs.
\item A source-destination vertex pair $s,t\subseteq V(G)$.
\item  A set of $k$ waypoints
$\ensuremath{\mathscr{W}}=(w_1,\ldots,w_k) \in V(G)^k$.
\end{enumerate}

We observe that the route (describing a walk) can be decomposed to %
simple paths between terminals and waypoints, and 
we ask: Is there a \emph{route} $R$,
which w.l.o.g.~can be decomposed into $k+1$ path segments
 $R=P_1\oplus\ldots\oplus P_{k+1}$, s.t.%
\begin{enumerate}
\item \emph{Capacities are respected:}
We assume unit demands and require $|\{i\mid e\in P_i \in R, i\in [1,k+1]\}| \le c(e)$ 
for every edge $e\in E$.
\item  \emph{Waypoints are visited:} 
Every element in $\ensuremath{\mathscr{W}}$ appears as an endpoint of exactly two distinct
  paths in route $R$ and $s$ is an endpoint of $P_1$ and
  $t$ is an endpoint of $P_{k+1}$.
	Note that the $k$ waypoints can be visited in any order.
\item \emph{Walks are short:} The length $\ell=|P_1|+\ldots+|P_{k+1}|$
of route $R$ w.r.t.~edge traversal cost $\omega$
is minimal.
\end{enumerate}

\noindent \textbf{Remark I: Reduction to Edge-Disjoint Problems.} Without loss of generality, 
it suffices to consider capacities $c\colon E\rightarrow \left\{1,2\right\}$, as shown in~\cite[Fig.~1]{DBLP:conf/soda/KleinM14}, also stated as Lemma~\ref{lemma:twice} in the Appendix:
 a walk $R$ which traverses an edge $e$ 
more than twice, cannot be a shortest one.

This also gives us a simple reduction of the capacitated problem
to an uncapacitated (i.e., unit capacity), \emph{edge-disjoint} problem variant, by using at most two parallel edges
per original edge. Depending
on the requirements, we will further subdivide these parallel edges into paths (while preserving distances
and graph properties such as treewidth, at least approximately).

\vspace{0.2cm}

\noindent \textbf{Remark II: Reduction to Cycles.} 
Without loss of generality and to simplify presentation, we focus on the 
special case $s=t$. 
In the Appendix (Lemma~\ref{lemma:s=t}), we show %
that 
we can modify instances with $s \neq t$ to instances
with $s=t$ in a distance-preserving manner and by increasing the treewidth by at most one.
Our NP-hardness results hold for $s=t$ as well.

\vspace{-0.2cm}

\subsection{Our Contributions}

We initiate the study of a fundamental 
waypoint routing problem.
We present polynomial-time algorithms to 
compute shortest routes (walks) through arbitrary waypoints
on graphs of bounded treewidth
and to compute shortest routes on general graphs through a
bounded (but not necessarily constant) number of 
waypoints.
We show that it is hard to significantly generalize 
these results both in terms of the family of graphs as well as in terms of  the number of waypoints, 
by deriving NP-hardness results: Our 
exact algorithms cover a good fraction of the problem space for which
polynomial-time solutions exist.
 More precisely, we present the following results:

\begin{enumerate}

\item \textbf{Shortest Walks on Arbitrary Waypoints:} 
While many vertex disjoint problem variants like Hamiltonian path, TSP, vertex disjoint paths, etc.~are 
often  polynomial-time solvable in graphs of bounded treewidth,
their edge-disjoint counterparts 
are sometimes NP-hard already on series-parallel graphs. 
As the Waypoint Routing Problem is an edge-based problem, 
one might 
expect that the problem is NP-hard already on bounded
treewidth graphs, similarly to the 
edge-disjoint paths problem. 

Yet, and perhaps surprisingly, we prove that a shortest walk through 
an arbitrary number of waypoints can be computed in polynomial time on graphs of bounded treewidth.
By employing a simple trick, we transform the capacitated problem variant to an 
uncapacitated edge-disjoint problem: the resulting uncapacitated graph has 
almost the same treewidth.
We then employ a well-known dynamic programming technique 
on a nice tree decomposition of the graph. 
However, since the walk is allowed to visit a 
vertex multiple times, we cannot rely on techniques which are known for vertex-disjoint paths.
Moreover, we cannot simply use the line graph of the original graph: the resulting graph does not preserve the bounded treewidth property.
Accordingly, 
we develop new methods and tools to deal with these issues.

\item \textbf{Shortest Walks on Arbitrary Graphs:} 
We show that a shortest route through a logarithmic number of waypoints can be computed in randomized time on general graphs, by reduction to the vertex-disjoint cycle problem in~\cite{thore-soda}. 
Similarly, we show that a route through a loglog number of waypoints can be computed in deterministic polynomial time on general graphs via~\cite{DBLP:conf/ipco/Kawarabayashi08}.

Again, we show that that this is almost tight, in the sense that 
the problem becomes NP-hard for any polynomial number of waypoints.
This reduction shows that the edge-disjoint paths problem is not harder than
the vertex-disjoint problem on general graphs, 
and the hardness result also implies that~\cite{thore-soda} is nearly asymptotically tight
in the number of waypoints. %
\vspace{-0.3cm}
\end{enumerate}

\subsection{A Practical Motivation}

The problem of finding routes through waypoints or specified
vertices is a natural and fundamental one. 
We sketch just one motivating application, arising
in the context of modern networked systems.  
Whereas traditional computer networks were designed with an ``end-to-end principle''~\cite{e2e}
philosophy in mind,
modern networks host an increasing number of ``middleboxes'' or network functions, 
distributed across the network, in order to improve performance (e.g., traffic optimizers, caches, etc.),
security (e.g., firewalls, intrusion detection systems), or scalability (e.g., network
address translation).  
Middleboxes are increasingly virtualized (a trend
known as network function virtualization~\cite{etsi}) and can be deployed 
flexibly at arbitrary locations in the network (not only at the edge) and at low costs.
This requires more flexible routing schemes, e.g., leveraging
software-defined network technology~\cite{road}, to route
the traffic through these (virtualized) middleboxes to compose more complex
network services (also known as service chains~\cite{ETSI1}). 
Thus, 
the resulting traffic routes, through capacitated network links, can be modeled as walks, and the problem of finding
shortest routes through 
such 
middleboxes (the waypoints) is an instance
of WRP\xspace.

\vspace{-0.1cm}

\subsection{Related Work}

The Waypoint Routing Problem is closely related 
to disjoint paths problems arising in many applications~\cite{korte1990paths,ogier1993distributed,srinivas2005finding}.
    Indeed, assuming unit edge capacities
and a single waypoint $w$, the problem of finding
a shortest walk $(s,w,t)$ can be seen as a problem of finding
two shortest (edge-)disjoint paths $(s,w)$ and $(w,t)$ with a common
vertex $w$. More generally, a shortest walk $(s,w_1,\ldots,w_k,t)$
in a unit-capacity graph can be seen as a sequence of $k+1$ disjoint paths.
The edge-disjoint and vertex-disjoint paths problem
(sometimes called \emph{min-sum} disjoint paths)
is a deep and intensively
studied 
combinatorial problem, also in the context of parallel algorithms~\cite{DBLP:journals/siamcomp/KhullerMV92,DBLP:journals/siamcomp/KhullerS91}. Today, we have a fairly good understanding
of the \emph{feasibility} of $k$-disjoint paths:
for constant $k$, polynomial-time algorithms for general graphs
have been found by Ohtsuki~\cite{ohtsuki1981two11}, %
Seymour~\cite{seymour1980disjoint12}, 
Shiloah~\cite{Shiloach:1980:PSU:322203.322207}, %
 and Thomassen~\cite{thomassen1980214}
 in the 1980s, and for general $k$ it is NP-hard~\cite{karp1975computational},
 already on series-parallel graphs~\cite{Nishizeki2001177},
i.e., graphs of treewidth at most two. 

However, 
the \emph{optimization} problem (i.e., finding \emph{shortest} paths)
continues to puzzle researchers, even for $k=2$. 
Until recently, despite the progress on polynomial-time algoritms
for special graph families like variants of planar graphs~\cite{DBLP:conf/csr/AmiriGKS14,verdiere2011shortest3,kobayashi2009shortest7}
or graphs of bounded treewidth~\cite{scheffler1994practical},
no subexponential time algorithm was known
even for the 2-disjoint paths problem on general graphs~\cite{eilam1998disjoint,minsum2,kobayashi2009shortest7}.
A recent breakthrough result shows that optimal solutions can at least be computed in 
\emph{randomized} polynomial time~\cite{thore-icalp}; however,
we still have no \emph{deterministic} polynomial-time algorithm.
Both existing feasible and optimal algorithms 
are often
impractical~\cite{thore-icalp,cygan2013planar,DBLP:journals/siamcomp/Schrijver94,seymour1980disjoint12},
and come with high time complexity.
We also note that there are results on the \emph{min-max} version of the
disjoint paths problem, which asks to minimize the length of the longest path.
The  \emph{min-max} problem is believed to be harder than \emph{min-sum}~\cite{itai1982complexity,kobayashi2009shortest7}.

The problem of finding shortest (edge- and vertex-disjoint)
paths and cycles through $k$ waypoints has
been studied in different contexts already. 
The cycle problem variant is also known as the 
\emph{$k$-Cycle Problem}
and has been a central topic of graph theory since the 1960’s~\cite{DBLP:journals/ijfcs/PerkovicR00}.
A cycle from $s$ through $k=1$
waypoints back to $t=s$ can be found efficiently by
breadth first search, for $k=2$ the problem 
corresponds to finding a integer
flow of size 2 between
two vertices, and for $k = 3$, it can still be solved in linear time~\cite{DBLP:journals/ipl/FleischnerW92,DBLP:journals/tcs/FortuneHW80};
a polynomial-time solution for any constant $k$ follows from the work 
on the disjoint paths problem~\cite{DBLP:journals/jct/RobertsonS95b}. 
The best known deterministic algorithm to compute \emph{feasible} 
(but not necessarily shortest) paths is by 
Kawarabayashi~\cite{DBLP:conf/ipco/Kawarabayashi08}: it finds a cycle for up to $k = O((\log \log n)^{1/10})$
waypoints in deterministic polynomial time. 
Bj\"orklund et al.~\cite{thore-soda} presented a randomized algorithm 
based on algebraic techniques
which finds a shortest simple cycle through a given set of $k$ vertices or edges in an 
$n$-vertex undirected graph in time $2^kn^{O(1)}$.
In contrast 
, we assume capacitated networks and do not enforce routes
to be edge or vertex disjoint, but rather consider (shortest) walks.

For 
capacitated graphs, 
researchers have explored the admission control 
variant:
the problem of admitting a maximal number of routing requests
such that capacity constraints are met. 
Chekuri et al.~\cite{chekuri2009note} 
and Ene et al.~\cite{ene_et_al:LIPIcs:2016:6037} presented
approximation algorithms for maximizing the benefit of admitting 
disjoint paths in bounded treewidth 
graphs with both edge
and vertex capacities. 
Even et al.~\cite{sss16moti,sirocco16path} and Rost et al.~\cite{rost2016service} 
initiated the study
of approximation algorithms for admitting a maximal number of
routing \emph{walks} through waypoints. 
In contrast, we 
focus on the optimal routing of a single walk.

In the context of capacitated graphs and single walks, the applicability of edge-disjoint paths algorithms to the so-called \emph{ordered} Waypoint Routing problem was studied in~\cite{ifip-waypoint,ccr-waypoint}, where the task is to find $k+1$ capacity-respecting paths $(s,w_1,), (w_1,w_2),\dots, (w_k,t)$.
An extension of their methods to the \emph{unordered} Waypoint Routing problem via testing all possible $k!$ orderings falls short of our results: For general graphs, only $O(1)$ waypoints can be considered, and for graphs of bounded treewidth, only $O(\log n)$ waypoints can be routed in polynomial time~\cite{ifip-waypoint,ccr-waypoint};
both results concern feasibility only, but not shortest routes.
We provide algorithms for $O(\log n)$ waypoints on general graphs and $O(n)$ waypoints in graphs of bounded treewidth, 
for shortest routes.

Lastly, for the case that all edges have a capacity of at least two and $s=t$, a direct connection of WRP\xspace to the subset traveling salesman problem (TSP) can be made~\cite{Foerster2017}. 
In the subset TSP, the task is to find a shortest closed walk that visits a given subset of the vertices~\cite{DBLP:conf/soda/KleinM14}.
As optimal routes for WRP\xspace and subset TSP traverse every edge at most twice, optimal solutions for both are identical when $\forall e\in E: c(e)\geq 2$. %
Hence, we can make use of the subset TSP results of Klein and Marx, with time of $(2^{O\left(\sqrt{k}\log k\right)}+\max_{\forall e \in E}{\omega(e)}) \cdot n^{O(1)}$ on planar graphs. 
Klein and Marx also point out applicability of the dynamic programming techniques of Bellman and of Held and Karp, %
 allowing subset TSP to be solved in time of $2^{k} \cdot n^{O(1)}$. %
For a PTAS on bounded genus graphs, we refer to~\cite{DBLP:journals/algorithmica/BorradaileDT14}.
We would like to note 
 that the technique for $s\neq t$ of
Remark~II
does not apply if all edges must have a capacity of at least two. Similarly, it is in general not clear how to directly transfer $s=t$ TSP results to the case of $s\neq t$, cf.~\cite{DBLP:conf/focs/SeboZ16}. 
Notwithstanding, as WRP\xspace also allows for unit capacity edges (to which subset TSP is oblivious), WRP\xspace is a generalization of subset TSP. %

\subsection{Paper Organization}
In Section~\ref{sec:algorithm} we present our results for bounded treewidth graphs
and Section~\ref{sec:polylog} considers general graphs.
We derive distinct NP-hardness results in Section~\ref{sec:nphard}
and conclude in Section~\ref{sec:future}.
In order to improve presentation,
some technical contents are deferred to the Appendix.

\section{Walking Through Waypoints on Bounded Treewidth}\label{sec:algorithm} %

The complexity of the Waypoint Routing Problem on bounded treewidth
graphs is of particular interest: while vertex-disjoint paths and cycles 
problems are often polynomial-time
solvable on bounded treewidth graphs (e.g., vertex disjoint paths~\cite{DBLP:journals/jct/RobertsonS95b}, 
vertex coloring,
Hamiltonian cycles~\cite{DBLP:journals/dam/ArnborgP89}, Traveling Salesman~\cite{DBLP:journals/iandc/BodlaenderCKN15}, see also~\cite{DBLP:journals/actaC/Bodlaender93,fellows2007complexity})
many edge-disjoint problem variants are NP-hard
(e.g., edge-disjoint paths~\cite{Nishizeki2001177}, edge coloring~\cite{DBLP:journals/ipl/Marx04}).
Moreover, the usual \emph{line graph} construction approaches
to transform vertex-disjoint to edge-disjoint
problems are not applicable as such transformations do not preserve bounded treewidth. %

Against this backdrop, we show that indeed shortest routes through arbitrary 
waypoints can be computed
in polynomial-time for bounded treewidth graphs.

\begin{theorem}\label{thm:everything}
The Waypoint Routing Problem can be solved in time of $n^{O(\texttt{tw}^2)}$, where $\texttt{tw}$ denotes the treewidth
of the graph $G=(V,E)$ with $|V|=n$ vertices. %
\end{theorem}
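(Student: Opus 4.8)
The plan is to prove Theorem~\ref{thm:everything} via dynamic programming over a nice tree decomposition of the graph. First I would invoke Remark~I to reduce the capacitated instance to a unit-capacity, edge-disjoint instance by replacing each edge of capacity two with two parallel edges (subdivided to avoid multigraph issues), which increases the treewidth by only a constant; and I would invoke Remark~II to assume $s=t$, so that the sought object is a closed walk through all waypoints that uses each edge at most once. By the decomposition into path segments in the model, a solution is equivalently a collection of edge-disjoint paths whose union, contracted appropriately, forms a single connected walk visiting $s$ and every waypoint. I would then fix a nice tree decomposition $\mathcal{T}=(T,X)$ of width $\tw$, with the standard leaf/forget/introduce/join node types, and process it bottom-up.

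The heart of the argument is choosing the right DP state at each bag $b$. Since we are routing a walk (edges used at most once, vertices possibly multiple times), the natural state records how the partial solution restricted to $G[T_b]$ interacts with the separator $b$. Concretely, the state should capture, for the edges crossing the separator, a \emph{partial matching / pairing} structure on $b$ together with the parity/degree information of each separator vertex: each vertex in $b$ can be an endpoint of several path-fragments that will be closed up higher in the tree, so I would record which vertices of $b$ are currently ``open'' endpoints and a pairing that tells which open endpoints belong to the same fragment. In addition the state must track which waypoints inside $G[T_b]$ have already been serviced. Because the number of fragment-endpoints at a separator of size $\tw+1$ is bounded and a pairing on these endpoints has $2^{O(\tw^2)}$ possibilities (matchings/partitions on a bounded set), the number of distinct states per bag is $2^{O(\tw^2)}$, and for each state we store the minimum total edge cost achieving it. Transitions at introduce, forget, and join nodes update the pairing and waypoint-coverage bits consistently, merging fragments at join nodes by composing the two partial pairings and rejecting any merge that prematurely closes a cycle not covering all waypoints. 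At the root we read off the minimum-cost state in which all fragments have been joined into one closed walk covering $s$ and all of $\wps$.

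The main obstacle I anticipate is connectivity: ensuring that the edge-disjoint fragments assembled by the DP form a \emph{single} connected closed walk through all waypoints, rather than several disjoint closed subwalks, while the tree decomposition only gives local separator information. This is exactly the difficulty that makes the state a \emph{pairing} on the separator vertices rather than a simple count; I expect to need a careful invariant that each state corresponds to a set of open arcs whose endpoints lie in $b$, and that no arc is closed into a cycle unless it is the final global closure. Tracking which waypoints have been consumed by each fragment (or at least globally) must be woven into this pairing so that a fragment is only allowed to close when it has absorbed all waypoints, which is where the $2^{O(\tw^2)}$ blow-up in the state space comes from. I would argue correctness by the standard induction on $T$: each state at bag $b$ is realizable by some edge-disjoint fragment structure on $G[T_b]$ with the recorded cost, and conversely every such structure projects to a valid state; the transitions preserve this bijective correspondence.

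For the running time, the number of states per bag is $2^{O(\tw^2)}$, the number of bags is $O(\tw \cdot n)$, and each transition (especially the join, which ranges over pairs of child states) costs $2^{O(\tw^2)}$, giving a total of $2^{O(\tw^2)} n^{O(1)}$; since $2^{O(\tw^2)} \le n^{O(\tw^2)}$, this yields the claimed bound $n^{O(\tw^2)}$. Finally I would note that a nice tree decomposition of width $O(\tw)$ can be computed within the same time budget (e.g.\ via Bodlaender's algorithm or an approximation followed by the standard transformation to nice form), so the preprocessing does not dominate. If an explicit decomposition were not available I would fold its computation into the overall complexity as indicated in the theorem statement.
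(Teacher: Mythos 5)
Your overall skeleton (unification to unit capacities, reduction to $s=t$, bottom-up dynamic programming over a nice tree decomposition with states recording how partial walk-fragments meet the bag) matches the paper, but two essential ingredients are missing, and one of them makes the proposal fail as written. First, you assert that the number of fragment endpoints in a bag $b$ is bounded and hence that a pairing state has $2^{O(\texttt{tw}^2)}$ possibilities, but for \emph{edge}-disjoint walks this is not automatic: a single bag vertex can a priori be the endpoint of many fragments (up to its degree), and an optimal closed walk can cross a separator $\Omega(n)$ times; the decomposition into $k+1$ path segments from the model does not help since $k$ can be $\Omega(n)$. The paper needs the Eulerian Separation Lemma (Lemma~\ref{lem:eulerianseparator-new}) precisely to argue that the optimal solution can be rearranged so that at most $|b|$ fragments per side, with endpoints in the bag, suffice; without such a lemma your state space is not bounded. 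Relatedly, you cannot afford to ``track which waypoints inside $G[T_b]$ have already been serviced'' (that is an exponential number of subsets); the correct invariant, as in Definition~\ref{def:valid}, is simply to require that \emph{all} waypoints in $G[T_b]$ are covered, since they are unreachable from above except through $b$.

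Second, and more seriously, your join transition --- ``merging fragments at join nodes by composing the two partial pairings'' in $2^{O(\texttt{tw}^2)}$ time --- does not work: whether two child fragments can be spliced into one parent walk depends on whether they share bag vertices in their \emph{interiors} and on which bag edges each side consumes, neither of which is visible in an endpoint pairing. For instance, if one child's signature is a single closed walk recorded as $(u,u)$ and the other's is a single closed walk recorded as $(v,v)$ with $u\neq v$, they can be merged into one closed walk exactly when the two walks share some bag vertex, which the pairings cannot reveal; accepting the merge blindly is unsound (disconnected ``solutions''), refusing it is incomplete. This is why the paper's signature additionally carries the used bag-edge set $E_b\subseteq E(G[b])$, and why its join (Lemma~\ref{lem:join}) works at the level of the stored sub-solutions' actual edge sets, brute-forcing assignments of edges to endpoint pairs and checking connectivity and parity --- an $n^{O(\texttt{tw})}$-type step that is exactly the source of the XP bound $n^{O(\texttt{tw}^2)}$. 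The paper even states explicitly (in the conclusion) that simply concatenating child walks at join nodes does not yield all valid parent signatures. Your claimed total running time $2^{O(\texttt{tw}^2)}\cdot n^{O(1)}$ would be an FPT result strictly stronger than the theorem, which should itself have been a warning sign; to repair the argument you must either enrich the state far beyond endpoint pairings or adopt the paper's edge-set-based join, giving up the FPT bound.
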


In other words, the Waypoint Routing Problem is in the complexity class XP~\cite{DBLP:books/sp/CyganFKLMPPS15,DBLP:series/mcs/DowneyF99} w.r.t.~treewidth. We obtain:

\begin{corollary}\label{corr:how-far-can-we-go}
The Waypoint Routing Problem can be solved in polynomial time for graphs of bounded treewidth $\texttt{tw} \in O(1)$.
\end{corollary}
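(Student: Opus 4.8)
The corollary is an immediate consequence of Theorem~\ref{thm:everything}, which I am allowed to assume here. Once an algorithm running in time $n^{O(\tw^2)}$ is in hand, substituting a constant bound $\tw\in O(1)$ collapses the exponent to a constant, so $n^{O(\tw^2)}=n^{O(1)}$ is polynomial in $n$. Thus the whole content of the corollary sits inside Theorem~\ref{thm:everything}, and the plan I describe below is really a plan for establishing that theorem; the corollary then follows by this one substitution.

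First I would discharge the reductions supplied by Remarks~I and~II. Using that a shortest walk traverses each edge at most twice, I replace every edge by at most two parallel unit-capacity copies, turning the capacitated instance into an edge-disjoint one whose treewidth changes by at most a constant (subdividing the parallel copies to keep the graph simple), and I convert $s\neq t$ into $s=t$ at the cost of $+1$ in treewidth. With $s=t$ and unit capacities, a route is exactly a closed walk that uses each edge at most once, so its edge set is a connected, even-degree (Eulerian) subgraph $H$ whose nonzero-degree vertices contain $s$ and all of $\wps$. Hence I reformulate \WRP as: find a minimum-$\omega$-weight connected subgraph $H$ in which every vertex has even degree and every element of $\{s\}\cup\wps$ has degree at least two. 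Since a shortest walk uses at most $2|E|$ traversals, all relevant vertex degrees and segment counts are bounded by a polynomial in $n$.

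Next I would compute a nice tree decomposition (for constant treewidth this is possible in polynomial, indeed linear, time) and run dynamic programming over its leaf, introduce, forget, and join nodes. For a bag $\bagname$ and the subgraph $G[T_\bagname]$ induced by the bags in the subtree rooted at $\bagname$, a partial solution is a family of edge-disjoint segments of the eventual walk, and I summarise it by a \emph{pattern} that records, for each unordered pair $u,v\in\bagname$, the number of segments of the partial solution joining $u$ to $v$ through $G[T_\bagname]$, together with the current degree parity of each bag vertex. Because each count ranges over $\{0,1,\dots,\mathrm{poly}(n)\}$ and there are $O(\bagsize^2)=O(\tw^2)$ pairs, the number of patterns per bag is $n^{O(\tw^2)}$, and for each pattern I keep the minimum weight realising it. Introduce and forget nodes update parities, with a forget enforcing even degree (and degree at least two for a forgotten waypoint or terminal) so that finished vertices are genuinely Eulerian; join nodes add the two children's counts pairwise. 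At the root I read off the minimum weight over all patterns certifying a single connected even subgraph that contains every required vertex.

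The main obstacle, as always for connectivity-type DPs, is the bookkeeping at the \emph{join} and \emph{forget} nodes: I must combine the children's segment-count patterns so that the merged object is realisable as one \emph{single} connected closed walk rather than a disjoint union of Eulerian components, and so that each waypoint ends up incident to exactly the segments that make it visited. Designing the state rich enough to track cross-separator connectivity together with its multiplicities — which is precisely what forces the $n^{O(\tw^2)}$ bound, rather than an FPT bound, once the walk is allowed to reuse vertices — while still proving that no valid global solution is lost or double-counted, is where the real care lies. With Theorem~\ref{thm:everything} thus established, the corollary follows immediately.
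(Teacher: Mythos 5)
Your first paragraph is exactly the paper's (implicit) proof of this corollary: Theorem~\ref{thm:everything} gives an $n^{O(\texttt{tw}^2)}$ algorithm, and substituting $\texttt{tw}\in O(1)$ makes the exponent constant, hence the running time is polynomial. The long sketch of Theorem~\ref{thm:everything} that follows is not needed here (and, incidentally, uses a different dynamic-programming state---segment multiplicities and degree parities---than the paper's signatures, which bound the number of partial walks per side by $O(|b|)$ via Eulerian separation), but since you are permitted to assume the theorem, the corollary's proof stands as given.
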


\noindent\textbf{Overview.} 
We 
describe our algorithm in terms of a
\emph{nice tree decomposition}~\cite[Def.~13.1.4]{DBLP:books/sp/Kloks94} (\S\ref{subsec:tw-prelim}).
We 
transform the edge-capacitated problem into an 
edge-disjoint problem (with unit edge capacities~\S\ref{subsec:unified}),
leveraging a simple observation on 
the structure of waypoint walks and preserving distances. 
We show that this transformation changes the treewidth by
at most an additive constant.
We then define the separator signatures (\S\ref{subsec:bigproof}) and describe how to inductively generate valid signatures in a bottom 
up manner on the nice tree decomposition, applying the \emph{forget},
\emph{join} and \emph{introduce} operations~\cite[Def.~13.1.5]{DBLP:books/sp/Kloks94} (\S\ref{subsec:prntt}).

The correctness of our approach relies on a crucial observation 
on the underlying Eulerian properties of the Waypoint Routing Problem in Lemma~\ref{lem:eulerianseparator-new},
allowing us to bound the number of partial walks we need to consider
at the separator, see Figure~\ref{fig:Eulerian-walks} for an example.
Finally in \S\ref{sec:all-together}, we bring together the different bits and pieces, and sketch
how to dynamically program~\cite{bodlaender1988dynamic} the shortest waypoint walk on the rooted
separator tree.

\begin{figure}[thbp]
\resizebox{\columnwidth}{!}{ 
	\centering
	\begin{tikzpicture}[auto]
	\node [markovstate] (s1) at (0, 1) {$s_1$};
	\node [markovstate] (s2) at (0, 2.5) {$s_2$};
	\node [markovstate] (s3) at (0, 4) {$s_3$};
	
	\node [markovstate] (a1) at (-2, 1.75) {$a_1$};
	\node [markovstate] (a2) at (-2, 3.25) {$a_2$};
	
	\node [markovstate] (b1) at (2, 1) {$b_1$};
	\node [markovstate] (b2) at (2, 2.5) {$b_2$};
	\node [markovstate] (b3) at (2, 4) {$b_3$};
	
	\draw [dotted] (0,2.5) ellipse (0.8cm and 2.1cm);
	\node (S) at (0,0) {$s$};
	\node (A) at (-2,0) {$A$};
	\node (B) at (2,0) {$B$};

	\draw (s1) to (a1);
	\draw (a1) to (s2);
	\draw (s2) to (a2);
	\draw (a2) to (s3);
	
	\draw (s1) to (s2);
	
	\draw (b1) to (b2);
	\draw (b2) to (b3);
	
	\draw (s1) to (b1);
	\draw (s1) to (b2);
	\draw (s2) to (b2);
	\draw (s3) to (b3);
	
	\draw [markovedge, ultra thick, draw=brown!200] (s2)  to  node[left] {1}(s1);
	\draw [markovedge, ultra thick, draw=green!200] (s1) to node[below] {5}(a1);
	\draw [markovedge, ultra thick, draw=green!200] (a1) to node[below] {6} (s2);
	
	\draw [markovedge, ultra thick, draw=blue!200] (a2) to node[above] {11} (s2);
	\draw [markovedge, ultra thick, draw=blue!200] (s3) to node[above] {10} (a2);
	
	\draw [markovedge, ultra thick, draw=red!200] (b3) to node[above] {9} (s3);
	\draw [markovedge, ultra thick, draw=red!200] (b2) to  node[right] {8}(b3);
	\draw [markovedge, ultra thick, draw=red!200] (s2) to node[above] {7} (b2);
	
	\draw [markovedge, ultra thick, draw=brown!200] (s1) to node[below] {2}(b2);
	\draw [markovedge, ultra thick, draw=brown!200] (b2) to node[right] {3}(b1);
	\draw [markovedge, ultra thick, draw=brown!200] (b1) to node[below] {4}(s1);
	\node [markovstate] (1s1) at (8, 1) {$s_1$};
	\node [markovstate] (1s2) at (8, 2.5) {$s_2$};
	\node [markovstate] (1s3) at (8, 4) {$s_3$};
	
	\node [markovstate] (1a1) at (6, 1.75) {$a_1$};
	\node [markovstate] (1a2) at (6, 3.25) {$a_2$};
	
	\node [markovstate] (1b1) at (10, 1) {$b_1$};
	\node [markovstate] (1b2) at (10, 2.5) {$b_2$};
	\node [markovstate] (1b3) at (10, 4) {$b_3$};
	
	\draw [dotted] (8,2.5) ellipse (0.8cm and 2.1cm);
	\node (1S) at (8,0) {$s$};
	\node (1A) at (6,0) {$A$};
	\node (1B) at (10,0) {$B$};

	\draw [markovedge, ultra thick, draw=red!200] (1s1) to node[below] {1} (1b1);
	\draw [markovedge, ultra thick, draw=red!200] (1b1) to node[right] {2} (1b2);
	\draw [markovedge, ultra thick, draw=red!200] (1b2) to node[below] {3} (1s1);
	\draw [markovedge, ultra thick, draw=red!200] (1s1) to node[right] {4} (1s2);
	\draw [markovedge, ultra thick, draw=red!200] (1s2) to node[above] {5} (1b2);
	\draw [markovedge, ultra thick, draw=red!200] (1b2) to node[right] {6} (1b3);
	\draw [markovedge, ultra thick, draw=red!200] (1b3) to node[above] {7} (1s3);
	
	\draw [markovedge, ultra thick, draw=blue!200] (1s3) to node[above] {8} (1a2);
	\draw [markovedge, ultra thick, draw=blue!200] (1a2) to node[above] {9} (1s2);
	\draw [markovedge, ultra thick, draw=blue!200] (1s2) to node[below] {10} (1a1);
	\draw [markovedge, ultra thick, draw=blue!200] (1a1) to node[below] {11} (1s1);

	\end{tikzpicture}
}
\vspace{0.2cm}
	\caption{Two different methods to choose an Eulerian walk, where the numbers from $1$ to $11$ describe the order of the traversal. In the left walk, the separator $s$ is crossed $4$ times, but only $2$ times in the right walk. Furthermore, in the left walk, there are 2 walks each in $G[A]$ (\textcolor[rgb]{0,1,0}{green} and \textcolor[rgb]{0,0,1}{blue}) and $G[B]$ (\textcolor[rgb]{0.54,0.27,0.07}{brown} and \textcolor[rgb]{1,0,0}{red}), respectively. In the right walk, there is only 1 walk for $G[A]$ (\textcolor[rgb]{0,0,1}{blue}) and 1 walk for $G[B]$ (\textcolor[rgb]{1,0,0}{red}).
	}
	\label{fig:Eulerian-walks}
\end{figure}

\subsection{Treewidth Preliminaries}\label{subsec:tw-prelim}

A \emph{tree decomposition} $\mathcal{T}=(T,X)$ of a graph $G$ consists of a bijection between a tree $T$ and a collection $X$, where every element of $X$ is a set of vertices of $G$ such that:
(1) each graph vertex is contained in at least one tree node (the \emph{bag}
or \emph{separator}),
(2) the tree nodes containing a vertex $v$ form a connected subtree of $T$,
and (3) vertices are adjacent in the graph only when the corresponding subtrees have a node in common.

The \emph{width} of $\mathcal{T}=(T,X)$ is the size of the largest set in $X$ minus 1, with the \emph{treewidth} of $G$ being the minimum width of all possible tree decompositions. %

A \emph{nice tree decomposition} is a tree decomposition such that: 
(1) it is rooted at some vertex $r$,
(2) leaf nodes are mapped to bags of size 1,
and (3) inner nodes are of one of three types:
 \emph{forget} (a vertex leaves the bag in the parent node),
\emph{join}  (two bags defined over the same vertices
are merged) and \emph{introduce} (a vertex is added to the bag in the parent node).
The tree can be iteratively constructed by applying simple \emph{forget},
\emph{join} and \emph{introduce} types.

Let $b \in X$ be a bag of the decomposition corresponding to a
vertex $b \in V(T)$. We denote by $T_b$ the maximal subtree of $T$ which is
rooted at bag $b$. %
By $G[b]$ we denote the subgraph of
$G$ induced on the vertices in the bag $b$ and by $G[T_b]$ we denote
the subgraph of $G$ which is induced on vertices in all bags in $V(T_b)$. 
We will henceforth assume that a nice tree decomposition
$\mathcal{T}=(T,X)$ of 
$G$ is given, covering its computation in the final steps of the proof of Theorem~\ref{thm:everything}. %

\subsection{Unified Graphs}\label{subsec:unified}

We begin by transforming our graphs into graphs
of unit edge capacity, preserving distances and approximately
preserving treewidth.

\begin{definition}[Unification]
Let $G$ be an arbitrary, edge capacitated graph. 
The \emph{unified} graph $G^u$ of
$G$ is obtained from $G$ by the following operations on
each edge $e\in E(G)$: We replace $e$ by $c(e)$ parallel edges 
$e_1,\ldots,e_{c(e)}$, subdivide each resulting parallel edge
by creating vertices $v_i^e,i\in
[c(e)]$), and set the weight of each subdivided edge to $w(e)/2$
(i.e., the total weight is preserved). We set all
edge capacities in the unified graph to $1$.
Similarly, given the original problem instance $I$ 
of the Waypoint Routing Problem, the  \emph{unified instance} $I^u$
is obtained by replacing the graph
$G$ in $I$ with the graph $G^u$ in $I^u$, without changing the waypoints, the source 
and the destination.
\end{definition}

It follows directly from the construction that 
$I$ and $I^u$ are equivalent with regards to the contained
walks. Moreover, as we will see, the unification process approximately
preserves the treewidth. 
Thus, in the following, we will focus on $G^u$ and $I^u$
only, and implictly assume that $G$ and $I$ are unified.
Before we proceed further, however,
let us introduce some more definitions.
Using Remark~I,
w.l.o.g., we can focus on graphs where for all $e\in E$, 
$c(e)\leq 2$. The treewidth of $G$
and $G^u$ are preserved up to an additive constant.
\begin{lemma}\label{twapprox}
Let $G$ be an edge capacitated graph s.t.~each edge has capacity
at most $2$ and let $\texttt{tw}$ be the treewidth of $G$. Then 
$G^u$ has
treewidth at most $\texttt{tw}+1$.
\end{lemma}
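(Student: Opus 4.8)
The plan is to start from an optimal tree decomposition of $G$ and graft small gadget bags onto it for the newly created subdivision vertices, so that the width grows by at most one. Fix a tree decomposition $\mathcal{T}=(T,X)$ of $G$ of width $\texttt{tw}$, i.e., every bag has size at most $\texttt{tw}+1$. Recall that $G^u$ is obtained from $G$ by replacing each edge $e=(u,v)$ by at most $c(e)\le 2$ internally disjoint length-two paths through fresh degree-two midpoints $v_i^e$, $i\in[c(e)]$. In particular, every edge of $G^u$ has the form $\{u,v_i^e\}$ or $\{v_i^e,v\}$ for some original edge $e=(u,v)$.

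First I would use axiom~(3) of the tree decomposition: since $u$ and $v$ are adjacent in $G$, there is a bag $b_e\in X$ with $u,v\in b_e$; fix one such $b_e$ for every $e\in E(G)$. Then, for each subdivision vertex $v_i^e$, I introduce a brand-new tree node carrying the bag $\{u,v,v_i^e\}$ and attach it as a leaf adjacent to $b_e$. Doing this for all edges and all $i\in[c(e)]$ yields a tree $T^u$ with an augmented bag assignment $X^u$, giving a candidate decomposition $\mathcal{T}^u=(T^u,X^u)$ of $G^u$.

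Next I would verify that $\mathcal{T}^u$ is a valid tree decomposition of $G^u$. Coverage of vertices is immediate: original vertices keep their old bags, and each $v_i^e$ lies in its gadget bag. Coverage of edges holds because every edge $\{u,v_i^e\}$ and $\{v_i^e,v\}$ of $G^u$ is contained in the gadget bag $\{u,v,v_i^e\}$. For the connectivity axiom, each $v_i^e$ occurs in exactly one (leaf) bag, so its subtree is trivially connected; and for an original vertex $x$, the nodes of $T^u$ containing $x$ are its old connected subtree in $T$ together with the newly attached leaves that hang off bags $b_e$ already containing $x$, and attaching a leaf adjacent to a node of a connected subtree keeps it connected. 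Finally, for the width bound, the gadget bags have size $3$ and the inherited bags have size at most $\texttt{tw}+1$, so the width of $\mathcal{T}^u$ is at most $\max(\texttt{tw},2)$, which is at most $\texttt{tw}+1$ for every $\texttt{tw}\ge 1$; the case $\texttt{tw}=0$ is vacuous, since then $G$ is edgeless and $G^u=G$.

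The only place that needs genuine care is the connectivity axiom for the endpoints $u$ and $v$: this is exactly why each gadget leaf must be hung off a bag $b_e$ that already contains \emph{both} $u$ and $v$, rather than being placed arbitrarily. I expect this bookkeeping — together with confirming that reusing the same $b_e$ for the (at most two) subdivision vertices of a single edge causes no conflict, since each is placed in its own distinct leaf node — to be the main, though modest, obstacle. Everything else is a routine check of the tree-decomposition axioms.
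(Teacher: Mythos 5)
Your proof is correct and follows essentially the same construction as the paper: graft, for each subdivision vertex $v_i^e$, a new leaf bag onto a bag of the original decomposition that contains both endpoints of $e$. The only (inessential) difference is that the paper uses the larger bag $X_b\cup\{v_i^e\}$ instead of your minimal $\{u,v,v_i^e\}$, so your variant even gives the marginally tighter bound $\max(\texttt{tw},2)$; both satisfy the claimed $\texttt{tw}+1$.
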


\begin{proof}
Let $\mathcal{T}=(T,X)$ be an optimal tree decomposition of $G$ of width
$\texttt{tw}$. Let $X_b$ be an arbitrary bag of $\mathcal{T}$. We construct the tree
decomposition $\mathcal{T}^u$ of $G^u$ based on $\mathcal{T}$ as
follows. First set $\mathcal{T}^u=\mathcal{T}$. For every edge $e$ of
capacity $c$ in $G$, which
has its endpoints in $X_b$, we create $c$ bags $X_b^i$ ($i\in [c]$) and
set $X_b^i=X_b \cup v_i^e, i\in [c]$. Connect all bags $X_b^i$ to
$X_b$ (i.e., $X_b^i$s are new children of $X_b$). This creates a tree decomposition $\mathcal{T}^u$ of width $\leq \texttt{tw}+1$ of $G$. 
\end{proof}

\noindent\textbf{Leveraging Eulerian Properties} A key insight is that we can leverage the Eulerian properties implied by a waypoint route.
In particular, we show that the traversal of a single Eulerian walk 
(e.g., along an optimal solution of WRP\xspace) can be arranged s.t.~it does not traverse a specified separator too often, for which we will later choose the root of the nice tree decomposition.

\begin{lemma}[Eulerian Separation]\label{lem:eulerianseparator-new}
Let $G$ be an Eulerian graph. Let $s$ be an $(A,B)$ separator of order
$|s|$ in $G$. Then there is a set of $\ell \leq 2|s|$ pairwise edge- disjoint walks $\mathcal{W}=\{W_1,\ldots,W_\ell\}$ of $G$ such that
\begin{enumerate}[label=\textbf{\arabic*)}]

\item\label{euler1} For every $W\in \mathcal{W}$, $W$ has both of its endpoints in $A\cap B$.

\item\label{euler2} Every walk $W\in \mathcal{W}$ is entirely either in $G[A]$ (as  $\mathcal W_A$) or
  in $G[B]$ (as  $\mathcal W_B$).

\item\label{euler3} Let $\beta_A$ be the size of the set of vertices used by $\mathcal{W}_A$ as an endpoint in $s$. Then, $\mathcal{W}_A$ contains at most $\beta_A$ walks. Analogously, for $\beta_B$ and $\mathcal{W}_B$.

\item\label{euler4} There is an Eulerian walk $W$ of $G$ such that: $W\coloneqq
  W_1\oplus\ldots\oplus W_\ell$.

\end{enumerate}
\end{lemma}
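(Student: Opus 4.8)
The plan is to build the walks directly from edge-disjoint trail decompositions of the two sides and then stitch them into a single Eulerian walk, rather than starting from a global Eulerian traversal. First I would partition the edges of $G$ into an $A$-side set $E_A$ and a $B$-side set $E_B$: every edge incident to $A\setminus s$ goes into $E_A$, every edge incident to $B\setminus s$ into $E_B$, and the edges with both endpoints in $s$ I assign to $E_A$ arbitrarily. Because $s$ separates $A\setminus s$ from $B\setminus s$, no edge joins $A\setminus s$ to $B\setminus s$, so this is well defined and the subgraphs $G_A=(V,E_A)$ and $G_B=(V,E_B)$ live entirely in $G[A]$ and $G[B]$, respectively. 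The key parity observation is that, since $G$ is Eulerian, every vertex of $A\setminus s$ has even degree in $G_A$ and every vertex of $B\setminus s$ has even degree in $G_B$; hence all odd-degree vertices of $G_A$ and of $G_B$ lie in $s$, and for $v\in s$ we have $d_{G_A}(v)+d_{G_B}(v)=\deg_G(v)\equiv 0\pmod 2$.

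Second, I would decompose each side into trails (open or closed walks using each edge once). For $G_A$, every connected component with $2q>0$ odd-degree vertices splits into $q$ edge-disjoint open trails whose endpoints are exactly those odd vertices, all in $s$, while every Eulerian component (which must contain an $s$-vertex, since otherwise it would be a full component of the connected graph $G$ and force the degenerate case $s=\emptyset$) becomes one closed trail based at an $s$-vertex. Doing the same for $G_B$ yields candidate sets $\mathcal{W}_A,\mathcal{W}_B$ satisfying \ref{euler1} and \ref{euler2}. Property \ref{euler3} then follows component by component: a component of $G_A$ contributes $q$ trails but uses $2q\geq q$ distinct $s$-vertices as endpoints (an Eulerian component contributes one trail and one endpoint vertex), and since distinct components of $G_A$ use disjoint $s$-vertices, summing gives $|\mathcal{W}_A|\leq\beta_A$; symmetrically $|\mathcal{W}_B|\leq\beta_B$, whence $\ell\leq\beta_A+\beta_B\leq 2|s|$.

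Third, to obtain the single Eulerian walk required by \ref{euler4}, I would record the interface in an auxiliary multigraph $M$ on vertex set $s$, with one edge per trail joining its two endpoints (a loop for a closed trail). Using that in any trail decomposition the number of open trails ending at $v$ has the same parity as the degree of $v$, one checks $\deg_M(v)\equiv d_{G_A}(v)+d_{G_B}(v)\equiv 0\pmod 2$, so $M$ is Eulerian. If $M$ is connected, any Eulerian circuit of $M$ orders the trails as $W_1\oplus\cdots\oplus W_\ell$, and since the trails partition $E(G)$ this concatenation is exactly an Eulerian walk of $G$, giving \ref{euler4}. If $M$ is disconnected, I would repair connectivity by a local exchange: since $G$ is connected, two $M$-components must share a vertex $w$ of $G$, and such a $w$ cannot lie in $s$ (an $s$-vertex lies in a unique $M$-component), so $w\in A\setminus s$ or $w\in B\setminus s$ and the two trails through $w$ are on the same side; splitting both at an occurrence of $w$ and recombining the halves crosswise replaces two trails by two trails with the same multiset of $s$-endpoints, merges the two $M$-components, and preserves that $M$ is Eulerian. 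Each exchange reduces the number of components by one without changing $|\mathcal{W}_A|,|\mathcal{W}_B|$ or $\beta_A,\beta_B$, so after finitely many steps $M$ is connected and all four properties hold simultaneously.

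The main obstacle is precisely the tension between \ref{euler3} and \ref{euler4}: decomposing the two sides independently is what keeps the per-side walk counts small, but it gives no control over whether the pieces thread together into one closed walk. The crux is therefore the connectivity-repair argument, namely verifying that the crosswise recombination at a non-separator vertex genuinely merges two components of $M$ while leaving every vertex degree of $M$ (hence its Eulerian property) and every $s$-endpoint set unchanged, so that bounding the walk count and achieving a single Eulerian traversal can be reconciled.
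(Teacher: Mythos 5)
Your construction and the per-side counting are sound, and the auxiliary multigraph $M$ is a nice way to phrase the stitching step; the parity argument that $M$ has all even degrees is correct. However, the connectivity-repair step contains a genuine gap, and it is exactly at the point you yourself identify as the crux. Your claim that a shared vertex $w$ of two $M$-components ``cannot lie in $s$ because an $s$-vertex lies in a unique $M$-component'' does not follow: a vertex of $s$ is incident in $M$ only to the trails that \emph{end} there, but a trail may pass \emph{through} an $s$-vertex without ending there, and then that trail's $M$-edge can lie in a different component of $M$ than the trails ending at $w$. Concretely, take $s=\{u,v\}$, let the $A$-side consist of a single cycle through both $u$ and $v$ (one closed trail, which your construction may base at $u$), and let the $B$-side consist of a single cycle through $v$ only (one closed trail based at $v$). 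Then $M$ is two loops, one at $u$ and one at $v$, hence disconnected, and the only vertex shared by the two components' trails is $v\in s$; moreover the two trails meeting at $v$ lie on opposite sides, so your crosswise recombination at $v$ would produce walks mixing edges of $G[A]$ and $G[B]$ and destroy Property 2). So the repair as stated can neither be applied nor justified in this situation.

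The gap is fixable, but it needs an additional move of the kind the paper's proof uses: when the only meeting points between components are separator vertices, one must either re-base a closed trail at the shared $s$-vertex (as in the example above), or split an open trail that merely traverses the $s$-vertex into two open trails ending there. The latter increases the number of walks on that side by one, so one must argue that the endpoint count $\beta$ increases as well (i.e.\ that the vertex was not already an endpoint on that side), in order not to violate Property 3); the paper arranges its initial decomposition so that at each separator vertex at most one walk per side has an endpoint, precisely to make this splitting safe. Until your argument covers the case of components meeting only in $s$ (with trails possibly on opposite sides), the reconciliation of Properties 3) and 4) — which you correctly flag as the heart of the lemma — is not established.
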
 

\begin{proof}
$G[A]$ and $G[B]$ share the edges in $G[s]$. 
For this proof, we arbitrarily distribute the edges in $G[s]$, resulting in edge-disjoint $G_A',G_B'$, with $V(G_A')=A$ and $V(G_B')=B$.
As only the vertices in $s$ can have odd degree in $G_A'$, we can cover the edges of $G_A'$ with open walks, starting and ending in different vertices in $s$, and closed walks, not necessarily containing vertices of $s$.
If a vertex in $s$ is the start/end of two different walks, we concatenate these walks into one, repeating this process until for all vertices $v$ in $s$ holds: At most one walk starts or ends at $v$.
Next, we recursively join all closed walks into another walk, with which they share some vertex, cf.~\cite{1991X.1}.

As every vertex in $G_A'$ has a path to a vertex in $s$, we have covered all edges in $G_A'$ with $\alpha_{A'}$ (possibly closed) walks $\mathcal W_{A'}$, with $\alpha_{A'} \leq \beta_{A'} \leq |s|$.
However, all remaining closed walks end in the separator and are pairwise vertex-disjoint from all other (possibly, closed) walks.
We perform the same for $G_B'$ and obtain an analogous $\mathcal W_{B'}$ with $\alpha_{B'} \leq \beta_{B'} \leq |s|$ walks. %
Let us inspect the properties of the union of $\mathcal W_{A'}$ and $\mathcal W_{B'}$:
\begin{itemize} %
	\item All walks have their endpoints in $A \cap B$, respecting \textbf{\ref{euler1}}.
	\item All walks are entirely in $G[A']\subseteq G[A]$ or in $G[B']\subseteq G[B]$, respecting \textbf{\ref{euler2}}.
	\item At each $v \in s$, at most one walk each from $\mathcal W_{A'}$ and $\mathcal W_{B'}$ has its endpoint, respecting \textbf{\ref{euler3}}.
	\item There is no certificate yet that the walks respect \textbf{\ref{euler4}}.
\end{itemize}
As thus, we will now alter $\mathcal W_{A'}$ and $\mathcal W_{B'}$ such that their union respects \textbf{\ref{euler4}}. 

W.l.o.g., we start in any walk $W$ in $\mathcal{W}_{A'}$ to create a set of closed walks $\mathcal{W}_{C'}$. We traverse the walk $W$ from some endpoint vertex $v' \in s$ until we reach its other endpoint $v \in s$, possibly $v'=v$.
As there cannot be any other walks in $\mathcal{W}_{A'}$ with endpoints at $v$, and $v$ has even degree, there are three options:

First, if $W$ is a closed walk and $E(W)=E(G)$, we are done.
Second, if $W$ is a closed walk and $E(W) \neq E(G)$, it does not share any vertex with another walk in $\mathcal W_{A'}$. Hence, there must be a walk or walks in $\mathcal{W}_{B'}$ containing $v$. As $v$ has even degree, we have two options: There could be a closed walk $W'$ in $\mathcal{W}_{B'}$ containing $v$. Then, we set both endpoints of $W'$ to $v$, and as $W'$ does not share a vertex with any other walk in 
$\mathcal W_{B'}$, we are done.
Else, there is an open walk $W'$ which just traverses $v$, not having $v$ as an endpoint. 
We then split $W'$ into two open walks at $v$, increasing $\alpha_B$ and $\beta_B$ by one. %

Third, if $W$ is an open walk, there must be an open walk in $\mathcal{W}_{B'}$ whose start- or endpoint is $v$. We iteratively perform these traversals, switching between $\mathcal{W}_{A'}$ and $\mathcal{W}_{B'}$ eventually ending at $v$ again in a closed walk, with every walk having an endpoint in $v$ being traversed. 

We now repeat this closed walk generation, each time starting at some not yet covered walk. Call this set of closed walks $\mathcal{W}_{C'}$.
If $\mathcal{W}_{C'}$ contains only one closed walk, we found a traversal order of the walks in $\mathcal{W}_{A'}$ and $\mathcal{W}_{B'}$ that yields an Eulerian walk, respecting \textbf{\ref{euler4}}, finishing the argument for that case.
Else, as the graph is connected, there must be two edge-disjoint walks $W_{C_1}, W_{C_2} \in \mathcal{W}_{C'}$ that share a vertex $u$, w.l.o.g., in $G[A']$, with $u \in W_{A_1} \in \mathcal W_{A'}$, $W_{A_1}$ being part of  $W_{C_1}$, and $u \in W_{A_2} \in \mathcal W_{A'}$, $W_{A_2}$ being part of $W_{C_2}$. 
Let $W_{A_1}$ have the endpoints $s_1,r_1$ and $W_{A_2}$ have the endpoints $s_2,r_2$.
We now perform the following, not creating any new endpoints: We cut both walks $W_{A_1}, W_{A_2}$ at $u$ into two walks each. These four walks all have an endpoint in $u$, with the other four ones being $s_1,s_2,r_1,r_2$. We now turn them into two walks again: First, concatenate $s_1,u$ with $u,s_2$ as $W_{A_1}$, and then, concatenate $r_1,u$ with $u,r_2$ as $W_{A_2}$.
Now, we can obtain a single closed walk that traverses $W_{C_1}$ and $W_{C_2}$.
Recursively iterating this process, we obtain a single closed Eulerian walk, respecting \textbf{\ref{euler1}} to \textbf{\ref{euler4}}.
\end{proof}

\subsection{Signature Generation and Properties}\label{subsec:bigproof}

We next introduce the signatures we use to represent previously computed solutions
to subproblems implied by the separators in the (nice) tree decomposition.
For every possible signature, we will determine whether it represents a valid solution
for the subproblem, 
and if so, we store it along with an exemplary sub-solution of optimal weight.

In a nutshell, the signature describes endpoints of (partial) walks
on each side of the separator. These partial walks hence need to
be iteratively merged, forming signatures of longer walks through the waypoints.

\begin{definition}[Signature]\label{def:signature}
Consider a bag $b\in X$.
A signature $\sigma$ of $b$ ($\sigma_{b}$) is a pair, either containing
\begin{enumerate}
	\item   $1)$ an unordered tuple of pairs of vertices $s_i,r_i \in b$ and $2)$ a subset $E_{b}\subseteq E(G[b])$ with $\sigma_{b}=\left(\left(\left(s_1,r_1\right),\left(s_2,r_2\right),\dots,\left(s_\ell,r_\ell\right)\right),E_{b}\right)$ s.t.~$\ell \leq |b|$, or
	\item $1)$ $\emptyset$ and $2)$ $\emptyset$, with $\sigma_{b}=(\emptyset,\emptyset)$, also called an \emph{empty} signature $\sigma_{b,\emptyset}$.
\end{enumerate}

\end{definition}

Note that in the above definition we may have $s_i=r_i$ for some $i$.
We can now define a valid signature and a sub-solution, where we consider the vertex $s=t$ to be a waypoint.

\begin{definition}[Valid Signature, Sub-Solution]\label{def:valid}
Let $b\in X$ and let either $\sigma_{b}=\left(\left\{\left(s_1,r_1\right),\left(s_2,r_2\right),\dots,\left(s_\ell,r_\ell\right)\right\},E_{b}\right)$ or $\sigma_{b} = \sigma_{b,\emptyset}$ be a signature of $b$. 
$\sigma_{b} \neq \sigma_{{b},\emptyset}$ is called a \emph{valid signature} 
if there is a set of pairwise edge-disjoint walks
$ \mathcal{W}_{\sigma_b}=\{W_1,\ldots,W_\ell\}$ such that:
\begin{enumerate}
\item\label{valid-1} If $W_i$ is an open walk then it has both of its endpoints on  $(s_i,r_i)$, otherwise, $s_i=r_i$ and $s_i \in V(W_i)$. %
\item\label{valid-2} Let $\beta$ be the size of the set of endpoints used by $\sigma_{b}$. Then, it holds that $\beta \geq \ell$. %
\item\label{valid-3} For every waypoint $w \in V(T_{b})$ it holds that $w$ is contained in some walk $W_j, 1\leq j \leq \ell$. %
\item\label{valid-4} Every (pairwise edge-disjoint) walk $W_j \in \mathcal{W}_{\sigma_b}$ only uses vertices from $V(T_{b})$ and only edges from $E(T_{b}) \setminus \overline{E}_{b}$, with $\overline{E}_{b} = E({b}) \setminus E_{b}$. %
 \item\label{valid-5} Every edge $e \in E_{b}$ is used by a walk in $\mathcal{W}_{\sigma_b}$. %
\item\label{valid-6} Among all such sets of $\ell$ walks, $\mathcal{W}_{\sigma_b}$ has minimum total weight. 
\end{enumerate}
Additionally, if for a signature $\sigma_{b} \neq \sigma_{{b},\emptyset}$ there is such
a set $\mathcal{W}_{\sigma_{b}}$ (possibly abbreviated by $\mathcal W_{b}$ if clear from the context), we
say that $\mathcal{W}_{\sigma_{b}}$ is a \emph{valid sub-solution} in $G[T_{b}]$.
For some waypoint contained in $G[T_{b}]$, we call a signature $\sigma_{{b},\emptyset}$ 
 \emph{valid}, if there is one walk $W$ associated with it, s.t.~$W$ traverses all waypoints in $G[T_{b}]$, does not traverse any  vertex in $V({b})$, and among all such walks in $G[T_{b}]$ has minimum weight.
If $G[T_{b}]$ does not contain any waypoints, we call the empty signature $\sigma_{{b},\emptyset}$ valid, if there is no walk associated with it.
\end{definition}

\begin{lemma}[Number of different signatures]\label{lem:sigscount}
There are $2^{O(|b|^2)}$ different signatures for $b \in X$. 
\end{lemma}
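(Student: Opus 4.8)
The plan is to count signatures by multiplying the number of choices for the two components of a signature independently, using the structural bounds already baked into Definition~\ref{def:signature}. Recall that a non-empty signature $\sigma_b$ consists of (1) an unordered tuple of at most $\ell \leq |b|$ pairs $(s_i,r_i)$ with all $s_i,r_i \in b$, and (2) a subset $E_b \subseteq E(G[b])$; the empty signature contributes only a single additional possibility. So the strategy is simply to bound each factor by something of the form $2^{O(|b|^2)}$ and conclude.

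First I would bound the number of choices for the edge subset $E_b$. Since $G[b]$ is the subgraph induced on the $|b|$ vertices in the bag, it has at most $\binom{|b|}{2} = O(|b|^2)$ edges (in the unified graph there may be parallel subdivided edges, but after unification each pair of bag-vertices carries $O(1)$ edges, so this is still $O(|b|^2)$). Hence the number of subsets $E_b \subseteq E(G[b])$ is at most $2^{O(|b|^2)}$. This factor already matches the target bound, so the remaining work is to show the endpoint-tuple component does not blow this up asymptotically.

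Next I would bound the number of endpoint tuples. Each pair $(s_i,r_i)$ is an unordered pair of vertices drawn from the $|b|$ vertices of $b$ (allowing $s_i = r_i$), giving at most $O(|b|^2)$ choices per pair. A signature has at most $\ell \leq |b|$ such pairs, and the collection is unordered; a crude over-count treats it as a sequence of length at most $|b|$, giving at most $\sum_{\ell=0}^{|b|} \bigl(O(|b|^2)\bigr)^{\ell} \leq (|b|+1)\cdot \bigl(O(|b|^2)\bigr)^{|b|} = 2^{O(|b|\log|b|)}$ possibilities. Since $|b|\log|b| \in O(|b|^2)$, this factor is also $2^{O(|b|^2)}$.

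Finally I would combine the two factors: the total number of non-empty signatures is at most $2^{O(|b|^2)} \cdot 2^{O(|b|^2)} = 2^{O(|b|^2)}$, and adding one for the empty signature $\sigma_{b,\emptyset}$ does not change the asymptotics. This yields the claimed $2^{O(|b|^2)}$ bound. I do not expect a genuine obstacle here, as this is essentially a product-of-choices counting argument; the only point requiring a moment of care is confirming that the unordered, variable-length nature of the endpoint tuple (together with the possibility $s_i = r_i$) is still dominated by the edge-subset factor, which it is since $|b|\log|b|$ is absorbed by $O(|b|^2)$.
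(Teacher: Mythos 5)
Your proposal is correct and follows essentially the same argument as the paper: bound the edge-subset component by $2^{O(|b|^2)}$, bound the endpoint-tuple component by a term of the form $2^{O(|b|\log|b|)}$ (the paper counts it as $(4|b|)^{4|b|}$ via padding with empty endpoints, you count ordered sequences of pairs), absorb the latter into the former, and add one for the empty signature. The only cosmetic difference is the bookkeeping for the unordered tuple, which does not affect the bound.
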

\begin{proof}
There are at most $2^{O(|b|^2)}$
ways to choose edges from a graph of order $|b|$. 
To distribute up to $2|b|$ endpoints of walks over $|b|$ vertices, we can temporarily add $2|b|$ empty endpoints, which means that there are at most $(4|b|)^{4|b|}$ possibilities.
The empty signature only adds one further possibility.
\end{proof}

\subsection{Programming the Nice Tree Decomposition}\label{subsec:prntt}

The nice tree decomposition directly gives us a  constructive way to dynamically program WRP\xspace
in a bottom-up manner. We first cover leaf nodes 
in Lemma \ref{lem:leaf}, and then work our way up via forget (Lemma~\ref{lem:forget}), introduce (Lemma~\ref{lem:introduce}), and join (Lemma~\ref{lem:join}) nodes, until eventually the root node is reached.
Along the way, we inductively generate all valid signatures at every node.

\begin{lemma}[Leaf nodes]\label{lem:leaf}
Let $b$ be a leaf node in the nice tree decomposition $\mathcal{T}=(T,X)$. Then, in time $O(1)$ we can find all the valid signatures of $b$.
\end{lemma}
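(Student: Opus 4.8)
The plan is to exploit that, by the definition of a nice tree decomposition, a leaf node $b$ has $|b|=1$; write $b=\{v\}$. Since $b$ is a leaf, $T_b$ is the single-node subtree, so $G[T_b]=G[b]$ is the graph on the single vertex $v$ with no edges. First I would use this to argue that only a constant number of candidate signatures exist. Indeed, in Definition~\ref{def:signature} one must pick $E_b\subseteq E(G[b])=\emptyset$, forcing $E_b=\emptyset$, together with a tuple of at most $\ell\le|b|=1$ vertex pairs drawn from $b=\{v\}$, so the only available pair is $(v,v)$. Hence the only candidates are the single nonempty signature $\sigma_b=\big(((v,v)),\emptyset\big)$ and the empty signature $\sigma_{b,\emptyset}=(\emptyset,\emptyset)$ (an empty tuple of pairs collapses to the latter). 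This is consistent with Lemma~\ref{lem:sigscount}, which already bounds the count by $2^{O(|b|^2)}=O(1)$.

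Next I would decide the validity of each of these $O(1)$ candidates against Definition~\ref{def:valid}, in $O(1)$ time each. For $\sigma_b=\big(((v,v)),\emptyset\big)$ I would exhibit the trivial length-$0$ closed walk $W_1=(v)$ as a witness: it is a closed walk with $s_1=r_1=v\in V(W_1)$, so~\ref{valid-1} holds; it uses $\beta=1\ge\ell=1$ endpoint, so~\ref{valid-2} holds; it uses no edges, so~\ref{valid-4} and~\ref{valid-5} hold vacuously; and it has weight $0$, which is trivially minimal, so~\ref{valid-6} holds. For the waypoint condition~\ref{valid-3}, note $V(T_b)=\{v\}$: if $v$ is a waypoint (recall $s=t$ is treated as a waypoint) then $v\in V(W_1)$ covers it, and otherwise~\ref{valid-3} is vacuous. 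Either way $\sigma_b$ is valid, and I would store it with the sub-solution $\{W_1\}$ of weight $0$.

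For the empty signature $\sigma_{b,\emptyset}$ I would split on whether $v$ is a waypoint. If $G[T_b]$ contains no waypoint, then by the final clause of Definition~\ref{def:valid} the empty signature is valid with no associated walk. If $v$ is a waypoint, validity would require a single walk that traverses all waypoints in $G[T_b]$ yet traverses no vertex of $b=\{v\}=V(T_b)$; since the only available vertex is $v$ itself, no walk can simultaneously visit and avoid $v$, so $\sigma_{b,\emptyset}$ is invalid in this case.

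Finally I would combine these observations: there are only two candidate signatures, and each validity test --- constructing the weight-$0$ witness walk, respectively performing the single-vertex waypoint check --- is a constant-size decision. Hence all valid signatures of the leaf $b$, together with their optimal sub-solutions, are computed in $O(1)$ time, proving the lemma. I do not expect a genuine obstacle here; the only point requiring care is matching the definitions precisely, in particular admitting the length-$0$ walk $(v)$ as a legitimate closed walk witnessing $\sigma_b$, and reading the empty-signature clause correctly so that a waypoint located at $v$ rules out $\sigma_{b,\emptyset}$.
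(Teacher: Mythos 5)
Your proposal is correct and follows essentially the same approach as the paper's proof: enumerate the only two candidate signatures at a single-vertex leaf, observe that $\big(((v,v)),\emptyset\big)$ is always valid (witnessed by the trivial walk at $v$), and that the empty signature is valid precisely when $v$ is not a waypoint. Your write-up merely spells out the verification of Definition~\ref{def:valid} in more detail than the paper does.
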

\begin{proof}%
We simply enumerate all possible valid signatures.
As a leaf node only contains one vertex $v$ from the graph, all possible edge sets in the signatures are empty, and we have two options for the pairs: First, none, second, $\left(\left(v,v\right)\right)$. The second option is always valid, but the first (empty) one is only valid when $v$ is not a waypoint.
\end{proof}

\begin{lemma}[Forget nodes]\label{lem:forget}
Let $b$ be a forget node in the nice tree decomposition $\mathcal{T}=(T,X)$, with one child $q = \text{child}(b)$, where we have all valid signatures for $q$.
Then, in time $2^{O(\bagsize^2)}$ %
 we can find all the valid signatures of $b$. 
\end{lemma}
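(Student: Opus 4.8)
The plan is to exploit the fact that a forget node leaves the underlying graph untouched. Writing $v$ for the forgotten vertex, so that $b = q \setminus \{v\}$, the bags containing $v$ form a connected subtree whose topmost node is $q$; hence $V(T_b) = V(T_q)$, so $G[T_b] = G[T_q]$, and moreover \emph{every} edge incident to $v$ already lies in $G[T_q]$. Consequently $v$ becomes permanently interior: no edge at $v$ can ever be introduced by an ancestor of $b$, and in any valid sub-solution for $b$ the vertex $v$ has even degree (it is never an endpoint). The algorithm therefore iterates over all $2^{O(|q|^2)} = 2^{O(|b|^2)}$ precomputed valid signatures $\sigma_q$ of $q$ (Lemma~\ref{lem:sigscount}) and, for each, emits the signatures of $b$ obtained by ``internalizing'' $v$, retaining for every resulting $\sigma_b$ a sub-solution of minimum weight.

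For a fixed valid $\sigma_q = (\{(s_1,r_1),\dots,(s_\ell,r_\ell)\}, E_q)$ I would first set $E_b := E_q \cap E(G[b])$, discarding each tracked edge incident to $v$, since such an edge joins $v$ to another bag vertex and is now an interior edge of $G[T_b]$ rather than a separator edge. I would then internalize $v$ by re-routing the walk-segments meeting $v$: the endpoints located at $v$ are matched in pairs and their walks concatenated through $v$, while a walk already closed at $v$ (i.e.\ $s_i=r_i=v$) is spliced into another segment meeting $v$, exactly as in the proof of Lemma~\ref{lem:eulerianseparator-new}. Enumerating the possible pairings and splicings, each outcome with an even number of $v$-endpoints that reduces to at most $|b|$ walks satisfying the endpoint bound~\ref{valid-2} yields a candidate signature $\sigma_b$ (with $v$ deleted from the endpoint list); outcomes that cannot be so reduced are discarded. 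If some closed walk ends up covering all waypoints of $G[T_b]$ while meeting no vertex of $b$, the result is the empty signature $\sigma_{b,\emptyset}$ of Definition~\ref{def:signature}. Among all pairs (signature $\sigma_q$, routing) that map to the same $\sigma_b$, I keep the sub-solution of least total weight, securing validity condition~\ref{valid-6}.

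Correctness splits into soundness and completeness. Soundness is immediate: concatenating and splicing edge-disjoint walks at the now-interior vertex $v$ again yields edge-disjoint walks in $G[T_b]=G[T_q]$ with endpoints in $b$ (condition~\ref{valid-1}); waypoint coverage (condition~\ref{valid-3}) is preserved, and in particular, if $v$ itself is a waypoint it was already covered by $\mathcal{W}_{\sigma_q}$ because $v \in V(T_q)$, hence stays covered; the edge conditions~\ref{valid-4} and~\ref{valid-5} hold once the $v$-edges are retagged as interior. For completeness I start from any valid $\sigma_b$ with its minimum-weight sub-solution $\mathcal{W}_b$ and \emph{split} it at $v$: cutting one passage of a walk through $v$ exposes $v$ as an endpoint and produces walks with all endpoints in $q$, which, after re-adding the now-tracked $v$-incident separator edges into $E_q$, form a sub-solution of a signature $\sigma_q$ of $q$ whose forward image is exactly $\sigma_b$ and whose weight equals that of $\mathcal{W}_b$. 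The delicate point, and the main obstacle, is the bookkeeping on the number of walks and endpoints: splitting raises the walk count, and I must guarantee that the reconstructed $\sigma_q$ still obeys $\ell \le |q|$ and the endpoint bound $\beta \ge \ell$ of condition~\ref{valid-2}. This is precisely where the Eulerian structure enters: by Lemma~\ref{lem:eulerianseparator-new} I may, without altering the used edge set or the weight, re-choose a decomposition in which at most $|q|$ walks meet the separator $q$, so that $\sigma_q$ is a legitimate signature; matching the parity of the endpoints at $v$ to the (necessarily even) degree of $v$ in the used subgraph is handled along the way.

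For the running time, there are $2^{O(|b|^2)}$ signatures $\sigma_q$ to process, and for each the cost is dominated by enumerating the pairings and splicings of the at most $2\ell = O(|b|)$ walk-segments incident to $v$, of which there are at most $(O(|b|))^{O(|b|)} = 2^{O(|b|\log|b|)}$, together with the polynomial-time edge-set update and weight comparison. The total is $2^{O(|b|^2)}$, as claimed.
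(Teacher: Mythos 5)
Your algorithm is sound and meets the stated time bound, but it takes a noticeably more roundabout route than the paper, and one step of your justification does not hold as written. The paper's proof rests on a simpler observation: since at $b$ no walk may end at the forgotten vertex $v$, an optimal sub-solution $\mathcal{W}_b$ for a signature $\sigma_b=(\mathcal{P},E_b)$ is \emph{verbatim} a candidate sub-solution for the child signature $\sigma_q=(\mathcal{P},E_b\cup E')$, where $E'$ is the set of $v$-incident edges of $E(G[q])$ it uses -- same walks, same endpoint pairs, at most $|b|\le|q|$ walks. Hence the forget step is pure bookkeeping: a signature $(\mathcal{P},E_b)$ is valid for $b$ iff $(\mathcal{P},E_b\cup E')$ is valid for $q$ for some subset $E'$ of $v$-incident bag edges, and one takes the minimum weight over $E'$. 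No concatenation or splicing at $v$ is performed at a forget node; walks through $v$ were already assembled at nodes below, where $v$ was still in the bag. Your ``internalization'' of $v$ (enumerating pairings of $v$-endpoints of the child's stored witness) is not wrong -- every candidate it emits satisfies conditions~\ref{valid-1}--\ref{valid-5}, and it stays within $2^{O(|b|^2)}$ -- but it is redundant: any signature reachable by a nontrivial pairing is also reachable by the no-op branch from another valid child signature, which is exactly how the optimum is certified.

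The genuine flaw is in your completeness argument. You split the optimal $\mathcal{W}_b$ at $v$ to expose $v$ as an endpoint and then invoke Lemma~\ref{lem:eulerianseparator-new} to re-choose a decomposition with at most $|q|$ walks meeting the separator. That lemma applies to \emph{Eulerian} graphs, whereas the union of the walks of a partial sub-solution at an internal node is in general not Eulerian: the walk endpoints in $b$ have odd degree in the used edge multiset. So the step ``by Lemma~\ref{lem:eulerianseparator-new} I may re-choose a decomposition\dots'' fails as justified. Fortunately the splitting is unnecessary: since $\mathcal{W}_b$ has no endpoint at $v$ and at most $|b|\le|q|$ walks, it already witnesses (conditions~\ref{valid-1}--\ref{valid-5}) the child signature $(\mathcal{P},E_b\cup E')$, which is therefore valid with optimum weight at most $w(\mathcal{W}_b)$; conversely any witness of that child signature is a candidate for $\sigma_b$, so the weights coincide, and the no-op branch of your enumeration regenerates $\sigma_b$ at optimal weight. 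With that replacement your proof goes through; as a side remark, your explicit treatment of a child signature whose single closed walk at $v$ avoids all of $b$ (yielding the empty signature $\sigma_{b,\emptyset}$) is a case the paper's own proof passes over silently.
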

\begin{proof}
Let $v \in G$ be the vertex s.t.~$V(b) \cup \left\{v\right\} = V(q)$.
We create all valid signatures for $b$ as follows:
First, if the empty signature is valid for $q$, it is also valid for $b$.
Second, for $\sigma_b = (\mathcal{P},E_b)$, with some $(s,r)$ pairs $\mathcal{P}$ to be a valid signature for $b$, there needs to be a valid signature $\sigma_q = (\mathcal{P},E_b \cup E')$, where $E'$ is a subset of all edges incident to $v$ from $E(G[q])$.
For correctness, consider the following: In all valid (non-empty) signatures of $b$, all walks need to have their endpoints in $V(b)$. As $v$ can only be reached from vertices in $V(G) \setminus V(T_b)$ via vertices in $V(b)$, any walk between vertices of $V(G) \setminus V(T_b)$ and $v$ must pass $V(b)$, i.e., the corresponding signature of $q$ can be represented as a signature of $b$, with possible additional edges.
Checking every of the $2^{O(\bagsize^2)}$ valid signatures of the child as described can be done in time linear in the signature size $O(|b|^2)$, with $2^{O(\bagsize^2)} \cdot O(|b|^2) \in 2^{O(\bagsize^2)}$.
\end{proof}

\begin{lemma}[Introduce nodes]\label{lem:introduce}
Let $b$ be an introduce node in the nice tree decomposition $\mathcal{T}=(T,X)$, with one child $q = \text{child}(b)$, where we have all valid signatures for $q$.
Then
 we can find all the valid signatures of~$b$ in time $|b|^{O(|b|^2)}$.
\end{lemma}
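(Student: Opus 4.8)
The plan is to build every valid signature of the introduce node $b$ by extending the already-computed valid signatures of its child $q$, where $V(b)=V(q)\cup\{v\}$ for the newly introduced vertex $v$. The structural fact driving everything is a standard property of nice tree decompositions: since $v$ is introduced at $b$, it lies in no bag of $T_q$, so inside $G[T_b]$ the vertex $v$ is adjacent only to vertices of $b$, and every edge of $G[T_b]$ incident to $v$ belongs to $E(G[b])$. Hence $G[T_b]$ arises from $G[T_q]$ by adding $v$ together with a set of edges from $v$ into $q$, the recorded bag-edge set splits as $E_b = E_q \cup E_v$ with $E_v \subseteq \{vu : u\in N(v)\cap b\}$, and the only waypoint that can newly appear in $V(T_b)$ is $v$ itself.

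First I would iterate over each valid child signature $\sigma_q=(\mathcal{P}_q,E_q)$ (there are $2^{O(|b|^2)}$ of them by Lemma~\ref{lem:sigscount}, as $|q|\le|b|$) together with its stored minimum-weight sub-solution $\mathcal{W}_q$. For each subset $E_v$ of edges incident to $v$ (at most $2^{|b|}$ choices) and for each way of routing these $\le|b|$ edge-slots through $v$ — pairing some into pass-throughs, leaving others as endpoints at $v$, and identifying the $q$-ends either with endpoints of walks in $\mathcal{W}_q$ or as fresh endpoints (at most $|b|^{O(|b|)}$ combinations) — I form a candidate collection of walks in $G[T_b]$, read off its signature $\sigma_b=(\mathcal{P}_b,E_q\cup E_v)$, discard it unless it is a legal signature with $\ell\le|b|$ walks, $\beta\ge\ell$ distinct endpoints, and $v$ covered whenever $v$ is a waypoint, and record it with weight $w(\mathcal{W}_q)+w(E_v)$. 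I would add the two edge-free cases explicitly: carrying $\sigma_q$ over unchanged (legal exactly when $v$ is not a waypoint) and appending the trivial closed walk $(v,v)$ to cover $v$ when it is a waypoint; empty signatures are handled analogously. Keeping, for each distinct signature, the minimum recorded weight, and noting $2^{O(|b|^2)}\cdot 2^{|b|}\cdot|b|^{O(|b|)}\cdot\mathrm{poly}(|b|)=|b|^{O(|b|^2)}$, gives the stated running time.

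For soundness every recorded signature comes with an explicit edge-disjoint realisation, so it is valid and its weight is attainable; if a reconnection produces more than $|b|$ walks I re-decompose before recording, which only merges walks at shared vertices of $b$ and never raises the weight. For completeness and weight-optimality, take a valid $\sigma_b$ with minimum-weight sub-solution $\mathcal{W}_b$ of weight $W^\ast$ and delete $v$. The subgraph $H_q\subseteq G[T_q]$ that remains has all its odd-degree vertices in $q$ (only $q$-vertices are incident to $v$), and every component of $H_q$ still meets $q$: a component avoiding $q$ could not connect to $v$, whose neighbours lie in $q$, so it would be a component of $\mathcal{W}_b$ with no vertex in $b$, contradicting that all walk endpoints lie in $b$. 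By the walk decomposition guaranteed by Lemma~\ref{lem:eulerianseparator-new}, $H_q$ therefore decomposes into at most $|q|$ edge-disjoint walks with endpoints in $q$ and $\beta\ge\ell$, i.e.\ into a valid child signature $\sigma_q$ of bag-edge set $E_q=E_b\setminus E_v$. Minimality of $\mathcal{W}_b$ forces the stored realisation of $\sigma_q$ to have weight exactly $W^\ast-w(E_v)$, since a lighter realisation with the same pairs $\mathcal{P}_q$ and the same $E_q$, reattached to $v$ through $E_v$ identically, would reproduce $\sigma_b$ at strictly smaller weight. Thus $\sigma_q$ is among the enumerated child signatures, the deletion of $v$ is inverted by one of the enumerated reconnections, and the algorithm records $\sigma_b$ at weight $W^\ast$.

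The main obstacle I anticipate is exactly this compatibility across $v$: to honour $\ell\le|q|$ the re-decomposition of $H_q$ may merge walk-pieces at $q$-vertices, and one must verify that inverting such merges through the star of edges $E_v$ yields a legal signature of $b$ with the correct endpoint multiplicities and without silently exceeding the walk bound on the $b$-side. The degree and connectivity properties of $H_q$ established above, together with the Eulerian decomposition of Lemma~\ref{lem:eulerianseparator-new}, are precisely what make this endpoint bookkeeping consistent on both sides of $v$.
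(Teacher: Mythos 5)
Your proposal is correct and follows essentially the same route as the paper: enumerate the child's valid signatures together with subsets of the bag edges incident to the introduced vertex $v$ and all ways of routing them through $v$, and prove completeness by deleting $v$ from a minimum-weight parent sub-solution and re-decomposing the remainder into a valid child signature, yielding the same $|b|^{O(|b|^2)}$ bound. The only notable difference is cosmetic: where you cite Lemma~\ref{lem:eulerianseparator-new} (whose statement strictly assumes an Eulerian graph) for the re-decomposition of $H_q$, the paper carries out the corresponding walk surgery explicitly --- splitting walks at $v$, re-concatenating pieces at shared endpoints, and handling single-vertex pieces for waypoint coverage --- which is the same argument you re-derive via the degree and connectivity properties of $H_q$.
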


We will exploit the following property for the proof of Lemma~\ref{lem:introduce}.
\begin{property}\label{pro:edges}
Let $b$ be an \emph{introduce} node, where $q$ is a child of $b$, with $V(b) = V(q) \cup \left\{v\right\}$. Then 
$v$ is not adjacent to any vertex in $V(T_b) \setminus V(b)$.
\end{property}
\begin{proof}
Let $v \in G$ be the vertex s.t.~$V(q) \cup \left\{v\right\} = V(b)$.
Recall that $v$ can only have neighbors in $V(b)$ from $V(T_b)$ (Property~\ref{pro:edges}).
From these valid signatures of $q$, we will then create all valid signatures of $b$.
Thus, we can pick some valid signature $\sigma_b$ (possibly empty), with complete knowledge of a valid sub-solution $\mathcal{W}_b$. 
By showing how to obtain a valid signature $\sigma_q$ with valid-subsolution $\mathcal{W}_q$ from $\sigma_b$ and $\mathcal{W}_b$, the process can then be reversed -- as we know all such $\sigma_q$ and $\mathcal{W}_q$.
From $\sigma_b$ and $\mathcal{W}_b$, we now iteratively build a signature $\sigma'_q$ and $\mathcal{W}_q'$, which in the end will represent $\sigma_q$ and $\mathcal{W}_q$.
A first thought is that by removing all walks from $\mathcal{W}_b$ and $\sigma_b$ that contain $v$, we initialize $\sigma'_q$ and $\mathcal{W}_q'$. 
$\sigma'_q$ is already a signature for $q$,
as it cannot contain $v$ as an endpoint any more, it contains at most $|q|$ walks, but it might not be valid yet.
However, $\sigma'_q$ and $\mathcal{W}_q'$ already satisfy
Conditions~\ref{valid-1},~\ref{valid-2},~\ref{valid-4} from
Definition~\ref{def:valid}.
	I.e, all endpoints of walks are still in $V(q)$, there are at most as many walks as the size of the set of vertices in $V(q)$ used as endpoints, the walks only use the edges they are allowed to.

It is left to satisfy Conditions~\ref{valid-3} (all waypoints are covered),~\ref{valid-5} (all edges specified in the signature are used) and~\ref{valid-6} (optimality) from Definition~\ref{def:valid}.
For Condition~\ref{valid-5}, we can assume that later we adjust $\sigma'_q$ appropriately.
We cover Condition~\ref{valid-3} next:

If $v$ is a waypoint, we do not need to cover it in $\sigma'_q$.
However, the walks $\mathcal{W}_q'$ might not cover all further waypoints.
Consider all the walks $\mathcal{W}_v$ in $\mathcal{W}_b$ not contained in $\mathcal{W}_q'$, minus possibly the walk just consisting of $v$: 
Together with $\mathcal{W}_q'$, they satisfy Condition~\ref{valid-3}, 
but they can use edges incident to $v$ %
and the vertex $v$.
Thus, let $\mathcal{W}_v'$ be the set of walks obtained from $\mathcal{W}_v$ after removing all edges incident to $v$ and the vertex $v$, possibly splitting up every walk into multiple walks.
For every walk (possibly consisting of just a single vertex) in $\mathcal{W}_v'$ holds: 
its endpoints are in $V(q)$.

We now add the walks $\mathcal{W}_v'$ to $\mathcal{W}_q'$, one by one, not violating Conditions~\ref{valid-1},~\ref{valid-2},~\ref{valid-4} (and implicitly,~\ref{valid-5}). After this process, we will also have visited all waypoints, satisfying Condition~\ref{valid-3}.
We start with any walk $W \in \mathcal{W}_v'$:
If $W$ just consists of one vertex $u$, there can be two cases:
First, if $u$ is not an endpoint of a walk $W' \in \mathcal{W}_q'$, then we add $W$ as a walk to $\mathcal{W}_q'$, increasing $\beta_q'$ and $\ell_q'$ from Condition~\ref{valid-2} by one, still holding $|q| \geq \beta_q' \geq \ell_q'$.
Second, if $u$ is an endpoint of a walk $W' \in \mathcal{W}_q'$, we concatenate $W$ and $W'$, keeping $\beta_q'$ and $\ell_q'$ identical.
The case of $W$ being a walk from $u\in V(q)$ to $y \in V(q)$ is similar:
First, if both $u,y$ are not endpoints of walks from $\mathcal{W}_q'$, we add $W$ to $\mathcal{W}_q'$.
Second, if both $u,y$ are endpoints of walks from $\mathcal{W}_q'$, we use $W$ to concatenate them. If the result is a cycle, we pick w.l.o.g.~$u$ as both new endpoints.
Third, if w.l.o.g.~$u$ is an endpoint of a walk $W \in \mathcal{W}_q'$, but $y$ is not, we concatenate $W$, $W'$.

We now obtained $\mathcal{W}_q'$ (and implicitly, $\sigma_q'$) that satisfy Conditions~\ref{valid-1} to~\ref{valid-5} from Definition~\ref{def:valid}, and it is left to show Condition~\ref{valid-6} (optimality).
Assume there is a $\mathcal{W}_{\overline{q}}$ with smaller length than $\mathcal{W}_q'$, both for $\sigma_q'$.
Observe that when reversing our reduction process, the parts of the walks in $G[V_q] \setminus E(G[q])$ are not relevant to our construction, only the signature $\sigma_q'$ as a starting point.
As thus, we can algorithmically (implicitly described in the previous parts of the introduce case) derive all valid solutions and signatures for $b$.

It is left to cover the runtime:
 For every possible signature of the child ($2^{O(\bagsize^2)}$ many), we combine them with every possible edge set ($2^{O(|b|^2)}$ combinations). Then, like unique balls (edges) into bins (walks), we distribute the edges over the walks, also considering all $O(|b|)$ combinations with empty walks, in $|b|^{O(|b|^2)}$ combinations. For every walk, we now obtained an edge set that has to be incorporated into the walk, where we can check in time $O(|b|^2)$ if it is possible and also what the new endpoints have to be (possibly switching both). If the walk is closed, we can pick $O(|b|)$ different endpoints.
All these factors, also the signature size and the number of signatures, are dominated by $|b|^{O(|b|^2)}$, with $|b| \geq 2$.
\end{proof}

\begin{lemma}[Join nodes]\label{lem:join}
Let $b$ be a join node in the nice tree decomposition $\mathcal{T}=(T,X)$, with the two children $q_1 = \text{child}(b)$ and $q_2 = \text{child}(b)$, where we have all valid signatures for $q_1$ and $q_2$.
Then, in time 
$n^{O(|b|)} \cdot 2^{O(\bagsize^2)}$
we can find all the valid signatures of $b$. 
\end{lemma}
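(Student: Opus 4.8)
The plan is to generate every valid signature of $b$ by merging pairs of valid child signatures, exploiting that the bags coincide, $V(b)=V(q_1)=V(q_2)$, and that $G[T_b]=G[T_{q_1}]\cup G[T_{q_2}]$ with the two subgraphs overlapping \emph{exactly} in $G[b]$. Concretely, I would iterate over all pairs $(\sigma_{q_1},\sigma_{q_2})$ of valid signatures of the two children whose separator-edge sets are disjoint, $E_{q_1}\cap E_{q_2}=\emptyset$, and set $E_b=E_{q_1}\cup E_{q_2}$. This disjointness requirement is precisely what certifies edge-disjointness of the merged walk collection across the two sides: since the only edges common to $G[T_{q_1}]$ and $G[T_{q_2}]$ lie in $G[b]$, and by Conditions~\ref{valid-4} and~\ref{valid-5} each child's walks traverse exactly its own set $E_{q_i}$ of separator edges, disjoint $E_{q_i}$ guarantee that no edge of $G[T_b]$ is used more than once in the union.

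Given such a pair, I would glue the walks of $\mathcal{W}_{q_1}$ and $\mathcal{W}_{q_2}$ into the walks of the combined signature. Form the auxiliary multigraph $H$ on vertex set $V(b)$ carrying one edge for each walk of either child, joining its two endpoints (a loop for a closed or single-vertex walk); since each child contributes at most $|b|$ walks, $H$ has at most $2|b|$ edges. Each decomposition of $E(H)$ into trails---equivalently, each choice, at every separator vertex, of which incident walk-ends are concatenated and which are left as endpoints---produces one candidate combined signature, whose endpoint pairs are the trail endpoints and whose edge set is $E_b$; for a trail that closes into a cycle I would additionally select a representative vertex of $b$ as $s_i=r_i$. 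Enumerating all such decompositions, I would keep, for each resulting $\sigma_b$, the combination of minimum total weight; because gluing changes neither the set of traversed edges nor their attribution, this weight equals the sum of the two stored child weights, and taking the minimum over all pairs and gluings yielding the same $\sigma_b$ enforces Condition~\ref{valid-6}.

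For correctness I would verify that each generated candidate meets Conditions~\ref{valid-1}--\ref{valid-6} of Definition~\ref{def:valid}: the endpoints lie in $b$ and match the listed pairs; the number of trails is at most $|b|$ and $\beta\geq\ell$; every waypoint of $T_b$ is covered, since those of $T_{q_1}$ and of $T_{q_2}$---including waypoints in $b$, which lie in both subtrees---are already covered by $\mathcal{W}_{q_1}$ and $\mathcal{W}_{q_2}$; and all of $E_b$ is used. Completeness, that \emph{every} valid signature of $b$ is produced, is the crucial point: starting from an optimal sub-solution realizing a given $\sigma_b$, I would restrict its walks to $G[T_{q_1}]$ and to $G[T_{q_2}]$. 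In each restriction all odd-degree vertices lie in $b$ (the global endpoints are in $b$ by Definition~\ref{def:signature}, and every crossing of $b$ flips a parity there), so the edge-covering machinery of Lemma~\ref{lem:eulerianseparator-new} decomposes the edges of each side into at most $|b|$ walks with endpoints in $b$. These are exactly valid child signatures whose gluing reproduces $\sigma_b$ at no greater weight, which together with the lower bound from any feasible combination establishes optimality. The empty signature $\sigma_{b,\emptyset}$ is handled separately, using that $b$ separates the two subtrees, so any walk avoiding $b$ is confined to a single side.

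The main obstacle is this completeness step coupled with controlling the number of walks: a priori an edge-disjoint walk may cross the separator $\Omega(n)$ times, so naively restricting to a side yields far more than $|b|$ pieces, and a bounded signature set would not suffice. Lemma~\ref{lem:eulerianseparator-new} is exactly what caps the per-side decomposition at $O(|b|)$ walks and makes the trail-decomposition enumeration both finite and complete. For the runtime, by Lemma~\ref{lem:sigscount} there are $2^{O(|b|^2)}$ signatures per child, hence $2^{O(|b|^2)}$ admissible disjoint pairs; for each pair the gluing enumerates the trail decompositions of $H$ and, in order to check Conditions~\ref{valid-1}--\ref{valid-6} and to fix the representative vertices of closed walks, realizes the resulting walks inside $G[T_b]$, which has up to $n$ vertices, at cost $n^{O(|b|)}$. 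Multiplying yields the claimed bound of $n^{O(|b|)}\cdot 2^{O(\bagsize^2)}$.
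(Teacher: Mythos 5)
Your merging strategy is genuinely different from the paper's, and the difference is exactly where your argument has a gap. The paper does not glue child walks as atomic units: its Claim~\ref{edgeset2} takes the union $E(\mathcal W_{q_1})\cup E(\mathcal W_{q_2})$ and re-partitions it from scratch, assigning every single edge to one of the at most $|b|$ endpoint pairs of a candidate parent signature and then checking parity and connectivity of each part; this brute force is what costs $n^{O(|b|)}$, and its completeness is immediate, since any realization of a parent signature over that edge set induces such an assignment. You instead enumerate trail decompositions of the auxiliary multigraph $H$ whose edges are whole child walks, so a parent walk can only arise by concatenating child walks end-to-end, never by splitting a child walk at an interior separator vertex. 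Your completeness argument therefore needs the following statement: the restriction of an optimal sub-solution for $\sigma_b$ to each side can be re-decomposed into at most $|b|$ walks with endpoints in $b$ such that these walks, \emph{treated as atoms}, concatenate to realize exactly the prescribed endpoint pairs $(s_1,r_1),\dots,(s_\ell,r_\ell)$ of $\sigma_b$ (while each side separately satisfies Conditions~\ref{valid-1} and~\ref{valid-2} of Definition~\ref{def:valid}). You attribute this to Lemma~\ref{lem:eulerianseparator-new}, but that lemma is weaker: it concerns an Eulerian graph and only guarantees that \emph{some single closed Eulerian walk} is a concatenation of the per-side pieces, and even there its proof has to split walks at interior vertices and perform cut-and-paste surgery to make the concatenation work. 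It says nothing about realizing a prescribed multiset of open and closed endpoint pairs, which is what a join node requires; that generalization is the crux of the join case, and you assert it rather than prove it.

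This is not a cosmetic omission. The authors explicitly state in the conclusion that ``simply concatenating child-walks for join nodes does not result in all valid parent signatures,'' i.e., they regard atom-concatenation as insufficient, which is precisely why their join step falls back on the edge-reassignment brute force and is only an XP step while the leaf, forget and introduce nodes are FPT. Note also that if your gluing were complete as stated, the per-pair work would depend essentially only on $|b|$ (enumerating trail decompositions of a multigraph with at most $2|b|$ edges), so you would be proving a strictly stronger, fixed-parameter join bound than the paper achieves --- a further indication that the unproved completeness claim is carrying unjustified weight. The remaining ingredients of your proposal (requiring $E_{q_1}\cap E_{q_2}=\emptyset$ and setting $E_b=E_{q_1}\cup E_{q_2}$, waypoint coverage, the weight accounting via stored minimum-weight child sub-solutions, and the separate treatment of the empty signature) do match the paper's Claim~\ref{edgeset1} and surrounding discussion; to close the gap you would either have to prove the prescribed-endpoint strengthening of Lemma~\ref{lem:eulerianseparator-new} or replace the gluing by the paper's exhaustive edge-to-pair assignment.
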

Our proof for join nodes consists of two parts, making use of the following fact: For a given valid signature of $b$, two valid sub-solutions with different path traversals have the same total length, if the set of traversed edges is identical.
As thus, when trying to re-create a signature of $b$ with a valid sub-solution, we do not need to create this specific sub-solution, but just \emph{any} sub-solution using the \emph{same} set of endpoints and edges. We show:

\begin{enumerate}
	\item We can partition the edges of a valid sub-solution into two parts along a separator, resulting in a valid signature for each of the two parts, where each sub-solution uses exactly the edges in its part.
	\item Given a sub-solution for each of the two parts separated, we can merge their edge sets, and create all possible signatures and sub-solutions using this merged edge set.
\end{enumerate}

\begin{proof}%
Let $\sigma_b$ be a valid non-empty signature of $b$
with edge set $E_b$, with valid sub-solution $\mathcal{W}_b$.
Our task is to show that we obtain $\sigma_b$ from some valid
signatures $\sigma_{q_1}, \sigma_{q_2}$, with valid sub-solutions
$\mathcal{W}_{q_1}, \mathcal{W}_{q_2}$. 

\begin{clm}\label{edgeset1} 
Given valid $\sigma_b$, $\mathcal{W}_b$, then there must be valid $\sigma_{q_1}, \sigma_{q_2}$, $\mathcal{W}_{q_1}, \mathcal{W}_{q_2}$, such that $E(\mathcal W_{q_1}) \cup E(\mathcal W_{q_2}) = E(\mathcal W_b)$. 
\end{clm}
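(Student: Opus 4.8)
The plan is to realize the partition by physically cutting the walks of $\mathcal{W}_b$ at the bag $b$. At a join node the bag satisfies $V(b)=V(q_1)=V(q_2)$, and $V(b)$ is an $(A,B)$-separator with $A=V(T_{q_1})$, $B=V(T_{q_2})$ and $A\cap B=V(b)$; in particular $G[T_{q_1}]$ and $G[T_{q_2}]$ share exactly the edges of $G[b]$. First I would commit every used bag edge $e\in E_b$ to one of the two sides, splitting $E(\mathcal{W}_b)$ into two disjoint sets $E_1\subseteq E(G[T_{q_1}])$ and $E_2\subseteq E(G[T_{q_2}])$ whose union is $E(\mathcal{W}_b)$. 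Then I would traverse each walk $W_i\in\mathcal{W}_b$ and break it into maximal sub-walks that stay on a single side; since the only vertices shared by the two sides lie in $V(b)$, and every bag edge has both endpoints in $V(b)$, each such sub-walk has both endpoints in $V(b)$, so Condition~\ref{valid-1} is met on each side.

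The key difficulty is that a single $W_i$ may cross $V(b)$ arbitrarily often, so the naive cut can produce far more than $|b|$ sub-walks per side, whereas a signature admits only $\ell\le|b|$ walks. This is exactly what the Eulerian Separation Lemma (Lemma~\ref{lem:eulerianseparator-new}) is for. I would form the multigraph $H$ of edges used by $\mathcal{W}_b$; all of its odd-degree vertices are walk endpoints and hence lie in $V(b)$, so adding a virtual matching among these odd vertices (all inside $V(b)$, hence not affecting the separator) turns $H$ into an Eulerian graph $H^+$ for which $V(b)$ is still an $(A,B)$-separator. Applying Lemma~\ref{lem:eulerianseparator-new} and then discarding the virtual edges yields, on each side, a set of edge-disjoint walks with endpoints in $V(b)$ covering exactly $E_1$ resp.\ $E_2$, with the number of walks per side bounded via properties~\ref{euler2} and~\ref{euler3}. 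This is the step where the bound $\ell\le|b|$ must be argued carefully, including the bookkeeping caused by the virtual edges and the treatment of $s=t$, which we regard as a waypoint.

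Next I would check waypoint coverage (Condition~\ref{valid-3}). Any waypoint in $V(T_{q_i})\setminus V(b)$ lies strictly inside one side, so the sub-walk carrying it in $\mathcal{W}_b$ stays on that side and is retained; a waypoint $w\in V(b)$ must instead be covered on \emph{both} sides, which I would guarantee by adding, on any side that does not already touch $w$, the trivial closed walk consisting of the single vertex $w$ (permitted since Condition~\ref{valid-1} allows $s_i=r_i$). Together with the assignment of $E_1,E_2$, this produces candidate signatures $\sigma_{q_1},\sigma_{q_2}$ with sub-solutions $\mathcal{W}_{q_1},\mathcal{W}_{q_2}$ satisfying Conditions~\ref{valid-1}--\ref{valid-5} and with $E(\mathcal{W}_{q_1})\cup E(\mathcal{W}_{q_2})=E(\mathcal{W}_b)$.

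Finally, validity requires minimality (Condition~\ref{valid-6}): I must argue that $\mathcal{W}_{q_1}$ and $\mathcal{W}_{q_2}$ are in fact weight-minimal for their signatures. Here I would invoke the minimality of $\mathcal{W}_b$ together with the fact stated above that a signature plus its edge set already determines the attainable weight. If some realization of $\sigma_{q_1}$ were strictly lighter than $\mathcal{W}_{q_1}$, I could reglue it at the bag with $\mathcal{W}_{q_2}$ (the interface is fixed by the endpoint pairs and the committed bag edges, so the reassembly again realizes $\sigma_b$) and obtain a sub-solution of $\sigma_b$ of weight strictly below $\omega(\mathcal{W}_b)$, contradicting Condition~\ref{valid-6} for $\sigma_b$. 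Hence neither side can be improved, so $\mathcal{W}_{q_1},\mathcal{W}_{q_2}$ are valid sub-solutions of valid signatures $\sigma_{q_1},\sigma_{q_2}$ with $E(\mathcal{W}_{q_1})\cup E(\mathcal{W}_{q_2})=E(\mathcal{W}_b)$, as claimed. I expect the main obstacle to be the walk-count bound of the second paragraph; the reassembly and minimality argument becomes clean once one may rely on the edge-set-determines-weight fact.
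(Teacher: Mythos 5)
Your overall plan differs from the paper's: the paper never cuts the walks of $\mathcal{W}_b$ and never invokes Lemma~\ref{lem:eulerianseparator-new} inside this claim. Instead it arbitrarily splits $E_b$ into $E_{q_1},E_{q_2}$, restricts $E(\mathcal{W}_b)$ to each side, and then re-decomposes each side's edge set greedily: start a walk at a bag vertex not yet used as an endpoint, extend over unused edges until stuck (which can only happen at a bag vertex, since all odd-degree vertices of the used-edge multigraph lie in the bag), and absorb any leftover cycles into existing walks. Because every walk consumes a fresh starting vertex of $V(q_i)$, the bound $\ell\le\beta\le|b|$ falls out immediately, with no need to track how often the original walks cross the bag. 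Your route, by contrast, has concrete gaps at exactly the points you flag. First, Lemma~\ref{lem:eulerianseparator-new} is stated for an Eulerian graph, and its proof uses connectivity to merge the closed walks into a single Eulerian walk; the union of the walks in $\mathcal{W}_b$ can be disconnected, and adding a matching on the odd-degree vertices need not connect it, so the lemma cannot be applied as stated (you would have to apply it per component and then re-aggregate the bounds). Second, and more importantly, the bound $\ell\le|b|$ together with Condition~\ref{valid-2} after deleting the virtual edges is left unargued: each deleted virtual edge can split a walk and create repeated endpoint vertices, so without an explicit re-concatenation step (as in the lemma's own proof, merging walks that share an endpoint vertex) you have no guarantee that what remains is a signature at all. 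Since this count is the entire content of the claim, the proposal as written does not yet prove it.

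On Condition~\ref{valid-6}, your regluing argument also has a hole: replacing $\mathcal{W}_{q_1}$ by a strictly lighter realization of $\sigma_{q_1}$ and combining it with $\mathcal{W}_{q_2}$ gives an edge set with the right parities at the bag, but it is not justified that this edge set decomposes into walks realizing \emph{the same} signature $\sigma_b$ (same endpoint pairs and same $E_b$); the component structure below the bag may change, and a cheaper sub-solution for some \emph{other} signature of $b$ does not contradict Condition~\ref{valid-6}, which is signature-specific. (The paper is admittedly terse on this point as well, but your specific contradiction argument needs this missing step to go through.) The waypoint-coverage part of your plan, including the trivial single-vertex walks for waypoints in $V(b)$, matches what the paper's construction achieves implicitly by starting a (possibly empty) walk at every bag vertex.
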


\begin{clm}\label{edgeset2} 
Given valid $\sigma_{q_1}, \sigma_{q_2}$, $\mathcal{W}_{q_1}, \mathcal{W}_{q_2}$, we show that we can create every possible valid signature of $b$ which has a sub-solution of edge set $E(\mathcal W_{q_1}) \cup E(\mathcal W_{q_2})$ in $n^{O(\texttt{tw})}\cdot 2^{O(\bagsize^2)}$.
\end{clm}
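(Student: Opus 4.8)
The plan is to prove the join lemma by establishing the two announced claims, \ref{edgeset1} (decomposition) and \ref{edgeset2} (composition), and then combining them: via \ref{edgeset1} every valid signature of $b$ arises from a pair of valid child signatures whose sub-solution edge sets union to $E(\mathcal W_b)$, and conversely, via \ref{edgeset2}, ranging over all such pairs reconstructs every valid signature of $b$ together with a minimum-weight sub-solution. Throughout I would lean on two structural facts. First, $V(b)=V(q_1)=V(q_2)$ is an $(A,B)$-separator with $A=V(T_{q_1})$, $B=V(T_{q_2})$ and $A\cap B=V(b)$, so every edge of $G[T_b]$ lies in $G[T_{q_1}]$ or in $G[T_{q_2}]$, the only shared edges being those of $G[b]$. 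Second, the weight fact stated before the proof: the total weight of a valid sub-solution depends only on its traversed edge set, so ``minimum weight'' means ``minimum total edge weight'' and two sub-solutions with the same edge set are interchangeable. This lets me manipulate walks freely as long as I preserve the edge set, its endpoints, and the covered waypoints.

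For Claim~\ref{edgeset1} I would start from a valid $\sigma_b$ with a minimum-weight sub-solution $\mathcal W_b$ and split its edges along the separator: assign each used separator edge of $E_b$ to exactly one side, obtaining disjoint $F_1\subseteq E(G[T_{q_1}])$ and $F_2\subseteq E(G[T_{q_2}])$ with $F_1\cup F_2=E(\mathcal W_b)$. Cutting every walk of $\mathcal W_b$ at each visit to a separator vertex yields sub-walks lying entirely on one side with all endpoints in $V(b)$, so the subgraph $(A,F_1)$ has all its odd-degree vertices inside $V(b)$. Applying the walk-covering argument of Lemma~\ref{lem:eulerianseparator-new} (cover $F_1$ by open walks between separator vertices and absorb closed walks by merging at shared vertices) reduces the $q_1$-side to at most $|b|$ walks, and analogously for $q_2$. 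These give signatures $\sigma_{q_1},\sigma_{q_2}$ satisfying Conditions~\ref{valid-1}--\ref{valid-5} of Definition~\ref{def:valid}: endpoints lie in the bag, at most $|b|$ walks remain, each waypoint of $V(T_{q_i})$ is still covered (it is reachable only within $G[T_{q_i}]$), and the edge sets are exactly $F_i$. Minimality (Condition~\ref{valid-6}) follows by substitution: if some induced $\sigma_{q_i}$ admitted a strictly lighter sub-solution, plugging it back into $\mathcal W_b$ would contradict the minimality of $\mathcal W_b$; hence $F_i$ already realizes the minimum and the chosen sub-solutions are valid.

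For Claim~\ref{edgeset2} I would reverse this. Given valid $\sigma_{q_1},\sigma_{q_2}$ with sub-solutions $\mathcal W_{q_1},\mathcal W_{q_2}$, I first enforce consistency on the shared separator edges by requiring $E_{q_1}\cap E_{q_2}=\emptyset$ and setting the combined separator set to $E_{q_1}\cup E_{q_2}$, so that $\mathcal W_{q_1}\cup\mathcal W_{q_2}$ is a family of edge-disjoint walks covering exactly $E(\mathcal W_{q_1})\cup E(\mathcal W_{q_2})$. I then re-stitch these at most $2|b|$ partial walks at the separator vertices: repeatedly concatenating two walk-ends meeting at a common vertex of $V(b)$, and---exactly as in Lemma~\ref{lem:eulerianseparator-new}---merging residual closed walks into open ones at shared vertices. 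Enumerating all admissible stitchings produces, for this fixed combined edge set, every signature of $b$ with at most $|b|$ walks, all endpoints in $V(b)$, and all waypoints of $V(T_b)$ covered; by the weight fact every resulting sub-solution with this edge set has identical weight, so one representative per signature suffices and Condition~\ref{valid-6} is met by minimizing over the child pairs yielding the same $\sigma_b$. The cost is dominated by iterating over the $2^{O(|b|^2)}$ pairs of child signatures and, for each, performing the $O(|b|)$ merges of closed walks, each choosing a shared vertex among up to $n$ candidates, giving $n^{O(\texttt{tw})}\cdot 2^{O(|b|^2)}$; taking both children together yields the lemma's bound $n^{O(|b|)}\cdot 2^{O(|b|^2)}$.

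The main obstacle I anticipate is keeping the number of walks bounded by $|b|$ on each side while simultaneously preserving the exact edge set, all endpoints, and waypoint coverage: naive cutting at the separator produces arbitrarily many sub-walks, and it is precisely the Eulerian re-merging of Lemma~\ref{lem:eulerianseparator-new}---and the choice of where to merge closed walks---that both restores the signature bound and accounts for the $n^{O(|b|)}$ factor. A secondary but essential point is completeness: I must argue that Claims~\ref{edgeset1} and~\ref{edgeset2} are mutually inverse enough that ranging over all valid child-signature pairs generates every valid signature of $b$ and no spurious one, which is exactly what licenses the dynamic program at join nodes.
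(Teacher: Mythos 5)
Your plan for Claim~\ref{edgeset2} departs from the paper's argument and has a genuine gap: completeness. You propose to re-stitch the stored families $\mathcal W_{q_1},\mathcal W_{q_2}$ by concatenating walk-ends that meet at bag vertices and absorbing closed walks, and you then assert that enumerating ``all admissible stitchings'' yields every valid signature of $b$ realizable on the edge set $E(\mathcal W_{q_1})\cup E(\mathcal W_{q_2})$. That assertion is precisely what needs proof, and the procedure as described cannot deliver it: concatenation never splits a stored walk at an interior visit to a bag vertex, so every walk you produce has its endpoints among the endpoints of the stored child walks. A valid signature of $b$ whose endpoint pairs use a bag vertex occurring only in the interior of the stored walks (for instance, a closed child walk through $u,v\in b$ that must be cut into two open $u$--$v$ walks to match a parent signature containing $(u,v)$ twice) is then never generated; moreover, the child pair produced by your Claim~\ref{edgeset1} is built by a greedy edge-covering that ignores the parent's walk structure, so there is no reason its walk endpoints align with the parent signature you must recover. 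Since the dynamic program requires \emph{all} valid signatures at every node, these omissions are not obviously harmless, and you acknowledge the completeness issue but give no argument for it.

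The paper sidesteps this entirely by using the fact you quote but do not fully exploit: for a fixed signature, any sub-solution on the same edge set has the same weight, so no particular family of walks ever needs to be reconstructed. It enumerates all $2^{O(|b|^2)}$ candidate signatures of $b$ (valid or not), and for each one brute-forces an assignment of the $O(n^2)$ edges of $E(\mathcal W_{q_1})\cup E(\mathcal W_{q_2})$ to the at most $|b|$ endpoint pairs, accepting an assignment for a pair $(s_i,r_i)$ exactly when its edges form a connected subgraph in which all vertices have even degree except possibly $s_i\neq r_i$, i.e., when an Eulerian walk with those endpoints exists; duplicates keep a minimum-length sub-solution. This parity-and-connectivity check is what certifies that every valid signature on the merged edge set is found, and the edge-to-pair assignment is the actual source of the $n^{O(|b|)}$ factor --- not the choice of merge vertices in your accounting. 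To repair your version you would either have to enrich the stitching so that walks may also be split at arbitrary bag-vertex visits and prove that enumeration exhaustive, or adopt the paper's signature-enumeration-plus-Eulerian-check argument.
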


\begin{ClaimProof}[Proof of Claim~\ref{edgeset1}]
Arbitrarily partition $E_b$ into some $E_{q_1}$ and $E_{q_2}$. 
Then, consider $E^{\mathcal W}_{q_1} = \left(E(\mathcal{W}_b) \cap E(T_{q_1})\right)\setminus E_{q_2}$ and $E^{\mathcal W}_{q_2} = \left(E(\mathcal{W}_b) \cap E(T_{q_1})\right)\setminus E_{q_1}$, i.e., the edges of the subwalks corresponding to each child, obtained by the arbitrary partition of $E_b$.

For both $E^{\mathcal W}_{q_1}$ and $E^{\mathcal W}_{q_2}$, we now generate valid signatures and sub-solutions, where the edges of the signatures are already given by $E_{q_1}, E_{q_2}$.
W.l.o.g., we perform this task for $E^{\mathcal W}_{q_1}$: Starting at some vertex $v_1 \in V(q_1)$, generate a walk
 by traversing yet unused incident edges, until no more unused incident edges are left, ending at some $v_2 \in V(q_1)$, possibly $v_1=v_2$ and the used edge set may be empty. 

Perform this for all vertices in $V(q_1)$ not yet used as endpoints, possibly generating walks consisting just of a vertex and no edges. 
However, at most $|V(q_1)|$ walks will be generated, as every endpoint of a walk will not be an endpoint for another walk.
Note that if there is any uncovered set of edges of $E^{\mathcal W}_{q_1}$, they form a set of edge-disjoint cycles, where at least one these cycles will share a vertex with some walk, as only vertices in $V(q_1)$ can have odd degree w.r.t.~the edge set.
Then, we can integrate this cycle into a walk, iterating the process until all edges are covered.
Denote the resulting valid signature by $\sigma_{q_1}$ with edge set $E_{q_1}$.
We perform the same for $E^{\mathcal W}_{q_2}$.
Hence, starting from a valid signature $\sigma_b$ with edge set $E_b$, we created two valid signatures $\sigma_{q_1}$ and $\sigma_{q_2}$ for the two children of $t$, with $\mathcal W_{q_1}, \mathcal W_{q_2}$ such that $E(\mathcal W_{q_1}) \cup E(\mathcal W_{q_2}) = E(\mathcal W_b)$.
\end{ClaimProof}

\begin{ClaimProof}[Proof of Claim~\ref{edgeset2}]
Given $\mathcal W_{q_1}, \mathcal W_{q_2}$, we have to construct every possible $\mathcal W_b$ with $E_{\mathcal W_b} = E(\mathcal W_{q_1}) \cup E(\mathcal W_{q_2})$, which in turn implies its valid signature $\sigma_b$.

In order to do so, we create every possible ($2^{O(\bagsize^2)}$ many) signature $\sigma$ of $b$, possibly not valid ones as well.
However, every such signature $\sigma$ can be checked if it can have a sub-solution using the edges $E(\mathcal W_{q_1}) \cup E(\mathcal W_{q_2})$.
We can obtain the answer to this question via a brute-force approach: We assign every edge from $E(\mathcal W_{q_1}) \cup E(\mathcal W_{q_2})$ to one of the endpoint pairs of $\sigma$, with $|E(\mathcal W_{q_1}) \cup E(\mathcal W_{q_2})| \in O(n^2)$.
For a given $\sigma$ with at most $|b|$ walks, the number of possibilities are thus in $O((n^2)^{|b|})=n^{O(|b|)}$.
 For every endpoint pair $(s_i,r_i)$ of the at most $|b|$ walks, we can check in time linear in the number of edges if the assigned edge set can be covered by a walk between $s_i$ and $r_i$: namely, does the edge set form a connected component where all vertices except for $s_i \neq r_i$ have even degree (or, in the case of $s_i = r_i$, do all vertices have even degree)? When we create a signature multiple times, we can keep any sub-solution of minimum length. In total, the runtime is in $2^{O(\bagsize^2)} \cdot n^{O(\texttt{tw})}$.
\end{ClaimProof}

It remains to cover the case of $\sigma_b$ being empty.
By definition, a valid sub-solution to an empty signatures does not traverse any vertex in $V(b)$. As such, the only way to obtain a valid signature $\sigma_b$ in a join is if both valid $\sigma_{q_1}, \sigma_{q_2}$ are empty, with, w.l.o.g., $\mathcal W_{q_2}$ empty too. Then, $\sigma_b = \sigma_{q_1}$, with $\mathcal W_{b} = \mathcal W_{q_1}$.
\end{proof}

\subsection{Putting it All Together}\label{sec:all-together}
We now have all the necessary tools to prove Theorem~\ref{thm:everything}:
\begin{proof}%
\textbf{Dynamically programming a nice tree decomposition.} 
Translating an instance of the Waypoint Routing Problem to an equivalent one with $s=t$ and unit edge capacities only increases the treewidth by a constant amount, see 
Remark~II
and Lemma~\ref{twapprox}.
Although it is NP-complete to determine the treewidth of a graph and compute an according tree decomposition, 
there are efficient algorithms for constant treewidth~\cite{bodlaender1996linear,DBLP:journals/ijfcs/PerkovicR00}.
Furthermore, Bodlaender et al.~\cite{6686186} presented a constant-factor approximation in a time of $O(c^\texttt{tw} n)$ for some $c \in \mathbb{N}$, also beyond constant treewidth: Using their algorithm $O(\log \texttt{tw})$ times (via binary search over the unknown treewidth size), we obtain a tree decomposition of width $O(\texttt{tw})$.
Following~\cite{DBLP:books/sp/Kloks94}, we generate a nice tree decomposition of treewidth $O(\texttt{tw})$ with $O(\texttt{tw} n) \in O(n^2)$ nodes in an additional time of $O(\texttt{tw}^2n) \in O(n^3)$.
The %
 time so far is $O(c^\texttt{tw} n \log \texttt{tw})+O(\texttt{tw}^2n)$ for some $c \in \mathbb{N}$.

We can now dynamically program the Waypoint Routing Problem on the nice tree decomposition in a bottom-up manner, using Lemma~\ref{lem:leaf} (leaf nodes), Lemma~\ref{lem:forget} (forget nodes), Lemma~\ref{lem:introduce} (introduce nodes), and Lemma~\ref{lem:join} (join nodes).
The time for each programming of a node is at most $O(\texttt{tw})^{O(\texttt{tw}^2)}$ or $n^{O(\texttt{tw})} \cdot 2^{O(\texttt{tw}^2)}$, meaning 
that we obtain all valid signatures with valid sub-solutions at the root node $r$, in a combined time of $n^{O(\texttt{tw}^2)}$, specifically:
$$(O(\texttt{tw})^{O(\texttt{tw}^2)} + n^{O(\texttt{tw})} \cdot 2^{O(\texttt{tw}^2)})\cdot O(\texttt{tw} \cdot n) + O(c^\texttt{tw} n\log \texttt{tw})+O(\texttt{tw}^2n).$$

\noindent\textbf{Obtaining an optimal solution.}
If an optimal solution $I$ to the Waypoint Routing Problem exists (on the unified graph with $s=t$), then the traversed edges $E^*$ and vertices $V^*$ in $I$ yield an Eulerian graph $G^*=(V^*,E^*)$.
With each bag in the nice tree decomposition having $O(\texttt{tw})$ vertices, we can now apply (the Eulerian separation) Lemma~\ref{lem:eulerianseparator-new}: There must be a valid signature of the root $r$ whose sub-solution uses exactly the edges $E^*$. 
As thus, from all the valid sub-solutions at $r$, we pick any solution to WRP\xspace with minimum weight, obtaining an optimal solution to the Waypoint Routing Problem. 
\end{proof}

\section{Walking Through Logarithmically Many Waypoints}\label{sec:polylog}
While the Waypoint Routing Problem is generally NP-hard (as we will see below in Section~\ref{sec:nphard}),
we show that a shortest walk through a bounded (not necessarily constant) 
number of waypoints can be computed in polynomial time.
In the following, we describe reductions to shortest 
vertex-disjoint~\cite{thore-soda,DBLP:conf/ipco/Kawarabayashi08}\footnote{The algorithm in \cite{DBLP:conf/ipco/Kawarabayashi08} is for passing through edges, but a standard reduction also allows to use it for passing through vertices. Similarly, an algorithm for passing through vertices can also be used for edges~\cite[p.22]{2ff4490024874b73b698017e96ea9b14}.} cycle problems, where the cycle has to pass through specified vertices.

As we study walks on \emph{capacitated}
networks instead, we first introduce parallel edges.
Interestingly, two edges are sufficient, see Lemma~\ref{lemma:twice}
in the Appendix. 
Similarly, for edge weights $\omega(e)$, we replace every edge $e$ with a path of length $\omega(e)$.
Lastly, to obtain a simple graph with unit edge weights and unit capacities, we place a vertex on every edge, removing all parallel edges while being distance-preserving.

The transformation of the edge-disjoint cycle problem variant into
a vertex-disjoint route problem variant, and accounting for waypoints, 
however requires some additional considerations.
The standard method to apply vertex-disjoint path 
algorithms to the edge-disjoint case, e.g.,~\cite{DBLP:conf/bonnco/NavesS08,Nishizeki2001177,Shiloach:1978:FTD:322047.322048}\footnote{In~\cite{Nishizeki2001177} it is mentioned that not the line graph is taken, but a graph ``\textit{similar to the line graph}''. Furthermore, Zhou et al.~\cite[p.3]{DBLP:journals/algorithmica/ZhouTN00} suggest to ``[replace] \textit{each vertex with a complete bipartite graph}''.}, 
is to take the line graph $L(G)$ of the original graph $G$.
Then, each edge is represented by a vertex (and vice versa), 
i.e., a vertex-disjoint path in the line graph directly translates to 
an edge-disjoint walk (and vice versa), but possibly changing the graph family.
However, the line graph construction raises the question of where to
place the waypoints. For example, consider a waypoint vertex of 
degree $3$, which is transformed into $3$ vertices in the line graph: 
which of these vertices should represent the waypoint? 

For the case of 2 disjoint paths, Bj{\"o}rklund and Husfeldt~\cite[p.214]{thore-icalp} give the following idea: ``\textit{add an edge to each terminal vertex and apply} [the] \textit{Algorithm} [...] \textit{to the line graph of the resulting graph}''.
Their method is sufficient for $s,t$, but for the remaining waypoints, we also need to add extra vertices to the line graph:
To preserve shortest paths, every shortest ``pass'' through the line graph representation of a vertex $v$ should have the same length, no matter if the waypoint was already visited or not.
As thus, we add $\delta(v)$ further vertices to the line graph, one on each edge connecting two edge representations in the line graph, as in Figure~\ref{fig:clique-expansion}.

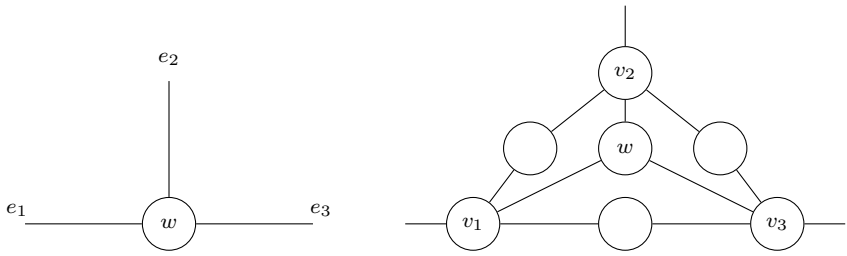
\begin{figure}[t]
	\centering
	\begin{tikzpicture}[auto]
	\node [markovstate] (v) at (0, -1) {$w$};
	\node  (v1) at (-2, -1) {};
	\node  (v2) at (0, 1) {};
	\node  (v3) at (2, -1) {};

	\node [markovstate] (w) at (6, 0) {$w$};
	\node [markovstate] (wum) at (6, -1) {};
	\node [markovstate] (wul) at (4, -1) {$v_1$};
	\node [markovstate] (wml) at (4.75, 0) {};
	\node [markovstate] (wom) at (6, 1) {$v_2$};
	\node [markovstate] (wmr) at (7.25, 0) {};
	\node [markovstate] (wur) at (8, -1) {$v_3$};
	
	\node (woom) at (6, 2) {};
	\node (wull) at (3, -1) {};
	\node (wurr) at (9, -1) {};

	\draw  (v) to (v1) node [above] {$e_1$};
	\draw  (v) to (v2) node [above] {$e_2$};
	\draw  (v) to (v3) node [above] {$e_3$};
	
	\draw  (w) to (wom);
	\draw  (w) to (wur);
	\draw  (w) to (wul);
	
	\draw  (wul) to (wum);
	\draw  (wum) to (wur);
	
	\draw  (wul) to (wml);
	\draw  (wml) to (wom);
	
	\draw  (wom) to (wmr);
	\draw  (wmr) to (wur);
	
	\draw (woom) to (wom);
	\draw (wul) to (wull);
	\draw (wur) to (wurr);

	\end{tikzpicture}
	\caption{Replacing a waypoint vertex in 
	$G$ with an expanded clique in an extended line graph.}
	\label{fig:clique-expansion}
\end{figure}
Next, recall that the algorithm by Bj{\"o}rklund et al.~\cite{thore-soda} 
computes cycles, whereas in WRP\xspace, we are interested in walks
from $s$ to $t$, where $s$ may not equal $t$.
However, due to Remark II (Lemma~\ref{lemma:s=t}), we can assume
that $s=t$.

Given this construction, using Bj{\"o}rklund et al.'s shortest simple cycle algorithm,
we have:

\begin{theorem}\label{thm:cycle-results-1}
For a general graph $G$ with polynomial edge weights, a shortest walk through $k \in O(\log n)$ waypoints can be computed in randomized polynomial time, namely $2^kn^{O(1)}$.
\end{theorem}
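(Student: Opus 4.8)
The plan is to reduce the Waypoint Routing Problem on a general graph $G$ with $k \in O(\log n)$ waypoints to a single instance of the shortest vertex-disjoint cycle problem through specified vertices, and then invoke the randomized $2^k n^{O(1)}$ algorithm of Bj{\"o}rklund et al.~\cite{thore-soda}. First I would apply the preprocessing already set up in this section: using Remark~I (Lemma~\ref{lemma:twice}) I may assume all capacities are at most $2$, and using Remark~II (Lemma~\ref{lemma:s=t}) I may assume $s=t$. I then make the graph simple with unit capacities and unit edge weights in a distance-preserving way, by replacing each edge of weight $\omega(e)$ with a path of $\omega(e)$ subdivided edges, duplicating edges of capacity $2$ into two parallel edges, and placing a vertex on each edge to eliminate parallel edges. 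Since the edge weights are polynomial in $n$, this blow-up is polynomial, so the resulting simple graph $G'$ still has size $n^{O(1)}$ and the number of waypoints is unchanged.

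Next I would carry out the line-graph construction described in the text. The key point is to pass from an \emph{edge-disjoint} walk problem to a \emph{vertex-disjoint} cycle problem: take the line graph $L(G')$, in which an edge-disjoint closed walk corresponds to a vertex-disjoint cycle. The subtlety, and the step I expect to be the main obstacle, is \emph{where to anchor the waypoints}: a waypoint $w$ of degree $\delta(w)$ in $G'$ splits into $\delta(w)$ vertices in the line graph, and I must ensure (1) that the cycle is forced to pass through $w$ regardless of which incident edges it uses, and (2) that doing so does not distort shortest-path lengths, so that a \emph{shortest} cycle maps back to a \emph{shortest} walk. Following Figure~\ref{fig:clique-expansion}, I would replace $w$ by the clique-like gadget on its $\delta(w)$ edge-representatives, adding $\delta(w)$ extra vertices (one on each internal connecting edge) so that every ``pass'' through the gadget has the same length whether or not $w$ has been visited before. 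Designating the central vertex $w$ as a mandatory vertex for the cycle then enforces the visit at a uniform, distance-preserving cost.

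I would then argue the two-way correspondence carefully: any feasible WRP route (a closed Eulerian-type walk through all $w_i$, respecting unit capacities in $G'$) induces a simple cycle in the gadget-augmented line graph that passes through the $k$ mandatory vertices $\{w_1,\dots,w_k\}$, and conversely any such cycle projects back to an edge-disjoint closed walk through all waypoints; because the gadgets are length-uniform and the subdivisions preserve distances, the lengths match up to a fixed additive/multiplicative factor that is the same for every route, so minimum maps to minimum. This reduces the problem to finding a \emph{shortest} simple cycle through a specified set of $k$ vertices in an undirected graph of polynomial size.

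Finally I would invoke Bj{\"o}rklund et al.~\cite{thore-soda}, whose randomized algebraic algorithm finds a shortest simple cycle through a given set of $k$ specified vertices or edges in an $n$-vertex undirected graph in time $2^k n^{O(1)}$. Since the reduction increases the number of vertices only polynomially and keeps the number of mandatory vertices at exactly $k$, the total running time is $2^k \cdot (n^{O(1)})^{O(1)} = 2^k n^{O(1)}$, which for $k \in O(\log n)$ is $2^{O(\log n)} n^{O(1)} = n^{O(1)}$, i.e.~randomized polynomial time. This establishes the claimed bound. The part requiring the most care is verifying that the waypoint gadget simultaneously (i) forces the visit, (ii) preserves the edge-disjoint-to-vertex-disjoint correspondence, and (iii) preserves shortest-path lengths; the rest is routine bookkeeping on the polynomial-size preprocessing.
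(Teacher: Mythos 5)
Your overall strategy (preprocess to unit weights, unit capacities and $s=t$; pass to a line-graph-like construction with mandatory vertices; invoke Bj\"orklund et al.~\cite{thore-soda}) is the same as the paper's, but the construction you describe has a genuine gap exactly at the step you yourself flag as critical: length preservation. You apply the clique-plus-subdivision gadget only at waypoints (and $s,t$) and keep the plain line graph at all other vertices. Then a transition of the route at an ordinary vertex of $G'$ costs $1$ edge in the transformed graph, while a transition at a waypoint gadget costs $2$ (whether it goes through the central mandatory vertex or through a subdivision vertex). Hence the cycle corresponding to a closed walk of length $\ell$ has length $\ell+p$, where $p$ is the number of times the walk passes through waypoint vertices. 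This $p$ is \emph{not} the same for all routes: an optimal walk may be forced to revisit a waypoint several times (e.g., a star whose center is a waypoint and whose leaves contain the remaining waypoints), whereas a slightly longer walk using bypass paths can avoid revisits. Consequently $\ell+p$ and $\ell$ need not be minimized by the same routes, and a shortest simple cycle in your construction can project back to a non-shortest walk (already when there is a tie, the algorithm is only guaranteed to return \emph{some} shortest cycle, which may be the one coming from the longer walk). So the claim that ``the lengths match up to a fixed additive/multiplicative factor that is the same for every route'' does not hold for your gadget placement, and the optimality direction of your reduction fails, even though the feasibility direction is fine.

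The paper closes exactly this hole by making the transformation uniform at \emph{every} vertex, not only at waypoints: Algorithm~\ref{alg:waypoint-line-graph} expands each vertex $v$ into a clique on $\delta(v)+1$ vertices (the extra vertex $v'$ carrying the waypoint/terminal label), subdivides every original edge into a path of three edges, and subdivides every clique edge not incident to $v'$ into a path of two edges. Then every traversal of an original edge costs exactly $5$ in $L_R(G)$, independently of which vertices are waypoints and how often they are revisited; this yields the exact correspondence between routes of length $\ell_1$ and paths of length $5\ell_1$ in Theorem~\ref{thm:special-line-graph}, which is what makes ``shortest map to shortest'' (Corollary~\ref{corr:special-line-graph-shortest}) and hence Theorem~\ref{thm:cycle-results-1} go through. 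Your construction becomes correct if you equalize the pass cost at all vertices in the same way (equivalently, also subdivide the ordinary line-graph transition edges), which essentially reproduces the paper's construction. A further small slip: a clique on the $\delta(w)$ edge-representatives has $\binom{\delta(w)}{2}$ internal edges, so ``$\delta(w)$ extra vertices, one per internal edge'' only suffices for $\delta(w)\le 3$; the formal construction subdivides all clique edges not incident to $v'$.
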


Similarly, we can also adapt the result by
Kawarabayashi~\cite{DBLP:conf/ipco/Kawarabayashi08} to derive a deterministic 
algorithm to compute feasible (not necessarily shortest) walks:
\begin{theorem}\label{thm:cycle-results-2}
For a general graph $G$ with polynomial edge weights, a walk through $k \in O\left(\left(\log \log n\right)^{1/10}\right)$ waypoints can be computed in deterministic polynomial time. 
\end{theorem}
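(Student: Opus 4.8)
The plan is to mirror the proof of Theorem~\ref{thm:cycle-results-1} almost verbatim, substituting the deterministic cycle-through-vertices algorithm of Kawarabayashi~\cite{DBLP:conf/ipco/Kawarabayashi08} for the randomized algebraic algorithm of Bj\"orklund et al.~\cite{thore-soda}. First I would invoke Remark~II (Lemma~\ref{lemma:s=t}) to reduce to the case $s=t$, so that the sought object is a closed walk through all waypoints. I would then apply the same sequence of graph transformations used before to convert the capacitated walk problem into an uncapacitated, edge-disjoint problem: using Lemma~\ref{lemma:twice}, replace each edge by at most two parallel edges (two suffice since a shortest walk traverses no edge more than twice), expand each edge of weight $\omega(e)$ into a path of length $\omega(e)$, and subdivide to kill parallel edges, yielding a simple graph of unit capacity and unit weight in a distance-preserving manner. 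Crucially, since the edge weights are polynomial in $n$ by hypothesis, all of these expansions blow up the graph size only polynomially, so the resulting instance still has size $\mathrm{poly}(n)$.

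Next I would perform the line-graph construction described in the text preceding the statement: pass to $L(G)$ so that edge-disjoint walks become vertex-disjoint routes, and handle the placement of waypoints (and $s=t$) by the clique-expansion of Figure~\ref{fig:clique-expansion}, adding $\delta(v)$ auxiliary vertices at each waypoint so that every ``pass'' through the representation of $v$ has the same length regardless of whether the waypoint was previously visited. Because $s=t$, the desired $s$--$t$ walk through $k$ waypoints becomes precisely a cycle passing through a designated set of $O(k)$ vertices in this extended line graph. I would then feed this instance to Kawarabayashi's algorithm, which finds a cycle through up to $O((\log\log n)^{1/10})$ specified vertices in deterministic polynomial time; since the extended line graph has polynomial size, $(\log\log)$ of its order is $\Theta(\log\log n)$, so the waypoint bound carries over up to constants.

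The main conceptual obstacle — and the reason the statement is weaker than Theorem~\ref{thm:cycle-results-1} — is that Kawarabayashi's algorithm only decides \emph{feasibility}, producing \emph{some} cycle through the specified vertices rather than a \emph{shortest} one. Hence the theorem correctly claims only a deterministic algorithm for a (not necessarily shortest) walk, and I would not attempt to argue optimality; I would simply verify that any feasible cycle in the extended line graph pulls back to a capacity-respecting walk visiting all waypoints in the original graph, which follows from the edge-disjointness correspondence of the line graph and the unit-capacity reduction. The only routine points to check are that the clique-expansion does not create shortcuts that would let a cycle skip a waypoint or double-count an edge, and that the transformations preserve graph size polynomially so that ``deterministic polynomial time'' in the input $G$ is genuinely achieved; both are straightforward given the construction already set up for the randomized case.
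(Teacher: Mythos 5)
Your proposal matches the paper's own route essentially step for step: the same reduction to $s=t$ via Remark~II, the same unification to a simple unit-capacity, unit-weight graph (two parallel edges per Lemma~\ref{lemma:twice}, weight-$\omega(e)$ edges expanded to paths, subdivision to remove multi-edges), the same waypoint-aware line-graph/clique-expansion construction of Algorithm~\ref{alg:waypoint-line-graph} and Theorem~\ref{thm:special-line-graph}, and the same black-box call to Kawarabayashi's deterministic cycle-through-elements algorithm, with the correct observation that only feasibility (not shortestness) is obtained. The polynomial-size bookkeeping you flag is exactly what the paper verifies before Corollary~\ref{corr:special-line-graph-shortest}, so the argument is sound.
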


Our formal proof of the  Theorems~\ref{thm:cycle-results-1} and~\ref{thm:cycle-results-2} will be a direct implication of the upcoming Corollary~\ref{corr:special-line-graph-shortest}, for which in turn we need the following Theorem~\ref{thm:special-line-graph}. %

For our construction, we will use an extended waypoint-aware line graph $L_{R}(G)$ 
construction in Algorithm~\ref{alg:waypoint-line-graph}.
The fundamental idea is as follows: Similar to the line graph, 
we place vertices on the edges, implying that every edge may only 
be used once. Then, the original vertices are expanded into sufficiently large cliques, 
also containing the waypoint, s.t.~any original edge-disjoint walk 
can also be performed by a path through the clique vertices. 
For an illustration of this so-called clique expansion, we refer again to Figure~\ref{fig:clique-expansion}.

\begin{theorem}\label{thm:special-line-graph}
Consider an instance $I$ of WRP\xspace on  $G=(V,E)$.
If an edge-disjoint route $R$ of length $\ell_1$, solving $I$, exists on $G$,
then there is a
 vertex-disjoint path $P$ from $s$ to $t$ through all waypoints of length $5\ell_1$ on $L_{R}(G)$.
Conversely, if such a $P$ of length $\ell_2$ exists on $L_{R}(G)$, then 
there is a $R$ for $I$ of length $\leq \ell_2/5$.
\end{theorem}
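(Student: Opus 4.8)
The plan is to exhibit an explicit, length-faithful dictionary between edge traversals of a route $R$ in $G$ and vertex-visits of a path $P$ in $L_{R}(G)$, and to run it in both directions. Recall that $L_{R}(G)$ places one vertex on every edge of $G$ (so that reusing an edge forces reusing a vertex) and expands every original vertex $v$ into the clique gadget of Figure~\ref{fig:clique-expansion}: the edge-representations incident to $v$ form a clique whose every connecting edge carries an extra subdivision vertex, and the waypoint vertex $w_v$ is joined to all of them. The first structural fact I would isolate is that, thanks to this subdivision, \emph{every} way of ``passing through'' the gadget of $v$ from one incident edge-representation to another uses the same number of edges, \emph{whether or not} it goes via $w_v$; combined with the fixed cost of crossing an edge-representation, this makes each original edge traversal correspond to exactly $5$ edges of $L_{R}(G)$. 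Establishing this uniform per-edge cost (a careful but routine count over the gadget of Figure~\ref{fig:clique-expansion}, kept consistent at internal vertices and at $s,t$) is what pins the constant to $5$.

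For the forward direction I would take the route $R = e^{(1)} \oplus \dots \oplus e^{(\ell_1)}$ and build $P$ edge by edge: traverse the representation of $e^{(i)}$, then cross the gadget of the shared endpoint to reach the representation of $e^{(i+1)}$. At a vertex $v$ the several passes of the walk $R$ pair up the incident edges, and I would route exactly one pass through $w_v$ (the one realising the obligatory visit of the waypoint, for $v \in \wps \cup \{s,t\}$) and all remaining passes through their private clique-subdivision vertices. The crucial check is that $P$ is \emph{simple}: since $R$ is edge-disjoint, each edge-representation is entered at most once, and at every $v$ the passes use pairwise distinct port-pairs, hence pairwise distinct subdivision vertices, while $w_v$ is used by at most one pass; so no vertex of $L_{R}(G)$ repeats. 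As $P$ visits $w_v$ for every waypoint and runs from the $s$-vertex to the $t$-vertex, and each of the $\ell_1$ edges contributes exactly $5$ edges, $P$ is a vertex-disjoint $s$-$t$ path through all waypoints of length $5\ell_1$.

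For the converse I would project a simple $s$-$t$ path $P$ of length $\ell_2$ back to $G$ by reading off the sequence of edge-representation vertices it meets; this sequence is a walk $R$ from $s$ to $t$. Vertex-disjointness of $P$ forces each edge-representation, hence each edge of $G$, to appear at most once, so $R$ is edge-disjoint and thus respects the unit capacities; and because $P$ passes through every waypoint vertex $w_v$, the walk $R$ visits every waypoint. For the length, each edge used by $R$ is accounted for by at least the $5$ edges of its surrounding gadget unit in $P$, so $R$ uses at most $\ell_2/5$ edges, yielding a solution of $I$ of length $\le \ell_2/5$.

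The step I expect to be the main obstacle is the simplicity argument in the forward direction at vertices that $R$ visits many times: one must guarantee that the gadget supplies enough \emph{pairwise vertex-disjoint} crossings for all simultaneous passes and, at the same time, that the mandatory waypoint vertex is hit \emph{exactly once}. This is precisely where the clique-with-subdivision expansion (rather than the naive line graph) is needed, and it is also where the equal-length property of the waypoint and non-waypoint crossings must be invoked, so that rerouting a single pass through $w_v$ does not perturb the uniform factor $5$. The remaining bookkeeping (the exact value of the constant and the treatment of the endpoints, where Remark~II and Lemma~\ref{lemma:s=t} let us assume $s=t$) is then routine.
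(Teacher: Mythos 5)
Your proposal is correct and follows essentially the same route as the paper's own proof: a per-edge cost of $3$ for the subdivided edge plus $2$ per gadget crossing (only $1$ at the endpoints), with one crossing per waypoint rerouted through $v'$ at equal cost, simplicity of $P$ guaranteed by edge-disjointness of $R$ via distinct ports and distinct clique-subdivision vertices, and the converse obtained by projecting $P$ back to $G$ with the at-least-$5$-edges-per-traversal count giving $|R|\leq \ell_2/5$. The only difference is cosmetic: the paper also folds the preprocessing of capacities and weights into a unit-capacity, unit-weight simple graph into the same proof, which you treat as already given.
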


\begin{figure}[t]
\noindent\rule[0.5ex]{\linewidth}{0.4pt}\vspace{-0.8cm}
\begin{algorithm}\label{alg:waypoint-line-graph}
\algcaption{Waypoint Line Graph Construction}
\item \textbf{Input}: Graph $G=(V,E)$, with vertices $s,t,w_1,\dots,w_k \in V$.
\item \textbf{Output}: Graph $L_{R}(G)=(L_{R}(V),L_{R}(E))$, with vertices $s,t,w_1,\dots,w_k \in L_{R}(V)$.
\begin{enumerate}
\item Initialize $L_{R}(E)=E$ and $L_{R}(V)=V$, with the same $s,t,w_1,\dots,w_k$ allocation.
\item For each $v \in L_{R}(V)$
\begin{enumerate}
	\item  Order the incident edges arbitrarily, denoting them locally as $e_{1},\dots,e_{\delta(v)}$, where $\delta(v)$ denotes the vertex degree.
\item Replace every vertex $v \in L_{R}(V)$ with a clique of $\delta(v)+1$ vertices, denoted $K_{\delta(v)}(v)$, naming the vertices locally as $v_{1},\dots,v_{\delta(v)}, v'$, setting any $s,t,w_1,\dots$ on $v$ to $v'$.
\item For the edges incident to the original $v \in L_{R}(V)$, connect the corresponding $e_{i}$ to their $v_{i}$, for $1\leq i \leq \delta(v)$.
\end{enumerate}
\item For each $e \in L_{R}(E)$ not contained in any $K_{\delta(v)}(v)$
\begin{enumerate}
	\item Replace $e$ by a path of three edges and two vertices.
\end{enumerate}
\item For each $e \in L_{R}(E)$ contained in any $K_{\delta(v)}(v)$, not incident to any $v'$
\begin{enumerate}
	\item Replace $e$ by a path of two edges and one vertex.
\end{enumerate}
\end{enumerate} 
\end{algorithm}
\vspace{-0.4cm}
\noindent\rule[0.5ex]{\linewidth}{0.4pt}
\vspace{-0.5cm}
\end{figure}

\begin{proof}%
Given a graph with integer edge weights and capacities, we first transform it into a graph $G$ with unit edge weights and capacities, while being distance-preserving.
First, we replace each edge with a capacity $\geq 1$ with two parallel edges of identical weight, cf.~Lemma~\ref{lemma:twice}.
Second, each edge $e$ with a weight of $\omega(e)$ is replaced by a path of length $\omega(e)$, 
which yields a distance-preserved graph with unit capacities.
Third and last, to remove parallel edges, we place a vertex on \emph{every} edge, obtaining the desired graph properties.

We now start with the case that an edge-disjoint route $R$ of 
length $\ell_1$ exists in $G$, solving $I$.
We translate $R$ into a vertex-disjoint path $P$ in $L_{R}(G)$
of length $5\ell_1$ as follows:
First, every edge $e \in E$ is represented by a path of length 
$3$ in $L_{R}(G)$, resulting in a length of $3\ell_1$ if we could pass
through the ``clique-expansions'' $K_{\delta(v)}(v)$ for free. 
Second, observe that all shortest paths through these 
$K_{\delta(v)}(v)$ have a length of $2$ -- with sufficient 
vertex-disjoint paths to represent all crossings through $v \in V$ 
performed by $R$. 
If $v \in R$ contains a waypoint (or $s,t$), we let one of the crossings in $L_{R}(G)$ pass through $v'$.
When starting on $s$ or ending on $t$, the path-length through the respective is $K_{\delta(v)}(v)$ is only 1.
As such, we showed the existence of a path $P$ from $s$ to $t$ through all waypoints in $L_{R}(G)$ with a length of $3x+2(x-1)+1+1=5\ell_1$.

It is left to show that if such a $P$ of length $\ell_2$ exists 
on $L_{R}(G)$, then there is a $R$, solving $I$, of length $\leq \ell_2/5$. 
We can think of $P$ as follows: It starts in some $K_{\delta(v_1)}(v_1)$ on $s$, 
passes through some $K_{\delta(v_2)}(v_2),\dots,K_{\delta(v_r)}(v_r)$ connected by 
paths of length 3, until it ends in some $K_{\delta(v_{r+1})}(v_{r+1})$ on $t$. 
In this chain, the $K_{\delta(v_i)}(v_i)$s do not need to be pairwise disjoint. 
Observe that every of these paths of length 3 between the $K_{\delta(v)}(v)$s in 
$L_{R}(G)$ directly maps to an edge in $G$. By also mapping the $v \in V$ 
to the extended $K_{\delta(v)}(v)$s, we obtain a one-to-one mapping between 
edge-disjoint walks in $G$ and vertex-disjoint paths in $L_{R}(G)$, 
where the extended $K_{\delta(v)}(v)$s are contracted to a single vertex.
Following the thoughts for the first case, we can shorten $P$ to a path 
$P'$ such that every subsequent traversal of a $K_{\delta(v_i)}(v_i)$ in the 
chain only has a length of 2, which is shortest possible, with the paths 
through $K_{\delta(v_1)}(v_1)$ and $K_{\delta(v_{r+1})}(v_{r+1})$ 
having a length of 1 each (except for the case of $r=0$, which means we can set $|P'|=0$). 
Performing the translation of the first case in reverse, we obtain a 
solution $W$ for $I$ of length $|P'|/5 \leq |P|/5=y/5$.
\end{proof}

The statement of Theorem~\ref{thm:special-line-graph} also has 
implications for shortest solutions.
If there is a shortest vertex-disjoint path in $L_{R}(G)$ of 
length $\ell_2'$, but there exists a solution in $G$ of length $x\ell_1'<\ell_2'/5$, then a solution of length less than $\ell_2'$ would also exist in $L_{R}(G)$, a contradiction.

Let us also briefly consider runtime implications.
When modifying the graph $G=(V,E)$ to be a simple graph $G'=(V',E')$ with unit edge capacities and unit weights, let $f(n) \geq 1$ be the largest edge weight $\omega(e)$ in $G$.
It then holds that $|V'|$ and $|E'|$ are each at most $|V|+4|E|f(n)$.
When considering $L_{R}(G')=(V'_{R},E'_{R})$, we obtain an upper bound (by a large margin) of $7|V|^2 +48|V||E|f(n)+96|E|^2(f(n))^2$ for both $|V'_{R}|$ and $|E'_{R}|$, respectively.
We further bound this term from above (again, by a large margin) via $199|V|^4\left(f(n)\right)^2$.
While this bound can be improved by careful inspection, especially in the size of the exponent, it suffices for the purposes of polynomiality.

\begin{corollary}\label{corr:special-line-graph-shortest}
Let $A$ be an algorithm that finds a shortest vertex-disjoint solution for a path from $s$ to $t=s$ through all specified (waypoint) vertices, with the largest edge weight being of size $f(n)$, in a runtime of $a(k,|V|,f(n))$.
 Using the waypoint line graph construction, algorithm $A$ can be used to find a shortest solution to WRP\xspace in a runtime of $\mathfrak a\left(k,199|V|^4\left(f(n)\right)^2,1\right)$.
\end{corollary}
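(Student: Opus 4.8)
The plan is to chain together the waypoint line graph construction of Algorithm~\ref{alg:waypoint-line-graph}, the length correspondence of Theorem~\ref{thm:special-line-graph}, and the explicit size bound derived immediately after it. First I would reduce an arbitrary instance of WRP\xspace to one with $s=t$ by invoking Remark~II (Lemma~\ref{lemma:s=t}), so that the hypothesis ``path from $s$ to $t=s$'' of algorithm $A$ applies verbatim; this reduction is distance-preserving and increases the instance size only by a constant, so it does not affect the asymptotics. I would then apply the preprocessing already used in the proof of Theorem~\ref{thm:special-line-graph}: replace each edge by two parallel edges (Lemma~\ref{lemma:twice}), subdivide every edge of weight $\omega(e)$ into a path of $\omega(e)$ unit-weight edges, and finally place a vertex on every edge to obtain a simple graph $G'$ with unit capacities and unit edge weights. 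The key point is that every edge of $G'$ now has weight $1$, so the ``largest edge weight'' parameter passed to $A$ on the resulting line graph is exactly $1$, which accounts for the third argument of $\mathfrak a$.

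Next I would build $L_{R}(G')$ via Algorithm~\ref{alg:waypoint-line-graph} and run $A$ on it to obtain a shortest vertex-disjoint path from $s$ to $t=s$ through all $k$ waypoints; the waypoints survive the construction as the distinguished clique vertices $v'$, so $k$ is unchanged. By Theorem~\ref{thm:special-line-graph}, a vertex-disjoint path of length $\ell_2$ in $L_{R}(G')$ yields an edge-disjoint WRP\xspace route of length at most $\ell_2/5$, and conversely every WRP\xspace route of length $\ell_1$ lifts to a vertex-disjoint path of length $5\ell_1$. The remark following that theorem upgrades this to the statement that a \emph{shortest} path in $L_{R}(G')$ maps back to a \emph{shortest} route in $G'$, since otherwise a strictly shorter route would lift to a strictly shorter path, contradicting the optimality of the output of $A$. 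As the preprocessing is distance-preserving, optimality in $G'$ transfers back to the original $G$, and reversing the construction on the output of $A$ therefore produces an optimal solution to the original WRP\xspace instance.

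For the runtime I would use the size estimate already established before the corollary: $L_{R}(G')$ has both $|V'_{R}|$ and $|E'_{R}|$ bounded above by $199|V|^{4}(f(n))^{2}$. Hence the single call to $A$ costs $a\!\left(k,\,199|V|^{4}(f(n))^{2},\,1\right)$. All remaining work---the $s=t$ reduction, the distance-preserving preprocessing, constructing $L_{R}(G')$, the single invocation of $A$, and reversing the translation---is polynomial in the (already polynomial) size of $L_{R}(G')$, and is therefore absorbed into the stated bound $\mathfrak a\!\left(k,\,199|V|^{4}(f(n))^{2},\,1\right)$, where $\mathfrak a$ denotes $a$ together with this polynomial overhead.

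The main obstacle I anticipate is not the runtime bookkeeping but verifying that optimality is genuinely preserved in \emph{both} directions of the translation: one must rule out both that $A$ returns a path whose back-translation is suboptimal for WRP\xspace, and that no WRP\xspace route could beat it. This is exactly what the factor-$5$ rigidity of Theorem~\ref{thm:special-line-graph} buys us---each edge contributing a fixed cost of $3$ and each traversal of a clique $K_{\delta(v)}(v)$ a fixed cost of $2$---but it relies on the fact that the shortest $P$ returned by $A$ can always be locally straightened so that every clique crossing uses a length-$2$ sub-path (with the endpoint cliques using length $1$). I would make this explicit by reusing the path-shortening argument already sketched in the proof of Theorem~\ref{thm:special-line-graph}, so that the $\ell_2/5$ bound is attained exactly rather than merely as an inequality.
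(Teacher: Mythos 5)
Your proposal is correct and follows essentially the same route as the paper: the corollary is obtained there as a direct consequence of Theorem~\ref{thm:special-line-graph}, the optimality remark following it, and the explicit size bound $199|V|^4\left(f(n)\right)^2$ on $L_{R}(G')$, which is exactly the chain you assemble (including the $s=t$ reduction of Remark~II and the unit-weight preprocessing). No gaps to report.
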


In particular, if $A$ has a runtime of $2^kn^{O(1)}$ to find a cycle through $k$ specified vertices in an $  n$-vertex graph, we obtain a runtime of $2^k(199n^4\left(f(n)\right)^2)^{O(1)}$ for WRP\xspace .
 If $f(n)$ is a constant-value function or a fixed polynomial, this reduces to $2^kn^{O(1)}$ for $n\geq 2$.
If $\mathfrak a$ is a polynomial function w.r.t.~$k,|V|,f(n)$, it will also be a polynomial function in $n$ for the transformed WRP\xspace instances with inputs $k,199n^4\left(f(n)\right)^2,1$, assuming $f(n)$ is a constant-value function or a fixed polynomial, as we can assume $k<n$.

\section{NP-Hardness}\label{sec:nphard}

Given our polynomial-time algorithms to compute shortest walks through arbitrary waypoints 
on bounded treewidth graphs
as well as to compute shortest walks on arbitrary graphs through a bounded number of waypoints, 
one may wonder whether exact polynomial time solutions also
 exist for more general settings. 
In the following, we show that this is not the case: 
in both dimensions (number of waypoints and more general graph families), 
we inherently hit computational complexity bounds.
Our hardness results follow by reduction from a special subclass
of NP-hard Hamiltonian cycle problems~\cite{akiyama1980np,DBLP:conf/cg/Buro00}:

\begin{theorem}\label{thm:undirected-unordered-npc-3}
WRP\xspace is NP-hard for any graph family of degree at most 3,
for which the Hamiltonian Cycle 
problem is NP-hard.
\end{theorem}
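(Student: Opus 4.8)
The plan is to reduce the Hamiltonian Cycle problem directly to \WRP while leaving the underlying graph — and hence its degree bound — untouched, so that the hardness transfers verbatim to exactly the family in the hypothesis. Given a Hamiltonian Cycle instance $G=(V,E)$ with $|V|=n$ and maximum degree at most $3$, I would build a \WRP instance on the \emph{same} graph $G$ by assigning unit weight $\omega(e)=1$ and unit capacity $c(e)=1$ to every edge, fixing an arbitrary vertex $v_0\in V$ as source and destination with $s=t=v_0$ (legitimate by Remark~II), and declaring every remaining vertex a waypoint, i.e.\ $\wps=V\setminus\{v_0\}$ so that $k=n-1$. This transformation is computable in linear time and acts as the identity on the vertex and edge sets, so the produced instance lies in the same graph family and still satisfies the degree-$3$ bound; this structure-preservation is precisely what makes the theorem hold for an \emph{arbitrary} such family.

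To establish correctness I would prove that $G$ admits a Hamiltonian cycle if and only if the shortest route of the constructed instance has length exactly $n$. For the forward direction, a Hamiltonian cycle $C=v_0,v_{\pi(1)},\dots,v_{\pi(n-1)},v_0$ is a closed walk that traverses each of its $n$ edges once (respecting unit capacities), passes through every waypoint, and returns to $s=t=v_0$; decomposing $C$ into its $n$ single-edge segments $P_1,\dots,P_n$, with each waypoint serving as the endpoint of exactly two consecutive segments, exhibits a feasible route of length $n$, so the optimum is at most $n$. For the converse, any feasible route is a closed walk containing all $n$ vertices of $G$ (the $n-1$ waypoints together with $v_0$), and a closed walk visiting $n$ distinct vertices must use at least $n$ edges; hence every route has length $\ge n$. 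If some route attains length exactly $n$, it is a closed walk on $n$ edges hitting $n$ distinct vertices, which forces each vertex to be entered and left exactly once — that is, the route is itself a Hamiltonian cycle of $G$. Combining both directions, the optimum equals $n$ iff $G$ is Hamiltonian, so deciding whether the shortest route is at most $n$ decides Hamiltonicity.

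The step I expect to require the most care is the backward combinatorial argument that a minimum-length route cannot ``cheat'': I must rule out that a walk of length $n$ revisits any vertex, which follows from a counting/parity argument on the $n$ vertex-occurrences of a closed walk with $n$ edges when all $n$ vertices are required to appear, and I must reconcile the informal closed-walk picture with the formal route definition as a concatenation $P_1\oplus\cdots\oplus P_{k+1}$ in which each waypoint is an endpoint of exactly two segments and $v_0$ is an endpoint of $P_1$ and $P_{k+1}$. Once this bookkeeping is settled, the reduction is polynomial and family-preserving, so \WRP inherits NP-hardness on every degree-$\le 3$ graph family on which Hamiltonian Cycle is NP-hard; instantiating this with the known hard families (such as grid graphs of maximum degree $3$ and $3$-regular bipartite planar graphs) then yields the concrete hardness corollaries.
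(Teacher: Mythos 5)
Your reduction instance is exactly the paper's: same graph, unit capacities, $s=t=v_0$ arbitrary, $\wps=V\setminus\{v_0\}$. The correctness argument, however, is genuinely different. The paper never reasons about lengths: it shows that \emph{any} feasible route at all is forced to be a Hamiltonian cycle, by directing the edges along the walk and observing that each vertex needs in-degree and out-degree at least one, that unit capacities make the traversed edge multiset a subset of $E$, and that the degree bound $3$ then pins every vertex to exactly two traversed edges, so the connected walk-induced subgraph is a single spanning cycle. This establishes that even the \emph{feasibility} question is NP-hard on these families, and the degree-$3$ hypothesis is what makes that work (it is also what the paper later exploits, e.g.\ when transferring hardness to the $k$-Cycle problem, where edge-disjointness automatically gives vertex-disjointness). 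You instead prove hardness of the \emph{optimization} version via the threshold ``optimum $=n$ iff $G$ is Hamiltonian'': a closed walk through $n$ distinct vertices has at least $n$ edge traversals, and one of exactly $n$ traversals through $n$ distinct vertices is a simple spanning cycle. This counting argument is correct (and your bookkeeping with the $k+1=n$ single-edge segments matches the route definition), and notably it does not use the degree bound at all, so it proves NP-hardness of the shortest-walk problem on \emph{every} graph family on which Hamiltonian Cycle is hard. What you give up is the stronger conclusion the paper gets: on degree-$\le 3$ families with unit capacities, merely deciding whether any feasible waypoint walk exists is already NP-hard, whereas your argument needs the length comparison and says nothing about bare feasibility. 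Both arguments validly prove the theorem as stated.
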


\begin{proof}%
Let $G=(V,E)$
be a graph from a graph family of 
degree at most~$3$ 
for which the Hamiltonian cycle problem is NP-hard 
(e.g.,~\cite{akiyama1980np,ARKIN2009582,DBLP:conf/cg/Buro00,DBLP:journals/jal/PapadimitriouV84}), set all edge capacities to~$1$, %
take an arbitrary vertex
$v\in V$, and set $s:=v=:t$. Set $\ensuremath{\mathscr{W}}:=V\setminus\left\{v\right\}$. 
Consider a route $R$ (a feasible walk) which starts and
ends at $v$ and visits all other vertices. We claim that $R$ is a 
Hamiltonian cycle for $G$; on the other hand, it
is clear that if there is a Hamiltonian cycle of $G$ then it satisfies
the requirements of $R$. We start at $v$ and walk along $R$,
directing edges along the way. Every vertex in the resulting graph
has at least one outgoing directed edge and at least one incoming
directed edge. On the other hand, as the edge capacities are $1$, $R$
cannot reuse any edge, so the number of directed
edges on every vertex must be even: in fact, the number of
incoming edges equals the number of outgoing edges. The maximum degree
of $G$ is~$3$, according to the last two observations, every vertex appears in exactly two
edges of the walk $R$. As $R$ induces a connected subgraph and all its vertices are of degree two, we conclude that $R$ is a single
cycle. As $R$ visits all vertices in $G$, thus it is also a
Hamiltonian cycle.
\end{proof}

We have the following implication for grid graphs~\cite{ARKIN2009582,DBLP:conf/cg/Buro00,DBLP:journals/jal/PapadimitriouV84} of maximum degree 3, and use similar ideas for the class of 3-regular bipartite planar graphs.

\begin{corollary}\label{thm:npc-special-graphs-extension}
For any fixed constant $r\geq 1$ it holds that WRP\xspace is NP-hard on grid graphs of maximum degree $3$, already for $k \in O(n^{1/r})$ waypoints.
\end{corollary}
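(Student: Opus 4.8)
The plan is to combine the hardness result of Theorem~\ref{thm:undirected-unordered-npc-3} with a padding construction that artificially inflates the vertex count without changing the set of feasible solutions. We already know from Theorem~\ref{thm:undirected-unordered-npc-3} that \WRP is NP-hard on grid graphs of maximum degree~$3$ (these arise as a Hamiltonian-cycle-hard family, e.g.~\cite{DBLP:conf/cg/Buro00}) when all $n-1$ non-source vertices are waypoints, i.e.~for $k = n-1$. The goal is to reduce the waypoint ratio from $k \approx n$ down to $k \in O(n^{1/r})$ while preserving both NP-hardness and membership in the grid-graph-of-maximum-degree-$3$ class. The natural idea is to attach a large waypoint-free gadget to the original instance so that the total vertex count grows polynomially but the number of waypoints stays fixed at the original $k$.

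Concretely, I would start from a hard instance $G=(V,E)$ with $s=t=v$, waypoints $\ensuremath{\mathscr{W}} = V \setminus \{v\}$, and all capacities equal to~$1$. Fix $r' = \lceil r \rceil$. I would then build $G' = (V', E')$ by adjoining a fresh vertex $v'$ joined to $s$ by a single unit-capacity edge, and hanging off $v'$ a waypoint-free structure on roughly $|V|^{r'}$ vertices. The key correctness observation is that the edge $(s,v')$ has capacity~$1$, so any walk that enters the appended gadget through $v'$ can never come back to $s$; since $s=t$ must be both the start and the end of the route, no feasible route can use the edge $(s,v')$ at all. Hence the appended gadget is inert: the set of feasible (and shortest) routes for \WRP on $G'$ is in bijection with those on $G$, giving condition that solutions are preserved, while $|V'| \in \Theta(|V|^{r'})$ forces $k = |V|-1 \in O(|V'|^{1/r'}) \subseteq O(n^{1/r})$ in terms of $n = |V'|$. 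The whole construction is clearly polynomial-time.

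The main obstacle is that the plain clique-padding used in the general-graph version (Theorem~\ref{thm:npc-poly-wps}) is \emph{not} available here: a clique has large degree and is not a grid graph, so it would destroy exactly the structural properties the corollary is supposed to maintain. I therefore need the padding gadget itself to be a grid graph of maximum degree~$3$ with $\Theta(|V|^{r'})$ vertices that can be glued to $v'$ without raising any degree above~$3$ and without introducing new waypoints. The cleanest way to do this is to use a long path or a thin grid snake (a ``comb'' or serpentine grid) attached at a single degree-$1$ endpoint to $v'$; such a gadget is trivially a subgraph of the integer grid, keeps maximum degree~$3$, contributes the required polynomial number of vertices, and is entirely waypoint-free so it cannot affect feasibility by the capacity argument above. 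One must also check that attaching $v'$ and the gadget keeps $s$ within degree~$3$ in the grid embedding, which is ensured because $s$ originally has degree at most~$3$ and we only need to route one additional pendant edge; if $s$ already has degree~$3$ in $G$, a trivial local re-embedding or the insertion of a subdividing vertex on one incident edge frees up a grid direction for the pendant edge $(s,v')$.

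Finally, I would note that the bijection of solutions immediately transfers the shortest-walk formulation as well, since the inert gadget adds zero to the cost of any feasible route, so optimal solutions on $G'$ project exactly onto optimal solutions on $G$. The analogous statement for $3$-regular bipartite planar graphs, alluded to at the end, would follow by the same capacity-based inertness argument but using a degree-respecting bipartite planar padding gadget instead of a grid snake; I would handle that case with a parallel construction rather than reusing the grid gadget verbatim.
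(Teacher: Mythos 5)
Your overall strategy---pad the hard instance from Theorem~\ref{thm:undirected-unordered-npc-3} with an inert, waypoint-free pendant structure of size $\Theta(n^{r'})$ attached by a single unit-capacity edge, argue that no closed walk ending at $t=s$ can ever enter it, and conclude $k\in O(n^{1/r})$ in the padded graph---is exactly the idea of the paper's proof. However, there is a genuine gap in how you preserve the class ``grid graph of maximum degree $3$.'' You attach the pendant edge $(s,v')$ at the source $s$, and your repair for the case $\deg(s)=3$ does not work: neither a local re-embedding nor subdividing an edge incident to $s$ changes the degree of $s$, so after adding the pendant edge $s$ has degree $4$ and the maximum-degree-$3$ property is destroyed. (The embedding is not the obstruction; the degree bound is.) In addition, growing a long ``grid snake'' out of a vertex that may lie in the interior of the drawing raises collision and unintended-adjacency issues that you only wave at.

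The paper avoids both problems by choosing the attachment point via an extremal argument rather than using $s$: take a grid drawing, and among the vertices of smallest $x$-coordinate pick the one $v$ with smallest $y$-coordinate. By construction $v$ has no neighbour to its left or below, hence degree at most $2$, and the entire half-plane of smaller $x$-coordinates is vertex-free; so one can attach $v'$ at $(x(v)-1,y(v))$ and extend a straight path of length $n^{r}$ leftwards along that row, keeping the graph a grid graph of maximum degree $3$ with no placement conflicts. Your capacity-based inertness argument is indifferent to where the pendant is attached (it only needs the bridge edge and the requirement that the walk return to $t=s$), so your proof is repaired simply by replacing ``attach at $s$'' with this extremal-vertex attachment; as written, though, the degree argument is incorrect and the corollary's class restriction is not maintained. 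The same caveat applies to your sketched variant for $3$-regular bipartite planar graphs, where the paper has to do genuinely more work (subdividing an edge to create low-degree attachment points and gluing in binary-tree gadgets closed up by an extra cycle and matching to restore $3$-regularity, bipartiteness and planarity) rather than merely hanging a degree-respecting padding off one vertex.
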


\begin{proof} %
Our proof will be a reduction from~\cite{DBLP:conf/cg/Buro00} which shows that 
the Hamiltonian cycle problem 
is NP-hard on grid graphs of maximum degree 3. Our reduction will not change
these properties of the graph.
We also fix some arbitrary $r \in \mathbb{R}_{\geq 1}$, setting $r'=\left\lceil r \right\rceil$.

For simplicity, we restrict ourselves to the grid graphs $\mathcal{G}$ of maximum degree 3  obtained in~\cite{DBLP:conf/cg/Buro00} by Buro's NP-hardness reduction.
Restricting to $\mathcal{G}$ allows us to follow the arguments from~\cite{DBLP:conf/cg/Buro00} to obtain an appropriate embedding in polynomial time.\footnote{As Arkin et al.~\cite{ARKIN2009582} point out, the first NP-hardness proof for grid graphs $\mathcal{G}^{\dagger}$ of maximum degree $3$  is in~\cite{DBLP:journals/jal/PapadimitriouV84}. Following the references given in~\cite{ARKIN2009582}, it is also possible to embed all $G^{\dagger} \in \mathcal{G}^{\dagger}$ in polynomial time.}
For any WRP\xspace on $G \in \mathcal{G}$ with $k<n$ waypoints $\ensuremath{\mathscr{W}}$,
create a grid drawing in the plane. %
From this drawing, from all vertices with the smallest $x$-coordinates,
pick the vertex $v$ with the smallest $y$ coordinate. 
Say $v$ has coordinates $\left(x(v),y(v)\right)$.
By construction, $v$ has at most a degree of two and there are no vertices 
with a smaller $x$-coordinate than $v$.
As such, we can create a vertex $v'$ with coordinates $\left(x(v)-1,y(v)\right)$, 
and connect it to $v$ with an edge of unit capacity.

Observe that the set of solutions for WRP\xspace was not altered: Once $v'$ is visited, 
no walk can ever leave it.
We now extend this idea, creating a path of length $n^r$, placing its vertices at the coordinates $\left(x(v)-2,y(v)\right)$, $\left(x(v)-3,y(v)\right)$,~$\dots$.
Denote this extended graph by $G'=(V',E')$ and observe that its main properties are preserved,
however, $k \in O(|V'|^{1/r'})$.%
\end{proof}

\begin{corollary}\label{thm:npc-special-graphs-extension-1}
For any fixed constant $r\geq 1$ it holds that WRP\xspace is NP-hard on 3-regular bipartite planar graphs, already for $k \in O(n^{1/r})$ waypoints.
\end{corollary}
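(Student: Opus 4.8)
The plan is to mirror the padding idea of Corollary~\ref{thm:npc-special-graphs-extension}, but to overcome the fact that we can no longer hang a pendant path off a low-degree vertex: in a $3$-regular graph every vertex is already saturated. As a starting point I would take a graph $G$ from a family where the Hamiltonian cycle problem is NP-hard on $3$-regular bipartite planar graphs~\cite{akiyama1980np}, set all capacities to $1$, fix $s=t=v$ for an arbitrary $v \in V(G)$, and declare $\wps = V(G)\setminus\{v\}$. By the argument of Theorem~\ref{thm:undirected-unordered-npc-3}, a feasible \WRP route on this instance exists if and only if $G$ is Hamiltonian, and here $k=|V(G)|-1$.

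To push $k$ down to $O(n^{1/r})$ I must inflate the vertex count by a factor $\Theta(n^{r-1})$ using only \emph{non-waypoint} vertices, while keeping the host graph $3$-regular, bipartite, and planar. I would do this by replacing a single edge $ab$ of $G$ with a large ``wire'' gadget $P$ that simulates $ab$. Concretely, take any cubic bipartite planar graph $Q$ of arbitrary (even) order $\Theta(n^r)$ --- e.g.\ a prism over a long even cycle --- pick two independent edges $x_1y_1,x_2y_2$ lying on a common face of $Q$, with the $x_i$ in one colour class and the $y_i$ in the other, delete them, and introduce two \emph{port} vertices $p_a,p_b$, joining $p_a$ to $x_1,x_2$ and $p_b$ to $y_1,y_2$. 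Then every vertex of $Q$ keeps degree $3$, while $p_a,p_b$ have degree $2$; giving $p_a$ the colour opposite to that of the $x_i$ keeps $P$ bipartite. Finally set $G' = (G-ab)\cup P$ together with the edges $ap_a$ and $bp_b$: since $a,b$ each lost one incident edge and gained one, $G'$ is $3$-regular; placing $P$ into the face vacated by $ab$ and aligning the two bipartitions (possible because $a,b$ lie in different classes, as do $p_a,p_b$) keeps $G'$ bipartite and planar. None of the $\Theta(n^r)$ new vertices are waypoints, so $k=|V(G)|-1 \in O(|V(G')|^{1/r})$, exactly as in Corollary~\ref{thm:npc-special-graphs-extension}.

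It then remains to verify that $P$ is a faithful edge-simulator, i.e.\ that $G$ is Hamiltonian iff $G'$ admits a feasible \WRP route. For the forward direction, given a Hamiltonian cycle $C$ I would reroute $C$ through a $p_a$--$p_b$ path of $P$ whenever $C$ uses $ab$ (such a path exists because $P$ is connected), and leave $C$ untouched otherwise. For the converse, I would reason on the even (unit-capacity, hence edge-disjoint) closed subgraph $H^\ast$ traced by a feasible route: a handshake/parity argument at the ports shows that $H^\ast$ either avoids $P$ entirely or enters it as a single $p_a$--$p_b$ trail, since using exactly one external port edge is impossible by parity, and any closed sub-loop inside $P$ must attach to this trail because $H^\ast$ is connected. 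Contracting $P$ back to the edge $ab$ then projects $H^\ast$ to a spanning subgraph of $G$ in which every (waypoint) vertex has degree exactly $2$ --- forced by $\deg_{G'}\!=\!3$ together with evenness --- and which is connected, i.e.\ a Hamiltonian cycle. The property of being $3$-regular, bipartite, and planar is preserved throughout.

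The main obstacle I anticipate is the gadget design under the simultaneous constraints of $3$-regularity (which rules out the pendant trick of the grid case), bipartiteness, and planarity: the two ports must be created without producing odd cycles (forcing the two-colour-class, two-independent-edge deletion above rather than a naive subdivision) and must be placed on a common face so that $ap_a$ and $bp_b$ can be drawn without crossings. The other point needing care is the parity/connectivity case analysis of the converse, in particular ruling out spurious closed components of $H^\ast$ trapped inside $P$; this, however, follows the same Eulerian reasoning already used in the proof of Theorem~\ref{thm:undirected-unordered-npc-3}.
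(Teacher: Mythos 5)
Your reduction is correct and follows the same overall route as the paper: start from NP-hardness of Hamiltonian cycle on 3-regular bipartite planar graphs, turn it into a WRP instance exactly as in Theorem~\ref{thm:undirected-unordered-npc-3}, and then splice a large waypoint-free, 3-regular, bipartite, planar pad into a single edge so that $k\in O(n^{1/r})$ while feasibility is preserved by the degree-2/parity argument (every traced vertex has even degree at most 3, the two port edges form a cut of size two, so the route either avoids the pad or crosses it as one port-to-port path). Where you differ is the gadget itself: the paper first subdivides the chosen edge into a path $u\text{--}v\text{--}v'\text{--}w$ and then restores 3-regularity by hanging two full binary trees off $v,v'$ and wiring their leaves into a cycle with a nested matching (Figure~\ref{fig:np3}), whereas you splice in a generic two-port ``wire'' obtained from an arbitrary large cubic bipartite planar graph (e.g.\ a prism over a long even cycle) by deleting two independent same-colour-pattern edges and attaching port vertices. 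Your gadget is arguably cleaner and more modular, but it has one detail the colour condition alone does not settle: to draw $p_a x_1,p_a x_2,p_b y_1,p_b y_2$ inside the common face without crossings, the four attachment vertices must occur non-interleaved (as $x_1,x_2,y_1,y_2$ or $x_1,x_2,y_2,y_1$) in the cyclic order of that face. Your most natural instantiation, the two rungs of a square face of the prism, satisfies the colour condition but is interleaved and forces a crossing; choosing instead two edges at odd distance along a long face (e.g.\ the outer cycle of the prism) fixes this, so the gap is local and easily repaired. With that choice made explicit, your argument, including the forward rerouting through a $p_a$--$p_b$ path and the contraction of the pad back to the edge $ab$ in the converse, goes through and yields the corollary just as the paper's construction does.
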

\begin{proof}%
Our proof will be a reduction from~\cite{akiyama1980np} which shows that the Hamiltonian cycle problem 
is NP-hard on 3-regular bipartite planar graphs, denoted by $\mathcal{G}_3$. 
Observe that we can assume unit edge capacities, without losing the NP-hardness property.
Again, our reduction will not change these properties of the graph, and analogously, we fix some arbitrary $r \in \mathbb{R}_{\geq 1}$, setting $r'=\left\lceil r \right\rceil$.

We can pick any edge $e=(u,w)$ in a graph $G_3 \in \mathcal{G}_3$ and replace the edge with a path of length three and capacity one, denoting the added vertices by $v$ and $v'$. 
The graph is still bipartite and planar, but $v$ and $v'$ violate the 3-regularity.
Notwithstanding, the feasibility of WRP\xspace stays unchanged: The ``capacity'' of the path between $u$ and $w$ via $v$ and $v'$ is still one.
Now, we create two full binary trees $T, T'$ with a roots $T_v,T_{v'}'$, each having $2^{r'}-1$ vertices and $2^{r'-1}$ leaves.
We can connect $T_v$ to $v$ and $T_{v'}'$ to $v'$, preserving bipartiteness and planarity.
A small examplary construction can be found in Figure~\ref{fig:np3}, already illustrating the next construction steps as well.
Observe that all vertices except the leaves of $T,T'$ have a degree of exactly three.

Next, we pick a standard embedding of $T,T'$ in the plane, s.t., w.l.o.g., the leaves $v_1,\dots,v_{2^{r'-1}}$ have coordinates $(0,1),\dots,(0,2^{r'-1})$, similar for the leaves of $T'$ with $(0,2^{r'-1}+1),\dots,(0,2 \cdot 2^{r'-1})$.

We place a vertex $v^m_{(1,2)},\dots,v^m_{(2^{r'-1}-1, 2^{r'-1})}$ 
between each consecutive leaf of 
$T$, $2^{r'-1}-1$ in total, same for $T'$ 
with vertices $v^{m\prime}_{(1,2)},\dots,v^{m\prime}_{(2^{r'-1}-1, 2^{r'-1})}$. 
Next, we connect these $2\cdot 2^{r'-1}-2$ vertices $V^m$ \& $V^{m\prime}$ with each leaf vertex next to them, also
 $v_{2^{r'-1}}$ with $v_1'$ and $v_1$ 
with $v'_{2^{r'-1}}$, forming a cycle $C$ through all leaves and the new $2\cdot 2^{r'-1}-2$ vertices.
All vertices, except the aforementioned $V^m$ \& $V^{m\prime}$ with a degree of 2, have a degree of 3.
Due to the bipartite property, one can color the (former) leaf vertices and $V^m$ \& $V^{m\prime}$, with, e.g., blue and red: W.l.o.g., we pick red for the (former) leaves of $T$ and $V^{m\prime}$, blue for the (former) leaves of $T'$ and $V^m$.
As the last construction step, we connect $V^m$ and $V^{m\prime}$, as follows, inside the inner face of the cycle $C$: First the two outermost vertices, $v^m_{1,2}$ with $v^{m\prime}_{(2^{r'-1}-1, 2^{r'-1})}$, then going analogously inwards, lastly connecting $v^m_{(2^{r'-1}-1, 2^{r'-1})}$ and $v^{m\prime}_{1,2}$.
Denote this extended graph by $G_3'=(V_3',E_3')$ and observe that its main properties are preserved,
however, $k \in O(|V_3'|^{1/r'})$.
\end{proof}
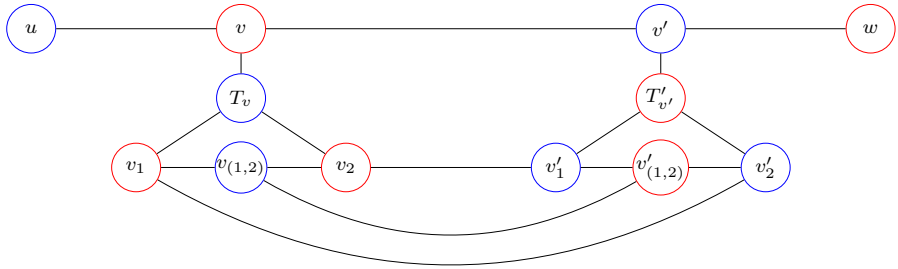
\begin{figure}[t]
\vspace{0.3cm}
\resizebox{\columnwidth}{!}{ 
	\centering
	\begin{tikzpicture}[auto]
	\node [markovstate,draw=blue!200] (ol) at (-3,0) {$u$};
	\node [markovstate,draw=red!200] (or) at (9,0) {$w$};
	
	\node [markovstate,draw=red!200] (v) at (0,0) {$v$};
	\node [markovstate,draw=blue!200] (r) at (0,-1) {$T_v$};
	\node [markovstate,draw=red!200] (v1) at (-1.5,-2) {$v_1$};
	\node [markovstate,draw=red!200] (v2) at (1.5,-2) {$v_2$};
	\node [markovstate,draw=blue!200] (v12) at (0,-2) {$v_{(1,2)}$};
	
	\node [markovstate,draw=blue!200] (v') at (6.0,0) {$v'$};
	\node [markovstate,draw=red!200] (r') at (6.0,-1) {$T_{v'}'$};
	\node [markovstate,draw=blue!200] (v1') at (4.5,-2) {$v_1'$};
	\node [markovstate,draw=blue!200] (v2') at (7.5,-2) {$v_2'$};
	\node [markovstate,draw=red!200] (v12') at (6,-2) {$v'_{(1,2)}$};
	
	\draw (ol) to (v);
	\draw (v) to (v');
	\draw (v') to (or);
	
	\draw (v) to (r);
	\draw (v') to (r');
	
	\draw (r) to (v1);
	\draw (r) to (v2);
	
	\draw (v2) to (v12);
	\draw (v1) to (v12);
	
	\draw (r') to (v1');
	\draw (r') to (v2');

	\draw (v2') to (v12');
	\draw (v1') to (v12');
	
	\draw (v2) to (v1');
	
	\draw (v1) to [bend right] (v2');
	\draw (v12) to [bend right] (v12');
	\end{tikzpicture}
	}
	\caption{Examplary gadget construction with two full binary trees with two leaves each. The resulting graph is 3-regular bipartite and planar.
	}
	\label{fig:np3}
\end{figure}

Our proof techniques also apply to the $k$-Cycle problem studied by, e.g., Bj{\"o}rklund et al.~\cite{thore-soda}, whose solution is polynomial for logarithmic $k$.
All possible edge-disjoint solutions are also vertex-disjoint, due to the restriction of 
maximum degree at most 3.
\begin{corollary}\label{cor:last}
For any fixed constant $r\geq 1$ it holds that the $k$-Cycle problem is NP-hard on $1)$ $3$-regular bipartite planar graphs and $2)$ grid graphs of maximum degree $3$, respectively, already for $k \in O(n^{1/r})$.
\end{corollary}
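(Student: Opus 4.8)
The plan is to reuse, essentially verbatim, the two Hamiltonian-cycle reductions that already establish Corollaries~\ref{thm:npc-special-graphs-extension} and~\ref{thm:npc-special-graphs-extension-1}, and to observe that on the instances they produce the \WRP decision problem and the $k$-Cycle decision problem are literally the same problem. For each target family I would take the reduction output: for grid graphs of maximum degree $3$ the graph $G'$ of Corollary~\ref{thm:npc-special-graphs-extension}, and for $3$-regular bipartite planar graphs the graph $G_3'$ of Corollary~\ref{thm:npc-special-graphs-extension-1}. In both cases the instance has unit edge capacities, maximum degree at most $3$, a distinguished $s=t$, and a waypoint set $\mathscr{W}$ of size $k \in O(n^{1/r})$, and \WRP is already known to be NP-hard on it. I would then define the associated $k$-Cycle instance to be the \emph{same} graph with the prescribed vertex set $\mathscr{W}\cup\{s\}$, whose cardinality is $k+1 \in O(n^{1/r})$.

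The core step is to show that a feasible \WRP route exists on these instances if and only if a simple cycle through $\mathscr{W}\cup\{s\}$ exists, i.e.\ that the two solution sets coincide. For this I would invoke exactly the degree argument already used in the proof of Theorem~\ref{thm:undirected-unordered-npc-3}: a feasible route is a single connected, closed, edge-disjoint walk (unit capacities force edge-disjointness), so every visited vertex has even walk-degree; since the maximum degree is $3$, the walk-degree is forced to equal $2$ at every visited vertex. A connected closed walk all of whose visited vertices have walk-degree $2$ is a single simple cycle, so on maximum-degree-$3$ graphs every edge-disjoint solution is automatically vertex-disjoint. Because the route visits all of $\mathscr{W}$ and starts and ends at $s$, the resulting cycle passes through exactly $\mathscr{W}\cup\{s\}$ (plus possibly further unmarked vertices), which is precisely a $k$-Cycle solution; the converse direction, that such a simple cycle is a feasible route, is immediate.

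Since both reductions run in polynomial time, keep the graph inside the stated family, and guarantee $k\in O(n^{1/r})$, the established NP-hardness of \WRP on these families transfers verbatim to the $k$-Cycle problem for $k\in O(n^{1/r})$. I would additionally record that the auxiliary gadget vertices introduced by the two reductions --- the dead-end path hanging off $v$ in the grid case, and the binary-tree/leaf-cycle gadget in the $3$-regular case --- are \emph{not} waypoints, and that degree-one and pass-through gadget vertices cannot be traversed by a simple cycle in any way that alters which subsets of $\mathscr{W}\cup\{s\}$ a cycle may realise; this is exactly the equivalence the two corollaries already certify for \WRP, so no separate gadget analysis is needed. (If one prefers to parameterise $k$-Cycle by the number of prescribed vertices, that number is $k+1$, still $O(n^{1/r})$, so the statement is unaffected.)

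I expect the main obstacle to be bookkeeping rather than any deep argument: one must be careful to place $s$ into the prescribed vertex set $\mathscr{W}\cup\{s\}$, since a cycle required only to pass through $\mathscr{W}$ could in principle avoid $s$ and thereby decouple from Hamiltonicity of the original graph, and one must confirm that the blow-up gadgets cannot be exploited by a simple cycle to satisfy the vertex constraints spuriously. Both points reduce to the facts that the gadget vertices are unmarked and that their bounded degree confines any cycle to the intended behaviour, so I anticipate no difficulty beyond stating these observations precisely.
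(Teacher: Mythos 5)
Your proposal is correct and matches the paper's argument: the paper likewise reuses the two reductions from Corollaries~\ref{thm:npc-special-graphs-extension} and~\ref{thm:npc-special-graphs-extension-1} and observes that on graphs of maximum degree $3$ every edge-disjoint solution is automatically vertex-disjoint, so the \WRP and $k$-Cycle instances coincide. Your write-up merely spells out the walk-degree-$2$ argument and the $\mathscr{W}\cup\{s\}$ bookkeeping that the paper leaves implicit.
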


\section{Conclusion}\label{sec:future}

Motivated by the more general routing models introduced in modern software-defined and function 
virtualized distributed systems, 
we initiated the algorithmic study of computing shortest walks
through waypoints on capacitated networks. We have shown,
perhaps surprisingly, that polynomial-time algorithms exist for a wide range
of problem variants, and in particular for bounded treewidth graphs.

In our dynamic programming approach to the Waypoint Routing Problem, parametrized by treewidth, we provided fixed-parameter tractable (FPT) algorithms for leaf, forget, and introduce nodes, but an XP algorithm for join nodes. 
In fact, while we do not know whether our problem can be expressed
in monadic second-order logic MSO2, we can show that simply concatenating child-walks for join nodes does not result in all valid parent signatures.

We believe that our paper opens an interesting area
for future research. In particular, it will 
be interesting to further chart the complexity landscape of
the Waypoint Routing Problem, narrowing the gap between 
problems for which exact polynomial-time solutions do and
do not exist. Moreover, it would be interesting to derive
a lower bound on the runtime of (deterministic and randomized)
algorithms on bounded treewidth graphs.

\vspace{0.2cm}

\begin{acknowledgements}
The authors would like to thank Riko Jacob for helpful discussions and feedback, as well as the anonymous reviewers of LATIN 2018.
Saeed Akhoondian Amiri's research was partly supported by the European Research
Council (ERC) under the European Union’s Horizon 2020 research and innovation programme
(grant agreement No 648527).
\end{acknowledgements}

\bibliographystyle{spmpsci}      %

\appendix

\section{Deferred Claims and Proofs from Section~\ref{subsec:model}}

The idea for the following Lemma~\ref{lemma:twice} can already be found in~\cite[Fig.~1]{DBLP:conf/soda/KleinM14}: %
\begin{lemma}\label{lemma:twice}
Let $R$ be a shortest walk solution to WRP\xspace. 
Then route $R$ visits every edge at most twice. 
\end{lemma}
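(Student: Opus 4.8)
The plan is to identify a route with the multiset of edges it traverses, and then argue by a parity-preserving local exchange: if some edge is used at least three times, I delete two of its copies to obtain a strictly shorter feasible route, contradicting optimality. Concretely, given a shortest route $R$ solving \WRP, let $E^{\ast}$ denote the multiset of edge-traversals of $R$ (the multiplicity of $e$ being at most $c(e)$), and let $H=(V^{\ast},E^{\ast})$ be the induced multigraph. Since $R$ is a single walk, $H$ is connected; it contains every waypoint and the terminals $s,t$, each of positive degree; and it satisfies the Eulerian parity condition, namely every vertex has even degree except $s$ and $t$, which have odd degree when $s\neq t$ (all degrees even when $s=t$, cf.\ Remark~II and Lemma~\ref{lemma:s=t}).

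The key fact I would use is the converse correspondence supplied by Euler's theorem: any connected multigraph $H'$ whose vertex set contains all waypoints and $s,t$, whose edge multiplicities respect $c(\cdot)$, and which obeys the same parity condition, admits an Eulerian $s$--$t$ trail, i.e.\ a walk traversing each of its edges exactly once. Such a trail is a feasible \WRP{} route whose length equals the total weight of $H'$, and because it covers every edge of the connected $H'$ it visits every vertex of positive degree, in particular all waypoints. Hence minimizing route length is equivalent to minimizing weighted edge-multiplicity over parity-correct connected multigraphs containing the waypoints and terminals.

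With this equivalence in hand the exchange step is immediate. Suppose for contradiction that $R$ traverses an edge $e=uv$ at least three times, so $e$ has multiplicity $t\ge 3$ in $E^{\ast}$. Form $H'$ from $H$ by deleting two copies of $e$. This operation preserves the parity condition, since it decreases $\deg(u)$ and $\deg(v)$ by exactly two each; it preserves connectivity, because at least $t-2\ge 1$ copies of $e$ remain, so $u$ and $v$ stay joined and no vertex becomes isolated; and it keeps all waypoints and terminals present and of positive degree. Thus $H'$ again corresponds, via its Eulerian trail, to a feasible route $R'$ that is shorter than $R$ by exactly $2\,\omega(e)>0$. This contradicts the minimality of $R$, so $t\le 2$.

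The main obstacle — and essentially the only place requiring care — is pinning down the two-way correspondence between routes and parity-correct connected multigraphs, in particular verifying that after deleting the two copies of $e$ the remaining multigraph stays connected and still contains every waypoint, so that its Eulerian trail is genuinely a valid \WRP{} solution rather than merely an Eulerian object. Once this bookkeeping is settled, the parity-preserving deletion does all the work; the same argument in fact shows one may always assume each multiplicity lies in $\{1,2\}$, matching the rerouting picture sketched in~\cite[Fig.~1]{DBLP:conf/soda/KleinM14}.
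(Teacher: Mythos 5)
Your proposal is correct and follows essentially the same argument as the paper: both construct the multigraph of edge traversals, observe the Eulerian parity conditions (odd degree only at $s,t$ when $s\neq t$), delete two copies of any edge used at least three times, and note that the resulting connected, parity-correct multigraph still contains all waypoints and admits a strictly shorter Eulerian $s$--$t$ walk, contradicting optimality. The extra care you take in spelling out the route/multigraph correspondence is fine but adds nothing beyond the paper's reasoning.
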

\begin{proof}
Proof by contradiction.
Construct an edge-weighted multigraph $U$ as follows.
The vertices of $U$ are exactly the vertices of $R$. For every edge $e=\{u,v\}$ 
which appears $x$ times in $R$,
insert edges $e_1,\ldots, e_x$ with endpoints $u,v$ and 
the same weight as $e$, to $U$;
note that there can be parallel edges in $U$. As
$R$ is a walk with start vertex $s$ and end vertex $t$, 
the graph $U$ contains an Eulerian walk which can be obtained 
by following $R$ with respect to the new edges. 
Thus, every vertex in $V(U)\setminus \{s,t\}$ has an even degree 
and $s,t$ have odd degrees. 
For the sake of contradiction, suppose there are two vertices $u,v\in V(U)$ s.t.~there are edges
$e_1,\ldots,e_x$ between them s.t.~ $x > 2$. 
We remove $e_1,e_2$ from $E(U)$ to obtain $U'$. 
The resulting graph is still connected, every vertex except $s,t$ has even degree,
and the graph contains an Eulerian walk $\mathcal{P'}$ which starts at $s$ and ends at $t$. 
But this walk has a smaller total length than $R$, and it also visits all the waypoints. 
This contradicts our assumption that $R$ is a shortest walk.
\end{proof}

\begin{lemma}\label{lemma:s=t}
Consider an instance of WRP\xspace on $G=(V,E)$, with $s \neq t$. By creating a new vertex $v$ connected to both $s,t$, the following two claims hold after setting $v:=s$ and $v:=t$ and creating waypoints on the old placements of $s$ and $t$: $1)$ the treewidth increases at most by one, and $2)$ if and only if there is a shortest solution of length $\ell_1$ in $G$, the shortest solution of the modified WRP\xspace in $G'=(V \cup \left\{v\right\}, E \cup \left\{(s,v),(v,t)\right\}$ is of length $\ell_1+2$.
\end{lemma}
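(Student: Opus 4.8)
The plan is to handle the two claims separately: claim $1)$ by a direct surgery on a tree decomposition, and claim $2)$ by exhibiting a bijection between feasible walks of the two instances that shifts the total length by exactly $2$.

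For claim $1)$, I would start from an optimal tree decomposition $\mathcal{T}=(T,X)$ of $G$ of width $\tw$. Since the only new edges are $(s,v)$ and $(v,t)$, it suffices to insert the single new vertex $v$ into some bags so that the bags containing $v$ form a connected subtree and both new edges are covered. Concretely, I would fix a bag $b_s$ containing $s$ and a bag $b_t$ containing $t$, take the unique path between them in $T$, and add $v$ to every bag along this path. Adding one vertex to each bag raises the width by at most one; the bags now containing $v$ form a connected subpath, so the subtree condition for $v$ holds; and the endpoints $b_s,b_t$ contain $\{s,v\}$ and $\{t,v\}$ respectively, covering the two new edges. All other vertices and edges are unaffected, so the modified decomposition is a valid tree decomposition of $G'$ of width at most $\tw+1$.

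For claim $2)$, I would first argue the forward direction. Given a shortest solution $R$ in $G$, that is, a capacity-respecting walk from $s$ to $t$ through all waypoints of length $\ell_1$, I build $R'$ in $G'$ by prepending the edge $(v,s)$ and appending the edge $(v,t)$, yielding a closed walk based at $v$. This $R'$ starts and ends at the new source-destination $v$, visits the old positions of $s$ and $t$ (now waypoints) as well as all original waypoints, and traverses each unit-capacity new edge exactly once; hence it is a feasible solution of the modified instance of length $\ell_1+2$. For the reverse direction, let $R'$ be any feasible solution of the modified instance, a closed walk based at $v$. The crucial structural observation is that $v$ has degree exactly $2$ in $G'$ with both incident edges of unit capacity, so $R'$ can use each of $(v,s)$ and $(v,t)$ at most once; since $R'$ must leave and re-enter $v$, it uses both exactly once. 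Therefore $R'$ first traverses $(v,s)$, then follows a walk from $s$ to $t$ inside $G$, then traverses $(t,v)$ (up to interchanging the roles of $s$ and $t$); this middle portion is a capacity-respecting walk in $G$ visiting all original waypoints, of length $|R'|-2$. Combining both directions gives a length-shifting bijection, so the minimum length in $G'$ equals $\ell_1+2$ exactly when the minimum length in $G$ equals $\ell_1$, which is the stated ``if and only if''.

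The main obstacle is the reverse direction of claim $2)$: one must rule out that a walk exploits $v$ in an unexpected way, such as revisiting $v$ or using the new edges to shortcut through the waypoints. This is precisely where unit capacity together with $\deg(v)=2$ is essential, as it rigidly forces $R'$ to enter $v$ once, leave once, and never return, so that deleting the two new edges cleanly recovers an $s$--$t$ walk in $G$ with the original capacity constraints intact. A minor point to verify is that the additive $+2$ reflects the two new edges each carrying unit weight, matching the exact $+2$ in the statement.
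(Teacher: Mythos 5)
Your proof is correct and follows essentially the same route as the paper: claim 1 by inserting $v$ into bags of an existing tree decomposition (the paper simply puts $v$ in every bag, you only along the $b_s$--$b_t$ path, which is an immaterial refinement), and claim 2 by the length-shifting correspondence of amending a walk with the two new unit-weight edges and, conversely, stripping them off. Your reverse direction is in fact spelled out more carefully than the paper's terse ``analogously'' (using $\deg(v)=2$ and unit capacity to force each new edge to be used exactly once), but it is the same underlying argument.
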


\begin{proof}
We start with the first claim: By placing $v$ into all bags, the treewidth increases at most by one.
For the second claim, we start with the case that there is a shortest solution of length $\ell_1$ in $G$.
Then, we can amend this route in $G'$ to obtain a solution of length $\ell_1+2$. If a shorter solution were to exist in $G'$, we would also obtain a shorter solution in $G$. The reverse case holds analogously, with special coverage of the case that the original WRP\xspace only contains $s,t$ and no further waypoints: Then, due to the placement of waypoints in $G'$ where $s,t$ were placed in $G$, finding a shortest route of length $\ell_1+2$ in $G'$ is equivalent to finding a shortest route of length $\ell_1$ in $G$ of the original WRP\xspace.
\end{proof}

\end{document}